\numberwithin{equation}{section}
\newtheorem{theorem}{Theorem}[section]
\newtheorem{lemma}{Lemma}[section]
\newtheorem{identity}{Identity}
\newtheorem{prop}{Proposition}[section]
\theoremstyle{remark}
\newtheorem{rem}{\sl Remark}
\newcommand{\vast}{\bBigg@{3}}
\newcommand{\Vast}{\bBigg@{4}}
\newcommand{\tr}{\operatorname{tr}}
\newcommand{\Res}{\operatorname{Res}}
\newcommand{\End}{\operatorname{End}}
\newcommand{\bra}[1]{\langle\,#1\,|}
\newcommand{\ket}[1]{|\,#1\,\rangle}
\newcommand{\moy}[1]{\langle\,#1\,\rangle}
\DeclareMathSymbol{\Alpha}{\mathalpha}{operators}{"41}
\DeclareMathSymbol{\Beta}{\mathalpha}{operators}{"42}
\DeclareMathSymbol{\Epsilon}{\mathalpha}{operators}{"45}
\DeclareMathSymbol{\Zeta}{\mathalpha}{operators}{"5A}
\DeclareMathSymbol{\Eta}{\mathalpha}{operators}{"48}
\DeclareMathSymbol{\Iota}{\mathalpha}{operators}{"49}
\DeclareMathSymbol{\Kappa}{\mathalpha}{operators}{"4B}
\DeclareMathSymbol{\Mu}{\mathalpha}{operators}{"4D}
\DeclareMathSymbol{\Nu}{\mathalpha}{operators}{"4E}
\DeclareMathSymbol{\Omicron}{\mathalpha}{operators}{"4F}
\DeclareMathSymbol{\Rho}{\mathalpha}{operators}{"50}
\DeclareMathSymbol{\Tau}{\mathalpha}{operators}{"54}
\DeclareMathSymbol{\Chi}{\mathalpha}{operators}{"58}
\DeclareMathSymbol{\omicron}{\mathord}{letters}{"6F}
\DeclareMathOperator{\thd}{\vartheta}
\DeclareMathOperator{\ths}{\theta}
\DeclareMathOperator{\thsym}{\theta_{\mathrm s}}
\newcommand{\mathsc}[1]{{\normalfont\textsc{#1}}}
\begin{document}

\begin{flushright}
LPENSL-TH-01/24
\end{flushright}

\bigskip

\bigskip

\begin{center}



\textbf{\Large The open XYZ spin 1/2 chain: Separation of Variables\\ and scalar products  
for boundary fields related by a constraint} \vspace{45pt}
\end{center}

\begin{center}
{\large \textbf{G. Niccoli}\footnote{{Univ Lyon, Ens de Lyon, Univ
Claude Bernard, CNRS, Laboratoire de Physique, F-69342 Lyon, France;
giuliano.niccoli@ens-lyon.fr}} }, {\large \textbf{V. Terras}\footnote{{Université Paris-Saclay, CNRS,  LPTMS, 91405, Orsay, France; veronique.terras@universite-paris-saclay.fr}} }
\end{center}

\begin{center}
\vspace{45pt}

\today
\vspace{45pt}
\end{center}

\begin{abstract}

We consider the open XYZ spin chain with boundary fields. We solve the model by the new Separation of Variables approach introduced in \cite{MaiN19}. In this framework, the transfer matrix eigenstates are obtained as a particular sub-class of the class of so-called separate states. 
We consider the problem of computing scalar products of such separate states. As usual, they can be represented as determinants with rows labelled by the inhomogeneity parameters of the model. 
We notably focus on the special case in which the boundary parameters parametrising the two boundary fields satisfy one constraint, hence enabling for the description of part of the transfer matrix spectrum and eigenstates in terms of some elliptic polynomial $Q$-solution of a usual $TQ$-equation.
In this case,  we show how to transform the aforementioned determinant for the scalar product into some more convenient form for the consideration of the homogeneous and thermodynamic limits: as in the open XXX or XXZ cases, our result can be expressed as some generalisation of the so-called Slavnov determinant.

\end{abstract}

\parbox{12cm}{\small }\newpage

\tableofcontents
\newpage

\section{Introduction}

In this work,  we solve the XYZ open spin chain with arbitrary boundary fields by the new Separation of Variables approach introduced in \cite{MaiN18,MaiN19} and compute the scalar product of separate states. Our aim is to generalise the results obtained in the XXX and XXZ cases \cite{KitMNT16,KitMNT17,KitMNT18} to the open XYZ case, i.e. to obtain convenient scalar product representations for the consideration of the homogeneous and thermodynamic limits.

For the study of quantum integrable models, and in particular for the explicit computation of physical quantities such as 
correlation functions, it is especially important to be able to compute scalar products of some class of  states which includes the eigenstates of the model. For simple models solved by algebraic Bethe Ansatz (ABA), such as the XXZ spin 1/2 chain or the Lieb-Liniger model with periodic boundary conditions, the determinant representation initially obtained by N. Slavnov \cite{Sla89} for the scalar products of Bethe states, one of which being on-shell, has turned to be one of the key-ingredients enabling the computation of form factors and correlation functions \cite{BogIK93L,KitMT99,KitMT00,KitMST02a,KitMST05a,KitMST05b,GohKS04,GohKS05,KitKMST07,KitKMNST07,KitKMNST08,LevT13a,LevT14a,KitKMST09b,KitKMST09c,KitKMST11a,KozT11,Koz11,KitKMST11b,KitKMST12,KitKMT14,DugGK13,DugGK14,GohKKKS17,Koz18a,GriDT19,KozT23}. Its generalisation to more complicated models which, in the (generalised) ABA context, is usually far from being obvious, has therefore been at the centre of an important activity recently \cite{BelPRS12b,BelP15b,BelP16,HutLPRS18,PakRS18,BelSV18,BelS19a,BelS19,SlaZZ20,BelPS21,BelS21,KulS23}. 
In particular, the obtention of such a formula for the XYZ periodic model has remained for decades a particularly challenging problem due to the  combinatorial complexity of the construction of the Bethe states in this model \cite{Bax73a,FadT79}, so that progresses in this respect could have been made only very recently \cite{SlaZZ20,KulS23}. Concerning the XYZ chain with open boundary conditions, an ABA solution was proposed in \cite{FanHSY96} for boundary fields satisfying some constraint, but the computation of the scalar products of Bethe states within this framework still remains an open problem.

For models solved by Separation of Variables (SoV) \cite{Skl85,Skl90,Skl92,Skl95,BabBS96,Smi98a,DerKM01,DerKM03,DerKM03b,BytT06,vonGIPS06,FraSW08,AmiFOW10,NicT10,Nic10a,Nic11,FraGSW11,GroMN12,GroN12,Nic12,Nic13,Nic13a,Nic13b,GroMN14,FalN14,FalKN14,NicT15,LevNT16,NicT16,JiaKKS16,GroLS17,MaiNP17,MaiN18,RyaV19,MaiN19,MaiN19b,MaiN19c}, the natural class of states generalising the eigenstates are the so-called separate states, which admit factorised wave-functions. It happens that, by construction, the scalar products of such separate states can be expressed as simple determinants, at least for models with a rank one underlying algebra \cite{GroMN12,Nic12,Nic13,Nic13a,Nic13b,GroMN14,FalN14,FalKN14,LevNT16,MaiNP17} (see also \cite{MaiNV20b,CavGL19,GroLRV20} for higher rank generalisations). However, the form of these determinants is a priori very different from its original ABA counterpart \cite{Sla89} and seems less adapted to the computation of physical quantities such as correlation functions. In fact, it usually depends in an intricate way on the (unphysical) inhomogeneity parameters of the model, which makes it difficult to consider the homogeneous and thermodynamic limits. 

The problem of a reformulation of these determinants in a more suitable form for the homogeneous and thermodynamic limits has already been considered for XXX and XXZ spin chains with various types of boundary conditions \cite{KitMNT16,KitMNT17,KitMNT18,PeiT21}. In the later works, a connection with Slavnov-type determinants has been observed for scalar products of certain classes of separate states, which opened the way to  the computation of the model correlation functions within the SoV framework \cite{NicPT20,Pei20,Nic21,NicT22,NicT23}. 
In particular, the scalar product formulas obtained in \cite{KitMNT18} for the open XXZ chain with non-diagonal boundary conditions enabled us recently to explicitly compute, in the half-infinite chain limit, a class of density matrix elements for some non-parallel boundary fields \cite{NicT22,NicT23}. In \cite{NicT23} we notably focussed on the interesting case for which the (in general non-longitudinal) boundary fields are related by a constraint so that the spectrum can still be partially described by usual Bethe equations \cite{Nep04,NepR03,NepR04}.
In the present work, we consider the open XYZ chain with boundary fields, our aim being to generalise the results of \cite{KitMNT18} to this model, so as to open the way to the computation of the correlation functions as in \cite{NicT23}. As in \cite{NicT23}, we shall pay a special attention to the case for which the boundary fields are related by a constraint, enabling one to partially describe the spectrum in terms of usual $TQ$ and Bethe equations \cite{FanHSY96,YanZ06}.

The content of the article is the following. In section~\ref{sec-refl-alg}, we introduce the algebraic aspects of the model.
In section~\ref{sec-new-SoV}, we present its solution by Separation of Variables. We recall that Sklyanin's version of the SoV approach for the open XYZ spin chain was developed in \cite{FalN14}, based on the use of a vertex-IRF transformation \cite{Bax73a}. Here, we choose instead 
to present the solution of the model via a generalisation to the open XYZ chain of the new SoV approach proposed in \cite{MaiN18,MaiN19} for the XXX and XXZ models, as an alternative to the approach of \cite{FalN14}. The use of the vertex-IRF transformation is not necessary within this new approach, which makes the formal solution presented in section~\ref{sec-new-SoV} much lighter than the one of \cite{FalN14}.
In section~\ref{sec-SP}, we present our main result, concerning determinant representations for the scalar products of separate states. We focus here on the case in which the boundary parameters are related by a constraint (meaning that there are only 5 independent boundary parameters, the last one being fixed in terms of the others by the constraint), and on the scalar products of separate states which can be described by an elliptic polynomial  $Q$-function compatible with this constraint. In that case, we can reformulate the scalar products obtained by SoV as some generalised Slavnov determinants. The details of the computations leading to this result are then presented in section~\ref{sec-SPcomp}. It is worth mentioning here that the method we use in this elliptic case is not a direct generalisation of the one used for the open XXX and XXZ chains \cite{KitMNT17,KitMNT18}. Instead, it is freely inspired from the one used in \cite{PeiT21}, however leading here to much simpler results.

\section{The reflection algebra for the XYZ open spin chain}
\label{sec-refl-alg}

Let us first introduce some notations.

For a given imaginary parameter $\omega$ ($\Im\omega >0$), the functions $\theta_j(\lambda|k\omega)$, $j=1,2,3,4$, $k=1,2$, denote the usual theta functions \cite{GraR07L} with quasi-periods $\pi$ and $k\pi\omega$.
In the following, we shall use shortcut notations for the theta functions with simple and double imaginary quasi-period, and write $\ths_j(\lambda)\equiv\theta_j(\lambda|\omega)$, $\thd_j(\lambda)\equiv\theta_j(\lambda|2\omega)$.
We shall also simply write $\ths(\lambda)\equiv\ths_1(\lambda)$, and $\thsym(\lambda,\mu)=\ths(\lambda-\mu)\ths(\lambda+\mu)$.

We introduce the eight-vertex R-matrix $R(\lambda )\in\End(\mathbb{C}^2\otimes\mathbb{C}^2)$,
\begin{align}\label{R-mat}
R(\lambda )
&=
\begin{pmatrix}
  \mathrm{a}(\lambda) & 0 & 0 & \mathrm{d}(\lambda ) \\ 
0 & \mathrm{b}(\lambda ) & \mathrm{c}(\lambda ) & 0 \\ 
0 & \mathrm{c}(\lambda ) & \mathrm{b}(\lambda ) & 0 \\ 
\mathrm{d}(\lambda ) & 0 & 0 & \mathrm{a}(\lambda )
\end{pmatrix},
\end{align}
which corresponds to the elliptic solution of the Yang-Baxter equation (on $\mathbb{C}^2\otimes\mathbb{C}^2\otimes\mathbb{C}^2$):
\begin{equation}\label{YB}
   R_{12}(\lambda_1-\lambda_2)\, R_{13}(\lambda_1-\lambda_3)\,R_{23}(\lambda_2-\lambda_3)
   =R_{23}(\lambda_2-\lambda_3)\, R_{13}(\lambda_1-\lambda_3)\, R_{12}(\lambda_1-\lambda_2).
\end{equation}
Here  $\mathrm{a}(\lambda)$, $\mathrm{b}(\lambda)$, $\mathrm{c}(\lambda)$, $\mathrm{d}(\lambda)$ are given as the following functions of the spectral parameter $\lambda$:
\begin{alignat}{2}
 &\mathrm{a}(\lambda )
  =\frac{2\thd_4(\eta)\, \thd_1(\lambda+\eta )\,\thd_4(\lambda  )}{\ths_2(0)\, \thd_4(0)}, 
  & \qquad
 &\mathrm{b}(\lambda )
  =\frac{2\thd_4(\eta )\, \thd_1(\lambda)\, \thd_4(\lambda +\eta  )}  {\ths_2(0)\, \thd_4(0)}, 
            \label{a-b-R} \\
 &\mathrm{c}(\lambda ) 
  =\frac{2\thd_1(\eta  )\, \thd_4(\lambda )\, \thd_4(\lambda +\eta  )} {\ths_2(0)\, \thd_4(0)}, 
  &
  &\mathrm{d}(\lambda )
   =\frac{2\thd_1(\eta )\,\thd_1(\lambda +\eta )\, \thd_1(\lambda  )} {\ths_2(0 )\, \thd_{4}(0)}.
              \label{c-d-R}
\end{alignat}
They also depend, in addition to the spectral parameter $\lambda$ and to the elliptic quasi-period $\omega$, on a parameter $\eta\in\mathbb{C}$ which corresponds to the crossing parameter of the model. 
We shall suppose here that this parameter $\eta$ is generic, or at least that $\mathbb{Z}\eta\cap (\mathbb{Z}\pi+\mathbb{Z}\pi\omega)=\emptyset$. 
In \eqref{YB} and in the following, the indices denote as usual on which space(s) of the tensor product  the corresponding operator acts.
The R-matrix \eqref{R-mat} also satisfies the following unitary and crossing symmetry relations on $\mathbb{C}^2\otimes \mathbb{C}^2$:
\begin{align}
   &R_{21}(-\lambda)\, R_{12}(\lambda) = -\thsym(\lambda,\eta)\, \, \mathbb{I}_{12},\\
   &R_{12}(\lambda)\, \sigma_1^y\, \big[ R_{12}(\lambda-\eta)\big]^{t_1} \sigma_1^y =\thsym(\lambda,\eta)\, \, \mathbb{I}_{12}, 
\end{align}
as well as the identities
\begin{align}
   &R_{12}(\lambda +\pi )=-\sigma_{1}^{z}\, R_{12}(\lambda )\,\sigma _{1}^z,
   \qquad
   R_{12}(\lambda +\pi \omega )
   =-e^{-i( 2\lambda +\pi\omega+\eta) }\, \sigma _{1}^{x}\, R_{12}(\lambda)\, \sigma _{1}^{x}.
   \label{quasiper-R}
\end{align}

We also introduce the following boundary K-matrix,
\begin{equation}\label{mat-K}
  K(\lambda)\equiv K(\lambda ;\alpha_1,\alpha_2,\alpha_3)=\frac{\ths_1(2\lambda)}{2\ths_1(\lambda)}
   \left[ \mathbb{I}+c^x\frac{\ths_1(\lambda)}{\ths_4(\lambda)}\, \sigma^x+i c^y\frac{\ths_1(\lambda)}{\ths_3(\lambda)}\, \sigma^y+c^z\frac{\ths_1(\lambda)}{\ths_2(\lambda)}\, \sigma^z\right],
\end{equation}
with coefficients $c^x, c^y, c^z$ given in terms of three boundary parameters $\alpha_1, \alpha_2, \alpha_3$ as
\begin{align}\label{coeff-K}
  &c^x=\prod_{\ell=1}^3\frac{\ths_4(\alpha_\ell)}{\ths_1(\alpha_\ell)},
  \qquad
  c^y=-\prod_{\ell=1}^3\frac{\ths_3(\alpha_\ell)}{\ths_1(\alpha_\ell)},\qquad
  c^z=\prod_{\ell=1}^3\frac{\ths_2(\alpha_\ell)}{\ths_1(\alpha_\ell)}.
\end{align}
As found in \cite{InaK94,HouSFY95} (we use here for convenience a parametrization close to the one of \cite{YanZ06}), this K-matrix corresponds to the scalar solution of the reflection equation \cite{Che84} with the R-matrix \eqref{R-mat}:
\begin{equation}\label{refl-K}
   R_{21}(\lambda-\mu)\, K_1(\lambda)\, R_{12}(\lambda+\mu)\, K_2(\mu) 
   = K_2(\mu)\, R_{21}(\lambda+\mu)\, K_1(\lambda)\, R_{12}(\lambda-\mu).
\end{equation}
It also satisfies the properties
\begin{align}\label{inv-K}
   &K(\lambda)\, K(-\lambda)= \prod_{\ell=1}^3\frac{\thsym(\alpha_\ell,\lambda)} 
   {\ths^2(\alpha_\ell)}\,\mathbb{I},
   \\
   &K(\lambda+\pi)=-\sigma^z\,K(\lambda)\,\sigma^z,\qquad
   K(\lambda+\pi\omega)=(-e^{-2i\lambda-i\pi\omega})^3\,\sigma^x\, K(\lambda)\, \sigma^x. \label{quasiper-K}
\end{align}
To take into account the boundary conditions at both ends of the spin chain, we in fact introduce two boundary matrices $K_\pm$, each of them being associated with three boundary parameters $\alpha_1^\pm, \alpha_2^\pm,\alpha_3^\pm$, defining coefficients $c_\pm^x,c_\pm^y,c_\pm^z$ as in \eqref{coeff-K}. They are defined as
\begin{equation}\label{Kpm}
  K_-(\lambda)=K(\lambda-\eta/2;\alpha_1^-,\alpha_2^-,\alpha_3^-),
  \qquad
  K_+(\lambda)=K(\lambda+\eta/2;\alpha_1^+,\alpha_2^+,\alpha_3^+).
\end{equation}
From the R-matrix \eqref{R-mat} and the boundary matrix $K_-$ one can construct, following the method initially proposed in \cite{Skl88}, a boundary monodromy matrix $\mathcal{U}_-(\lambda)$ as
\begin{equation}\label{def-U-}
  \mathcal{U}_-(\lambda)=T(\lambda)\, K_-(\lambda)\, \hat T(\lambda),
\end{equation}
in which $T(\lambda)$ is the bulk monodromy matrix defined as
\begin{equation}\label{mat-T}
  T(\lambda)\equiv T_0(\lambda)=R_{01}(\lambda-\xi_1-\eta/2)\ldots R_{0N}(\lambda-\xi_N-\eta/2),
\end{equation}
and
\begin{equation}\label{mat-That}
   \hat T(\lambda)\equiv \hat T_0(\lambda)
   =(-1)^N \sigma_0^y\, T_0^{t_0}(-\lambda)\, \sigma_0^y
   =R_{0N}(\lambda+\xi_N-\eta/2)\ldots R_{01}(\lambda+\xi_1-\eta/2).
\end{equation}
In these definitions, $\xi_1,\ldots,\xi_N$ are arbitrary inhomogeneity parameters.
The boundary monodromy matrix $\mathcal{U}_-(\lambda)$, as well as the bulk monodromy matrix $T(\lambda)$, act on the tensor product $\mathcal{H}_0\otimes\mathcal{H}$, where $\mathcal{H}_0\simeq\mathbb{C}^2$ is the so-called auxiliary space, whereas $\mathcal{H}=\otimes_{n=1}^N\mathcal{H}_n$, with $\mathcal{H}_n\simeq\mathbb{C}^2$, is the quantum space of the model. In other words, $\mathcal{U}_-(\lambda)$ and $T(\lambda)$ are to be understood as $2\times 2$ matrices (as written in the auxiliary space) with operators entries in $\mathcal{H}$, and the products in \eqref{def-U-}, \eqref{mat-T} and \eqref{mat-That} are to be understood as matrix multiplication in $\mathcal{H}_0$. In particular, we shall denote
\begin{equation}\label{mat-U-}
  \mathcal{U}_-(\lambda)=\begin{pmatrix} \mathcal{A}_-(\lambda) & \mathcal{B}_-(\lambda) \\ \mathcal{C}_-(\lambda) & \mathcal{D}_-(\lambda) \end{pmatrix}.
\end{equation}
The boundary monodromy matrix $\mathcal{U}_-(\lambda)$ satisfies the reflection equation,
\begin{equation}\label{refl-eq}
   R_{21}(\lambda-\mu)\, \mathcal{U}_{-,1}(\lambda)\, R_{12}(\lambda+\mu-\eta)\, \mathcal{U}_{-,2}(\mu) 
   = \mathcal{U}_{-,2}(\mu)\, R_{21}(\lambda+\mu-\eta)\, \mathcal{U}_{-,1}(\lambda)\, R_{12}(\lambda-\mu),
\end{equation}
as well as the inversion relation 
\begin{equation}\label{inv-U-}
   \mathcal{U}_-^{-1}(\lambda+\eta/2)
   =\frac{\ths(2\lambda-2\eta)}{\det_q\mathcal{U}_-(\lambda)}\,\mathcal{U}_-(-\lambda+\eta/2)
   =\frac{\widetilde{\mathcal U}_-(\lambda-\eta/2)}{ \det_q \mathcal{U}_-(\lambda)}.
\end{equation}
Here, $\widetilde{\mathcal U}_-$ denotes the 'algebraic adjunct' of $\mathcal{U}_-$, defined  by
\begin{align}\label{adjU-}
   \widetilde{\mathcal U}_-(\lambda)
    &=
    \begin{pmatrix}
     -\mathrm{c}(2\lambda)\,\mathcal{A}_-(\lambda)+\mathrm{b}(2\lambda)\,\mathcal{D}_-(\lambda) &
     -\mathrm{a}(2\lambda)\,\mathcal{B}_-(\lambda)+\mathrm{d}(2\lambda)\,\mathcal{C}_-(\lambda) \\
      -\mathrm{a}(2\lambda)\,\mathcal{C}_-(\lambda)+\mathrm{d}(2\lambda)\,\mathcal{B}_-(\lambda)   &
      -\mathrm{c}(2\lambda)\,\mathcal{D}_-(\lambda)+\mathrm{b}(2\lambda)\,\mathcal{A}_-(\lambda)
   \end{pmatrix},
\end{align}
and $\det_q \mathcal{U}_-$ the quantum determinant, which is a central element of the reflection algebra, and which is given by
\begin{equation}
    {\det}_q\,\mathcal{U}_-(\lambda)={\det}_q T(\lambda)\,{\det}_q T(-\lambda)\,{\det}_qK_-(\lambda),
\end{equation}
where 
\begin{align}\label{qdet-T}
  &{\det}_qT(\lambda) 
 =a(\lambda+\eta/2)\, d(\lambda-\eta/2), \\
 &a(\lambda)=\prod_{n=1}^N\theta(\lambda-\xi_n+\eta/2),
  \qquad
  d(\lambda)=a(\lambda-\eta)=\prod_{n=1}^N\theta(\lambda-\xi_n-\eta/2),
  \label{a-d}
\end{align}
and
\begin{equation}\label{qdet-K}
  {\det}_q K_\pm(\lambda)=\ths(2\eta\pm 2\lambda)\, \prod_{\ell=1}^3 \frac{\thsym(\lambda,\alpha_\ell^\pm)}
  {\ths^2(\alpha_\ell^\pm)}.
\end{equation}

Still following \cite{Skl88}, let us finally introduce the transfer matrix
\begin{equation}\label{transfer}
  \mathcal{T}(\lambda)
  =\tr\left[ K_+(\lambda)\, T(\lambda)\, K_-(\lambda)\,\hat T(\lambda)\right]
  =\tr\left[ K_+(\lambda)\,\mathcal{U}_-(\lambda)\right] ,
\end{equation}
which defines a one-parameter family of commuting operators on $\mathcal{H}$:
\begin{equation}
   [ \mathcal{T}(\lambda),\mathcal{T}(\mu)]=0,\quad \forall \lambda,\mu.
\end{equation}
%
%
%
The Hamiltonian of the open XYZ spin 1/2 chain, which can be obtained in terms of a logarithmic derivative of the transfer matrix \eqref{transfer} in the homogeneous limit ($\xi_n=0$, $n=1,\ldots,N$) and at $\lambda=\frac\eta 2$, is given by
\begin{equation}\label{Ham}
   H=
   \sum_{a\in\{x,y,z\}}\!\Bigg[\sum_{n=1}^N  J_a \sigma_n^a\sigma_{n+1}^a +h_+^a\sigma_1^a+h_-^a\sigma_N^a\Bigg],
\end{equation}
in which the coupling constants $J_{x,y,z}$ and the components $h_\pm^{x,y,z}$ of the boundary fields are parameterised as\footnote{We have exchanged $h_+$ and $h_-$ with respect to standard notations, instead of exchanging '+' and '-' in the K-matrices as done in our previous open XXZ papers \cite{NicT22} and \cite{NicT23}.}
\begin{alignat}{2}
   &J_x=\frac{\ths_4(\eta)}{\ths_4(0)}, \qquad\qquad
   &&h_\pm^x=c^x_\pm \frac{\ths_1(\eta)}{\ths_4(0)}=\frac{\ths_1(\eta)}{\ths_4(0)}\prod_{\ell=1}^3\frac{\ths_4(\alpha_\ell^\pm)}{\ths_1(\alpha_\ell^\pm)},
   \\ 
   &J_y=\frac{\ths_3(\eta)}{\ths_3(0)},  \qquad
   &&h_\pm^y=i c^y_\pm \frac{\ths_1(\eta)}{\ths_3(0)}=-i \frac{\ths_1(\eta)}{\ths_3(0)}\prod_{\ell=1}^3\frac{\ths_3(\alpha_\ell^\pm)}{\ths_1(\alpha_\ell^\pm)},
   \\
   &J_z=\frac{\ths_2(\eta)}{\ths_2(0)},  \qquad
   &&h_\pm^z=c^z_\pm \frac{\ths_1(\eta)}{\ths_2(0)}=\frac{\ths_1(\eta)}{\ths_2(0)}\prod_{\ell=1}^3\frac{\ths_2(\alpha_\ell^\pm)}{\ths_1(\alpha_\ell^\pm)}.
\end{alignat}
%

\section{Diagonalisation of the transfer matrix by SoV}
\label{sec-new-SoV}

In this section, we briefly explain how the new SoV approach, proposed in \cite{MaiN19} in the open XXX and XXZ cases, can be adapted to characterize the transfer matrix spectrum and eigenstates for the inhomogeneous deformation of the open XYZ model considered here. 

\subsection{Transfer matrix properties}

Let us start by recalling some elementary properties of the transfer matrix, which can be easily deduced from the properties of the R-matrix and of the boundary K-matrices. For convenience, we introduce the following notations:
\begin{align}
   &\xi_n^{(h)}=\xi_n+\frac\eta 2-h\eta, \qquad h\in\{0,1\},\quad n\in\{1,\ldots,N\}, \label{def-xi-shift}\\
   &\xi_{0}^{(0)}=\frac\eta 2, \qquad \xi_{-1}^{(0)}=\frac \eta 2 +\frac\pi 2,\qquad \xi_{-2}^{(0)}=\frac \eta 2 +\frac{\pi\omega} 2,\qquad \xi_{-3}^{(0)}=\frac \eta 2 +\frac{\pi+\pi\omega} 2. \label{special_points}
\end{align}

\begin{prop}
The transfer matrix $\mathcal{T}(\lambda )$ is an even elliptic polynomial\footnote{\label{footnote-ell-pol}
We say that a function $f$ is an even elliptic polynomial of order (or degree) $2n$  if it is an even function which in addition satisfies the quasi-periodicity properties:
\begin{equation}
   f(\lambda+\pi)=f(\lambda),\qquad f(\lambda+\pi\omega)=\left( - e^{-2i\lambda-i\pi\omega }\right) ^{2n} f(\lambda).
\end{equation}
It means that there exist constants $C,\lambda_1,\ldots,\lambda_n$ such that
\begin{equation}
   f(\lambda)=C\prod_{j=1}^n\thsym(\lambda,\lambda_j).
\end{equation}
} in $\lambda $ of
order $2N+6$, i.e. it satisfies the following parity and quasi-periodicity properties in $\lambda $ w.r.t. the periods $\pi $ and $\pi \omega $:
\begin{equation}\label{period-transfer}
          \mathcal{T}(-\lambda)= \mathcal{T}(\lambda),
          \qquad
          \mathcal{T}(\lambda +\pi )=\mathcal{T}(\lambda ),
          \qquad
          \mathcal{T}(\lambda+\pi \omega )=\left( - e^{-2i\lambda-i\pi\omega }\right) ^{2N+6}\mathcal{T}(\lambda ).
\end{equation}
Moreover, it satisfies the following quantum determinant property
\begin{equation}\label{detq-transfer}
  \mathcal{T}(\xi_n+\eta/2)\,\mathcal{T}(\xi_n-\eta/2)=
  \frac{\det_q K_+(\xi_n)\,\det_q\mathcal{U}_-(\xi_n)}{\theta (\eta +2\xi_n)\,\theta (\eta -2\xi_n)},
\end{equation}
and its value at the special points \eqref{special_points} is central and can be evaluated as
\begin{alignat}{2}
   &\mathcal{T} (\xi_{0}^{(0)}) 
   = \tau_0\, \mathbb{I} & &\text{with }\, \tau_0=(-1)^N\frac{\ths(2\eta)}{\ths(\eta)}\, {\det}_q T(0), \\
   &\mathcal{T} (\xi_{-1}^{(0)}) 
   = \tau_{-1} \, \mathbb{I} & &\text{with }\, \tau_{-1}=c^z_-\,c^z_+\,\frac{\ths(2\eta)}{\ths(\eta)}\, {\det}_q T\Big(\frac\pi 2\Big), \\
   &\mathcal{T}(\xi_{-2}^{(0)}) 
   = \tau_{-2} \, \mathbb{I} & &\text{with }\, \tau_{-2}= - e^{i\big(2\sum_{n=1}^N\xi_n-(N+3)\eta-\frac{3\pi\omega}2\big)}\,c^x_-\,  c_+^x\, \frac{\ths(2\eta)}{\ths(\eta)}\,  {\det}_q T\Big(\! -\! \frac{\pi\omega}2\Big) , \\
   &\mathcal{T}(\xi_{-3}^{(0)}) 
   = \tau_{-3} \, \mathbb{I}\ \ & &\text{with }\, \tau_{-3}= e^{i\big(2\sum_{n=1}^N\xi_n-(N+3)\eta-\frac{3\pi\omega}2\big)}\,c^y_-\,  c_+^y\, \frac{\ths(2\eta)}{\ths(\eta)}\,  {\det}_q T\Big(\! -\! \frac{\pi+\pi\omega}2\Big) .
\end{alignat}
Then, under the condition
\begin{equation}\label{cond-inhom-0}
   \epsilon_n \,\xi_n^{(0)},\ -3\le n\le N,\ \epsilon_n\in\{-1,1\},\quad \text{are pairwise distinct modulo }(\pi,\pi\omega),
\end{equation}
we can write the following interpolation formula for the transfer matrix, 
\begin{equation}\label{interpol-T}
   \mathcal{T}(\lambda)
   =\mathsf{T}_0(\lambda)\, \mathbb{I}+\sum_{n=1}^N r_n(\lambda)
   \mathcal{T}(\xi_n^{(0)}),
\end{equation}
in which we have defined the functions
\begin{align}
    &r_n(\lambda)\equiv r_n(\lambda|\xi_1,\ldots,\xi_N)=\prod_{\substack{k=-3 \\ k\not= n}}^N \frac{\thsym(\lambda,\xi_k^{(0)})}{\thsym(\xi_n^{(0)},\xi_k^{(0)})},
    \qquad -3\le n\le N,
    \label{r_n}\\
   &\mathsf{T}_0(\lambda)=\sum_{n=0}^3 r_{-n}(\lambda)\, 
   \tau_{-n}. \label{T_0}
\end{align}
\end{prop}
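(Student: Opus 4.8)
The plan is to establish each assertion of the proposition separately, starting from the structural properties of the building blocks and then exploiting interpolation.

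\textbf{Parity and quasi-periodicity.} First I would prove \eqref{period-transfer}. The invariance $\mathcal{T}(\lambda+\pi)=\mathcal{T}(\lambda)$ follows by inserting the identities \eqref{quasiper-R} and \eqref{quasiper-K} into $\mathcal{T}(\lambda)=\tr[K_+(\lambda)\,T(\lambda)\,K_-(\lambda)\,\hat T(\lambda)]$: each shift by $\pi$ produces a $\sigma^z_0$ on the auxiliary space from every R-matrix factor and from the two K-matrices; since there are $N$ R-factors in $T$, $N$ in $\hat T$, and the K-matrices contribute the parities in \eqref{quasiper-K}, the conjugating $\sigma^z_0$'s cancel inside the trace (an even total number of them) and the scalar prefactors telescope to $1$. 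The same computation with the $\pi\omega$-shift identities in \eqref{quasiper-R}, \eqref{quasiper-K} produces conjugation by $\sigma^x_0$ and an accumulated scalar factor; counting the exponents $-i(2\lambda+\pi\omega+\eta)$ from the $2N$ R-matrices (with their shifted arguments) together with the $(-e^{-2i\lambda-i\pi\omega})^3$ from each of $K_\pm$ gives exactly $(-e^{-2i\lambda-i\pi\omega})^{2N+6}$ after the arguments in $T,\hat T,K_\pm$ are accounted for; the $\sigma^x_0$'s again cancel in the trace. Parity $\mathcal{T}(-\lambda)=\mathcal{T}(\lambda)$ is the standard crossing argument: using $\hat T(\lambda)=(-1)^N\sigma_0^y T_0^{t_0}(-\lambda)\sigma_0^y$, the crossing relation for $R$, the crossing property of $K$ relating $K(\lambda)$ and $K(-\lambda)$, and cyclicity/transpose-invariance of the trace. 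Entrywise analyticity of $R$, hence of $\mathcal{T}$, upgrades these to the statement ``even elliptic polynomial of order $2N+6$'' via footnote~\ref{footnote-ell-pol}.

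\textbf{Quantum determinant and central values.} For \eqref{detq-transfer}, I would use the inversion relation \eqref{inv-U-} for $\mathcal{U}_-$ together with the fusion/quantum-determinant identity that the product $\mathcal{T}(\xi_n+\eta/2)\mathcal{T}(\xi_n-\eta/2)$ collapses: at $\lambda=\xi_n\pm\eta/2$ the monodromy matrix $T(\lambda)$ degenerates (one R-factor becomes a rank-one projector, $R_{0n}$ at the relevant shifted point), which is the usual mechanism forcing $\mathcal{U}_-$ to become (anti)triangular in the auxiliary space at these points; combining the two relations and using ${\det}_q\mathcal{U}_-(\xi_n)={\det}_qT(\xi_n)\,{\det}_qT(-\xi_n)\,{\det}_qK_-(\xi_n)$ yields the stated ratio. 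For the four central values at the special points \eqref{special_points}, the key fact is that $\lambda\in\{\eta/2,\,\eta/2+\pi/2,\,\eta/2+\pi\omega/2,\,\eta/2+(\pi+\pi\omega)/2\}$ are exactly the points where $\hat T(\lambda)$ and $T(\lambda)$ are proportional — indeed at $\lambda=\eta/2$ one has $T(\eta/2)\hat T(\eta/2)\propto {\det}_qT(0)\,\mathbb{I}$ (or more precisely $\mathcal{U}_-$ becomes scalar up to the quantum determinant), while the half-period shifts are reduced to this case using \eqref{quasiper-R}: shifting by $\pi/2$, $\pi\omega/2$, or $(\pi+\pi\omega)/2$ conjugates $T$ by $\sigma^z$, $\sigma^x$, $\sigma^y$ respectively and likewise for $K_\pm$ through \eqref{quasiper-K}. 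Tracking the scalar prefactors (the exponentials with $\sum_n\xi_n$, the powers of $c^{x,y,z}_\pm$, and the $\ths(2\eta)/\ths(\eta)$ coming from $\hat T$ evaluated at the special point via the unitarity relation) gives $\tau_0,\tau_{-1},\tau_{-2},\tau_{-3}$ as written. That these values are central (proportional to $\mathbb{I}$) is immediate once $\mathcal{U}_-$ is shown to be scalar at those points.

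\textbf{Interpolation formula.} Finally, \eqref{interpol-T} is a pure Lagrange-interpolation statement: an even elliptic polynomial of order $2N+6$ (periods $\pi,\pi\omega$) is, by footnote~\ref{footnote-ell-pol}, determined by its values at $N+4$ points in general position modulo $(\pi,\pi\omega)$ — the points $\xi_{-3}^{(0)},\dots,\xi_{N}^{(0)}$ under the separation hypothesis \eqref{cond-inhom-0}. One writes the standard elliptic Lagrange basis $r_n(\lambda)=\prod_{k\neq n}\thsym(\lambda,\xi_k^{(0)})/\thsym(\xi_n^{(0)},\xi_k^{(0)})$, which indeed has the right order and the interpolation property $r_n(\xi_m^{(0)})=\delta_{nm}$ (using that $\thsym$ is even and $\pi,\pi\omega$ suitably quasi-periodic, and that the denominators are nonzero by \eqref{cond-inhom-0}); then $\mathcal{T}(\lambda)=\sum_{n=-3}^N r_n(\lambda)\,\mathcal{T}(\xi_n^{(0)})$, and substituting the just-computed central values at $n=-3,\dots,0$ and collecting them into $\mathsf{T}_0(\lambda)=\sum_{n=0}^3 r_{-n}(\lambda)\tau_{-n}$ gives exactly \eqref{interpol-T}.

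The main obstacle is the bookkeeping in the quasi-periodicity and special-point computations: keeping precise track of the scalar prefactors — the accumulated exponentials $e^{i(2\sum_n\xi_n-(N+3)\eta-\cdots)}$, the signs, the powers of $c^{x,y,z}_\pm$ arising from the three-fold products in \eqref{quasiper-K}, and the $\ths(2\eta)/\ths(\eta)$ factors from unitarity — requires care, even though each individual step is mechanical. The structural content (cancellation of $\sigma_0^{x,z}$ in the trace, degeneration of the monodromy matrix at the special points, and the dimension count for interpolation) is routine.
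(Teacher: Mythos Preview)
Your proposal is correct and follows exactly the approach the paper itself indicates: the paper does not spell out a proof of this proposition, stating only that these are ``elementary properties of the transfer matrix, which can be easily deduced from the properties of the R-matrix and of the boundary K-matrices.'' Your outline fills in precisely those deductions---quasi-periodicity from \eqref{quasiper-R}--\eqref{quasiper-K}, the quantum-determinant identity via fusion/inversion \eqref{inv-U-}, centrality at the half-period points by reduction to $\lambda=\eta/2$ through the $\sigma^{x,y,z}$ conjugations, and the elliptic Lagrange interpolation---and your assessment that the only real labour is the scalar bookkeeping is accurate.
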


\subsection{SoV basis}

Let us suppose that the inhomogeneity parameters are generic, or at least  that they satisfy the condition 
\begin{equation}\label{cond-inhom}
 \epsilon_n \,\xi_n^{(h_n)},\ 1\le n\le N,\ h_n\in\{0,1\},\ \epsilon_n\in\{-1,1\},\quad \text{are pairwise distinct modulo }(\pi,\pi\omega),
\end{equation}
and that the boundary matrices $K_+(\lambda)$ and $K_-(\lambda)$ are not both proportional to the identity.
Let $\mathbf{A}$ be any function of the spectral parameter $\lambda$ satisfying the relation
\begin{equation}\label{def-A}
   \mathbf{A}(\lambda+\eta/2)\,\mathbf{A}(-\lambda+\eta/2)=\frac{\det_q K_+(\lambda)\, \det_q\mathcal{U}_-(\lambda)}{\ths(\eta+2\lambda)\,\ths(\eta-2\lambda)}.
\end{equation}
Then, one can show similarly as in \cite{MaiN19} that, for almost any choice\footnote{In \cite{MaiN19}, the determinant of the matrix formed by the coordinates of the covectors (3.17) in the natural basis is computed and proven nonzero for some special choice of the inhomogeneities and the coordinates of the convector $\bra{S}$. Being this determinant a nonzero multivariable polynomial in these parameters, it can be zero only on the codimension 1 hypersurface of its zeros. This means that these covectors form a basis out of this zero hypersurface, that is for almost any value of the inhomogeneities and choice of $\bra{S}$.} of the co-vector $\bra{S}$ and of the inhomogeneity parameters satisfying \eqref{cond-inhom}, the states
\begin{equation}\label{l-SoV-states}
  \bra{\mathbf{h}}\equiv \bra{S}\prod_{n=1}^N\left(\frac{\mathcal{T}(\xi_n-\frac \eta 2)}{\mathbf{A}(\frac\eta 2-\xi_n)}\right)^{\! 1-h_n},
  \qquad
  \mathbf{h}\equiv(h_1,\ldots,h_N)\in\{0,1\}^N,
\end{equation}
form a basis of $\mathcal{H}^*$.
Let $\ket{R}\in\mathcal{H}$ be the unique vector such that
\begin{equation}
   \moy{\mathbf{h}\,|\, R}=\delta_{\mathbf{h},\mathbf{0}} \, \frac{\mathcal{N}_0}{V(\xi_1^{(0)},\ldots,\xi_N^{(0)})\,V(\xi_1,\ldots,\xi_N)},
\end{equation}
for some given normalisation coefficient $\mathcal{N}_0$.
Here we have used the shortcut notation \eqref{def-xi-shift} for the shifted inhomogeneity parameters,
and have defined, for any set of variables $\zeta_1,\ldots,\zeta_N$, the quantity,
\begin{align}
  V(\zeta_1,\ldots,\zeta_N)
  &=\prod_{1\le i< j\le N}\thsym(\zeta_j,\zeta_i).  
    \label{VDM}
\end{align}
Then the states 
\begin{equation}\label{r-SoV-states}
  \ket{\mathbf{h}}=\prod_{n=1}^N\left(\frac{\ths(2\xi_n-2\eta)}{\ths(2\xi_n+2\eta)} 
  \frac{\mathcal{T}(\xi_n+\frac \eta 2)}{\mathbf{A}(\frac \eta 2-\xi_n)}\right)^{\! h_n} \ket{R},  \qquad
  \mathbf{h}\equiv(h_1,\ldots,h_N)\in\{0,1\}^N,
\end{equation}
form a basis of $\mathcal{H}$, which satisfies with \eqref{l-SoV-states} the following orthogonality condition:
\begin{equation}\label{ortho-states}
   \moy{\mathbf{h}\,|\,\mathbf{k}}=\delta_{\mathbf{h},\mathbf{k}}\, \frac{\mathcal{N}_0}{V(\xi_1^{(h_1)},\ldots,\xi_N^{(h_N)})\,V(\xi_1,\ldots,\xi_N)},
   \qquad
   \forall \,\mathbf{h},\mathbf{k}\in\{0,1\}^N.
\end{equation}
%

\subsection{SoV characterisation of the transfer matrix spectrum and eigenstates}

The discrete SoV characterisation of the transfer matrix spectrum and eigenstates follows from standard arguments, similar to those developed for the open XXX and XXZ chains in \cite{MaiN19}, see also \cite{FalN14} for a formulation for the open XYZ chain within Sklyanin's SoV approach. 

\begin{theorem}\label{th-diag-transfer-SoV}
Let us suppose that the inhomogeneity parameters are generic, or at least  that they satisfy the non-intersecting conditions
\begin{equation}\label{cond-inhom-bis}
 \begin{aligned}
 &\epsilon_n \,\xi_n^{(h_n)},\ 1\le n\le N,\ h_n\in\{0,1\},\ \epsilon_n\in\{-1,1\}, \quad\\
 &\text{and} \quad \epsilon_j \,\xi_{-j}^{(0)},\ 1\le j\le 3,\ \epsilon_j\in\{-1,1\}, \quad 
 \end{aligned}
 \quad
 \text{are pairwise distinct modulo }(\pi,\pi\omega).
\end{equation}
Let us also suppose  that the boundary matrices $K_+(\lambda)$ and $K_-(\lambda)$ are not both proportional to the identity.
Then the transfer matrix $\mathcal{T}(\lambda )$  is diagonalisable with simple spectrum, and
its spectrum $\Sigma _{\mathcal{T}}$ is given by the set of functions of the form
\begin{equation}\label{T-form}
   \tau(\lambda)=\mathsf{T}_0(\lambda )+\sum_{a=1}^N r_n (\lambda )\, \tau_n
\end{equation}
in terms of the functions \eqref{r_n} and \eqref{T_0}, 
where $\tau_1,\ldots,\tau_N$ satisfy the following inhomogeneous system of $N$ quadratic equations:
\begin{equation}\label{discrete-spectrum}
   \tau_n \bigg(\mathsf{T}_0(\xi_n-\eta ) +\sum_{k=1}^N r_k(\xi_n-\eta)\, \tau_k \bigg)=
   \frac{\det_q K_+(\xi_n)\,\det_q\mathcal{U}_-(\xi_n)}{\theta (\eta +2\xi_n)\,\theta (\eta -2\xi_n)},
  \quad
  \forall n\in\{1,\ldots,N\}.
\end{equation}
Then, the vectors
\begin{align}
  & 
  \sum_{\mathbf{h}\in\{0,1\}^N} \prod_{n=1}^N  Q(\xi_n^{(h_n)})\, 
  V(\xi_1^{(h_1)},\ldots,\xi_N^{(h_N)})\,\ket{\mathbf{h}},
  \label{eigen-r}\\
  & 
  \sum_{\mathbf{h}\in\{0,1\}^N} \prod_{n=1}^N \left[\left(\frac{\ths(2\xi_n-2\eta)\,\mathbf{A}(\frac\eta 2+\xi_n)}{\ths(2\xi_n+2\eta)\,\mathbf{A}(\frac \eta 2-\xi_n)}\right)^{\! h_n} Q(\xi_n^{(h_n)})\right]\, 
  V(\xi_1^{(h_1)},\ldots,\xi_N^{(h_N)})\,\bra{\mathbf{h}},
  \label{eigen-l}
\end{align}
with coefficients $Q(\xi_n^{(0)}), Q(\xi_n^{(1)})$ satisfying
\begin{equation}\label{T-Qdis}
   \frac{Q(\xi_n^{(1)})}{Q(\xi_n^{(0)})}=\frac{\tau(\xi_n+\frac\eta 2)}{\mathbf{A}(\frac\eta 2+\xi_n)}=\frac{\mathbf{A}(\frac\eta 2-\xi_n)}{\tau(\xi_n-\frac\eta 2)},
   \qquad \forall n\in\{1,\ldots,N\},
\end{equation}
in terms of the function \eqref{def-A}, are respectively the right and left $\mathcal{T}$-eigenstate with eigenvalue $\tau\in\Sigma_{\mathcal T}$, which are uniquely defined up to an overall normalisation factor.
\end{theorem}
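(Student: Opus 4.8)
\emph{Proof strategy.} The plan is to transpose to the present elliptic setting the Separation of Variables argument used for the open XXX and XXZ chains in \cite{MaiN19}. Its skeleton consists of four steps. First, applying the interpolation formula \eqref{interpol-T} --- valid under \eqref{cond-inhom-bis} --- to an eigenvector $\ket{\tau}$ of $\mathcal{T}(\lambda)$ gives at once $\tau(\lambda)=\mathsf{T}_0(\lambda)+\sum_{n=1}^N r_n(\lambda)\,\tau_n$ with $\tau_n:=\tau(\xi_n^{(0)})$, i.e.\ the form \eqref{T-form}; in particular any element of $\Sigma_{\mathcal T}$ is a function of that shape. The standing hypotheses \eqref{cond-inhom-bis} and ``$K_+,K_-$ not both scalar'' are exactly what make this interpolation formula available and the SoV bases \eqref{l-SoV-states}--\eqref{r-SoV-states} well defined.

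Second, I would expand an eigenvector $\ket{\tau}$ on the basis \eqref{r-SoV-states} and extract the recursion satisfied by its SoV components. For any $\mathbf{h}$ with $h_n=1$, the definition \eqref{l-SoV-states} of the covectors gives directly $\bra{\mathbf{h}}\,\mathcal{T}(\xi_n-\eta/2)=\mathbf{A}(\eta/2-\xi_n)\,\bra{\mathbf{h}'}$, where $\mathbf{h}'$ is $\mathbf{h}$ with $h_n$ replaced by $0$ (one simply restores the factor that was suppressed). Pairing with $\ket{\tau}$ and using $\mathcal{T}(\xi_n-\eta/2)\ket{\tau}=\tau(\xi_n-\eta/2)\ket{\tau}$ yields, for each $n$, the two-term relation $\moy{\mathbf{h}'\,|\,\tau}=\frac{\tau(\xi_n-\eta/2)}{\mathbf{A}(\eta/2-\xi_n)}\,\moy{\mathbf{h}\,|\,\tau}$; since the coefficient depends on $n$ only, these $N$ relations are compatible and determine every $\moy{\mathbf{h}\,|\,\tau}$ from $\moy{\mathbf{0}\,|\,\tau}$ as a product over the sites with $h_n=1$, i.e.\ in a factorised form. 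Using the orthogonality \eqref{ortho-states} to pass from $\moy{\mathbf{h}\,|\,\tau}$ to the coefficient of $\ket{\mathbf{h}}$ introduces the prefactor $V(\xi_1^{(h_1)},\dots,\xi_N^{(h_N)})$, and denoting the per-site factors by $Q(\xi_n^{(h_n)})$ produces exactly \eqref{eigen-r}. The left eigenvector \eqref{eigen-l} is obtained in the same way, now letting $\mathcal{T}(\xi_n+\eta/2)$ act on the vectors \eqref{r-SoV-states}.

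Third, the per-site ratios read off in this way --- namely $Q(\xi_n^{(1)})/Q(\xi_n^{(0)})=\mathbf{A}(\eta/2-\xi_n)/\tau(\xi_n-\eta/2)$ from the right eigenvector and $Q(\xi_n^{(1)})/Q(\xi_n^{(0)})=\tau(\xi_n+\eta/2)/\mathbf{A}(\eta/2+\xi_n)$ from the left one --- must coincide, which is \eqref{T-Qdis}; by the definition \eqref{def-A} of $\mathbf{A}$, their equality is equivalent to the quantum determinant property \eqref{detq-transfer} evaluated on $\ket{\tau}$, and, once $\tau(\xi_n-\eta/2)$ is rewritten via \eqref{interpol-T} in terms of the $\tau_k$, this becomes the quadratic system \eqref{discrete-spectrum}. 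Hence $\Sigma_{\mathcal T}$ is contained in the set of functions \eqref{T-form} whose $\tau_n$ solve \eqref{discrete-spectrum}. Conversely, for any such $\tau$ one may choose a $Q$ satisfying \eqref{T-Qdis} (the two ratios now agree), and the action formulas of the previous step together with \eqref{interpol-T} show that the vector \eqref{eigen-r} (resp.\ \eqref{eigen-l}) is annihilated by $\mathcal{T}(\lambda)-\tau(\lambda)$ for all $\lambda$; \eqref{ortho-states} shows it is non-zero for generic inhomogeneities, its component on $\ket{\mathbf{0}}$ being $\propto\prod_n Q(\xi_n^{(0)})\,V(\xi_1^{(0)},\dots,\xi_N^{(0)})$. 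Thus $\Sigma_{\mathcal T}$ is exactly this set, with eigenvectors \eqref{eigen-r} and \eqref{eigen-l}.

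Fourth and last, I would upgrade this to diagonalisability with simple spectrum. The factorised form of the SoV components is rigid, so the eigenspace attached to each $\tau\in\Sigma_{\mathcal T}$ is one-dimensional; moreover the vectors \eqref{eigen-r} for distinct $\tau$, being common eigenvectors of the commuting operators $\mathcal{T}(\xi_1^{(0)}),\dots,\mathcal{T}(\xi_N^{(0)})$ with pairwise distinct eigenvalue tuples, are linearly independent, so \eqref{discrete-spectrum} has at most $2^N=\dim\mathcal{H}$ solutions. The genuinely delicate point --- the main obstacle --- is the reverse bound: that \eqref{discrete-spectrum} has \emph{exactly} $2^N$ solutions, equivalently that the SoV-eigenvectors span $\mathcal{H}$, equivalently that the quadratic system has no solutions ``at infinity'' (its leading elliptic parts having no common zero). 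Granting this, the $2^N$ one-dimensional eigenspaces fill $\mathcal{H}$, whence $\mathcal{T}(\lambda)$ is diagonalisable, and the injectivity of $(\tau_n)_n\mapsto\tau(\lambda)$ makes the spectrum simple. This counting step is precisely the one carried out in \cite{MaiN19} in the rational and trigonometric cases, and I expect the elliptic setting to affect only the bookkeeping --- orders of theta-polynomials and the conditions \eqref{cond-inhom-bis} --- not the structure of the argument; alternatively, if one prefers to avoid the direct count, one can compute the ``SoV norm'' $\moy{\tau_L\,|\,\tau_R}$ from \eqref{ortho-states}, which is non-zero generically, this together with the ``$\le 2^N$'' bound forcing both diagonalisability and simplicity at once.
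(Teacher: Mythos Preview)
Your proposal is correct and follows essentially the same approach as the paper, which does not give an explicit proof of this theorem but simply states that it ``follows from standard arguments, similar to those developed for the open XXX and XXZ chains in \cite{MaiN19}''; your four-step outline is precisely the structure of those standard arguments adapted to the elliptic setting, and you correctly identify the completeness/counting step as the only genuinely delicate point, deferring it to \cite{MaiN19} just as the paper does.
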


Let us decompose the function $\mathbf{A}$ \eqref{def-A} as
\begin{equation}
   \mathbf{A}(\lambda)=(-1)^N\, \frac{\ths(2\lambda+\eta)}{\ths(2\lambda)}\,\mathsc{a}_K(\lambda)\, a(\lambda)\, d(-\lambda),
\end{equation}
in which
\begin{equation}\label{cond-a_K}
   \mathsc{a}_K(\lambda+\eta/2)\,  \mathsc{a}_K(-\lambda+\eta/2)=\frac{\det_q K_+(\lambda)\,\det_q K_-(\lambda)}{\ths(2\eta+2\lambda)\,\ths(2\eta-2\lambda)}.
\end{equation}
Then one can easily show that
\begin{multline}
   \prod_{n=1}^N \left( -\frac{\ths(2\xi_n-\eta)}{\ths(2\xi_n+\eta)}\frac{a(\xi_n+\frac\eta 2)\, d(-\xi_n-\frac\eta 2)}{a(-\xi_n+\frac\eta 2)\, d(\xi_n-\frac\eta 2)} \right)^{\! h_n}
   V(\xi_1^{(h_1)},\ldots,\xi_N^{(h_N)})
   \\
   =\frac{V(\xi_1^{(0)},\ldots,\xi_N^{(0)})}{V(\xi_1^{(1)},\ldots,\xi_N^{(1)})}\, V(\xi_1^{(1-h_1)},\ldots,\xi_N^{(1-h_N)}).
\end{multline}
Hence the left eigenvector \eqref{eigen-l} can equivalently be rewritten as
\begin{equation}
\frac{V(\xi_1^{(0)},\ldots,\xi_N^{(0)})}{V(\xi_1^{(1)},\ldots,\xi_N^{(1)})}\sum_{\mathbf{h}\in\{0,1\}^N} \prod_{n=1}^N \left[\left(-\frac{\mathsc{a}_K(\frac\eta 2+\xi_n)}{\mathsc{a}_K(\frac \eta 2-\xi_n)}\right)^{\! h_n} Q(\xi_n^{(h_n)})\right]\, 
  V(\xi_1^{(1-h_1)},\ldots,\xi_N^{(1-h_N)})\,\bra{\mathbf{h}}.
  \label{eigen-l-bis}
\end{equation}

\subsection{Characterisation in terms of a functional $TQ$-equation}

For some choice of $\boldsymbol{\varepsilon}\equiv(\epsilon_1^+,\epsilon_2^+,\epsilon_3^+,\epsilon_1^-,\epsilon_2^-,\epsilon_3^-)\in\{+1,-1\}^6$,  let us consider the following simple solution of \eqref{cond-a_K},
\begin{align}\label{def-a_eps}
   \mathsc{a}_K(\lambda)\equiv \mathsc{a}_{\boldsymbol{\varepsilon}}(\lambda)
   =\prod_{\sigma=\pm}\prod_{i=1}^3 \frac{\ths(\alpha_i^\sigma +\epsilon_i^\sigma(\frac\eta 2 -\lambda))}{\ths(\alpha_i^\sigma)}
   =\prod_{\sigma=\pm}\prod_{i=1}^3\frac{\ths(\lambda-\frac\eta 2-\epsilon_i^\sigma\alpha_i^\sigma)}{\ths(-\epsilon_i^\sigma\alpha_i^\sigma)},
\end{align}
and denote 
\begin{equation}\label{def-A_eps}
   \mathbf{A}_{\boldsymbol{\varepsilon}}(\lambda)=(-1)^N\, \frac{\ths(2\lambda+\eta)}{\ths(2\lambda)}\,\mathsc{a}_{\boldsymbol{\varepsilon}}(\lambda)\, a(\lambda)\, d(-\lambda).
\end{equation}
The states \eqref{l-SoV-states} and \eqref{r-SoV-states} with the choice \eqref{def-A_eps} of the function \eqref{def-A} will be denoted respectively as
\begin{equation}\label{SoV-states-eps}
   {}_{\boldsymbol{\varepsilon}}\bra{\mathbf{h}} \qquad \text{and} \qquad \ket{\mathbf{h}}_{\boldsymbol{\varepsilon}}.
\end{equation}

\begin{prop}\label{prop-TQ}
Under the hypothesis of Theorem~\ref{th-diag-transfer-SoV}, let us suppose that the six boundary parameters $\alpha_i^\sigma$, $i=1,2,3$, $\sigma=\pm$, satisfy the following constraint:
\begin{equation}\label{const-TQ}
   \sum_{\sigma=\pm}\sum_{i=1}^3 \epsilon_i^\sigma\alpha_i^\sigma=(N-2M-1)\eta,
\end{equation}
for some choice of $\boldsymbol{\varepsilon}\equiv(\epsilon_1^+,\epsilon_2^+,\epsilon_3^+,\epsilon_1^-,\epsilon_2^-,\epsilon_3^-)\in\{\pm 1\}^6$ such that $\prod_{\sigma=\pm}\prod_{i=1}^3\epsilon_i^\sigma=1$. 
Then, any entire function $\tau$ for which there exists a function $Q$ of the form
\begin{equation}
  \label{Q-form}
   Q(\lambda)=\prod_{n=1}^M\thsym(\lambda, q_n), \qquad \mathbf{q}=(q_1,\ldots,q_M)\in\mathbb{C}^M,
\end{equation}
such that $\tau$ and $Q$ satisfy the $TQ$-equation
\begin{equation}\label{hom-TQ}
   \tau(\lambda)\, Q(\lambda) = \mathbf{A}_{\boldsymbol{\varepsilon}}(\lambda)\, Q(\lambda-\eta)+\mathbf{A}_{\boldsymbol{\varepsilon}}(-\lambda)\, Q(\lambda+\eta),
\end{equation}
is an eigenvalue of the transfer matrix (i.e $\tau\in\Sigma_\mathcal{T}$), which can therefore be written in the form \eqref{T-form}. Moreover, in that case, the function $Q$ \eqref{Q-form} satisfying \eqref{hom-TQ} with $\tau$ is unique, and the corresponding right and left eigenvectors of $\mathcal{T}$ are given by \eqref{eigen-r} and \eqref{eigen-l}.
\end{prop}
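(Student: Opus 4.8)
The plan is to reduce Proposition~\ref{prop-TQ} to the discrete SoV characterisation of Theorem~\ref{th-diag-transfer-SoV} by showing that, under the constraint \eqref{const-TQ}, a polynomial $Q$ of the form \eqref{Q-form} satisfying the functional $TQ$-equation \eqref{hom-TQ} automatically produces, upon specialisation $\lambda\to\xi_n\pm\eta/2$, a solution of the discrete equations \eqref{discrete-spectrum} together with the discrete $TQ$-relations \eqref{T-Qdis}. First I would check that the right-hand side $\mathbf{A}_{\boldsymbol\varepsilon}(\lambda)\,Q(\lambda-\eta)+\mathbf{A}_{\boldsymbol\varepsilon}(-\lambda)\,Q(\lambda+\eta)$, divided by $Q(\lambda)$, is indeed an \emph{entire} function with the correct parity and quasi-periodicity to be of the form \eqref{T-form}: this is where the constraint \eqref{const-TQ} and the sign condition $\prod_{\sigma,i}\epsilon_i^\sigma=1$ enter, controlling both the asymptotics/quasi-periodicity exponent (so that $\tau$ is an even elliptic polynomial of order $2N+6$, cf.\ \eqref{period-transfer}) and the cancellation of the would-be poles at $\lambda=\pm q_n$ via the standard Bethe-type conditions $\mathbf{A}_{\boldsymbol\varepsilon}(q_n)\,Q(q_n-\eta)+\mathbf{A}_{\boldsymbol\varepsilon}(-q_n)\,Q(q_n+\eta)=0$ that follow from evaluating \eqref{hom-TQ} at $\lambda=q_n$ (using that $Q$ has a simple zero there for generic $\mathbf{q}$, and handling the even point $\lambda=-q_n$ by parity). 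One must also verify the finitely many extra interpolation points: evaluating \eqref{hom-TQ} at $\lambda=\xi_0^{(0)},\xi_{-1}^{(0)},\xi_{-2}^{(0)},\xi_{-3}^{(0)}$ should reproduce the central values $\tau_0,\tau_{-1},\tau_{-2},\tau_{-3}$, which pins down $\mathsf{T}_0$ as in \eqref{T_0}; here one uses the explicit factorised form \eqref{def-a_eps} of $\mathsc{a}_{\boldsymbol\varepsilon}$ and the quasi-periodicity of the theta functions.

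Next I would establish the discrete quadratic system \eqref{discrete-spectrum}. The key observation is that from \eqref{hom-TQ} at $\lambda=\xi_n+\eta/2$ and at $\lambda=\xi_n-\eta/2$ one gets
\begin{align}
  \tau(\xi_n+\tfrac\eta2)&=\mathbf{A}_{\boldsymbol\varepsilon}(\xi_n+\tfrac\eta2)\,\frac{Q(\xi_n^{(1)})}{Q(\xi_n^{(0)})},\\
  \tau(\xi_n-\tfrac\eta2)&=\mathbf{A}_{\boldsymbol\varepsilon}(-\xi_n+\tfrac\eta2)\,\frac{Q(\xi_n^{(0)})}{Q(\xi_n^{(1)})},
\end{align}
because at those two points one of the two shifted arguments of $Q$ is again a zero of one of the relevant factors, or more precisely because $a(\xi_n+\eta/2)$-type factors in $\mathbf{A}_{\boldsymbol\varepsilon}$ force the vanishing of one term; this uses \eqref{def-A_eps}, \eqref{a-d} and the non-intersecting condition \eqref{cond-inhom-bis}. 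Multiplying these two relations gives exactly $\tau(\xi_n+\eta/2)\,\tau(\xi_n-\eta/2)=\mathbf{A}_{\boldsymbol\varepsilon}(\xi_n+\eta/2)\,\mathbf{A}_{\boldsymbol\varepsilon}(-\xi_n+\eta/2)$, which by \eqref{def-A} equals the quantum-determinant right-hand side of \eqref{discrete-spectrum}; rewriting $\tau(\xi_n-\eta)$ in terms of $\tau(\xi_n^{(0)})=\tau(\xi_n+\eta/2)$ via the interpolation formula \eqref{interpol-T} (and the relation $\xi_n^{(1)}=\xi_n-\eta/2$, $\xi_n-\eta=\xi_n^{(1)}-\eta/2$) then yields \eqref{discrete-spectrum}. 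At this stage Theorem~\ref{th-diag-transfer-SoV} applies: $\tau\in\Sigma_{\mathcal T}$, and the associated eigenvectors are \eqref{eigen-r}, \eqref{eigen-l} with the coefficients $Q(\xi_n^{(0)}),Q(\xi_n^{(1)})$ read off from the specialisations of the polynomial $Q$ — consistency being guaranteed precisely by the two displayed relations above, which are \eqref{T-Qdis}.

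Finally, for uniqueness of $Q$: given $\tau\in\Sigma_{\mathcal T}$, the ratios $Q(\xi_n^{(1)})/Q(\xi_n^{(0)})$ are fixed by \eqref{T-Qdis}, hence the $2N$ values $Q(\xi_n^{(h)})$ are determined up to the $N$ overall scalings $Q(\xi_n^{(0)})$; but an elliptic polynomial of the form \eqref{Q-form} has degree $2M$ and, since $M$ is fixed by the constraint \eqref{const-TQ}, it is determined by $2M+1$ of its values (or equivalently $M$ of the products $\thsym$), so I would argue that once $M\le N$ the polynomial $Q$ is over-determined and any two solutions of \eqref{hom-TQ} with the same $\tau$ must coincide — concretely, their ratio would be an elliptic function with at most $2M$ poles and $2M$ zeros forced to cancel, hence constant, and the constant is $1$ by matching leading behaviour. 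I expect the main obstacle to be the bookkeeping in the first paragraph: showing rigorously that the constraint \eqref{const-TQ} is exactly what makes $\tau=(\mathbf{A}_{\boldsymbol\varepsilon}Q^{(-)}+\mathbf{A}_{\boldsymbol\varepsilon}(-\cdot)Q^{(+)})/Q$ an even elliptic polynomial of the right order $2N+6$ (matching the $6$ extra boundary points), i.e.\ that the quasi-periodicity factors and the $\ths(2\lambda)$ denominators in $\mathbf{A}_{\boldsymbol\varepsilon}$ all conspire correctly; the pole cancellations at $\pm q_n$ and the central-value checks are comparatively routine, and the reduction to Theorem~\ref{th-diag-transfer-SoV} is then immediate.
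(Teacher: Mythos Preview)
Your proposal is correct and follows essentially the same route as the paper's proof: verify that under the constraint \eqref{const-TQ} the function $\tau$ satisfying \eqref{hom-TQ} is an even elliptic polynomial of order $2N+6$, check the central values at the four special points $\xi_{-i}^{(0)}$ (the paper phrases this as $\mathbf{A}_{\boldsymbol\varepsilon}(-\xi_{-i}^{(0)})=0$ together with $\tau_{-i}=\mathbf{A}_{\boldsymbol\varepsilon}(\xi_{-i}^{(0)})\,Q(\xi_{-i}^{(0)}-\eta)/Q(\xi_{-i}^{(0)})$), then evaluate \eqref{hom-TQ} at $\pm\xi_n^{(0)},\pm\xi_n^{(1)}$ to obtain \eqref{T-Qdis} and hence the discrete system \eqref{discrete-spectrum}, and finally invoke Theorem~\ref{th-diag-transfer-SoV}.

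Two minor remarks. First, your discussion of pole cancellation at $\pm q_n$ via Bethe-type conditions is unnecessary here: $\tau$ is \emph{assumed} entire in the hypothesis, so you only need to check its parity and quasi-periodicity (order $2N+6$), not its regularity. Second, you supply a uniqueness argument for $Q$ that the paper's proof does not spell out; your sketch is reasonable but could be tightened by noting that the simplicity of the spectrum (Theorem~\ref{th-diag-transfer-SoV}) fixes the ratios $Q(\xi_n^{(1)})/Q(\xi_n^{(0)})$ via \eqref{T-Qdis}, and an even elliptic polynomial of degree $2M$ is determined by these $N\ge M$ independent conditions up to the overall normalisation already fixed by the form \eqref{Q-form}.
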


\begin{proof}[Proof]
The proof of this proposition is standard. We have mainly to observe that, if an entire function $\tau (\lambda )$ satisfies
the $TQ$-equation \eqref{hom-TQ} under the condition~\eqref{const-TQ} and with $Q$ of the form \eqref{Q-form}, then it is an even elliptic
polynomial in $\lambda $ of order $2N+6$. Moreover, the identities
\begin{equation}
\tau _{-i}=\mathbf{A}_{\boldsymbol{\varepsilon}}(\xi _{-i}^{(0)})\,\frac{Q(\xi _{-i}^{(0)}-\eta )}{Q(\xi _{-i}^{(0)})},\qquad
\mathbf{A}_{\boldsymbol{\varepsilon}}(-\xi _{-i}^{(0)})=0, \qquad
\forall i\in\{0,1,2,3\},\,
\end{equation}
which can be verified by direct computations, imply that $\tau (\lambda )$
has the form $\left( \ref{T-form}\right) $. Hence we can apply our
previous theorem and observe that the identities \eqref{T-Qdis} are verified as consequence of the  $TQ$-equation computed in the points $\pm \xi _{n}^{(0)}$ and $\pm \xi _{n}^{(1)}$ for any$\ n\in \{1,\ldots ,n\}$.  
\end{proof}

Except in the exceptional case of coinciding roots, the corresponding Bethe equations for the roots $q_j$ of $Q$ \eqref{Q-form} solution of \eqref{hom-TQ} under the constraint \eqref{const-TQ} are therefore
\begin{equation}\label{Bethe-1}
   \mathbf{A}_{\boldsymbol{\varepsilon}}(q_j)\, Q(q_j-\eta)+\mathbf{A}_{\boldsymbol{\varepsilon}}(-q_j)\, Q(q_j+\eta)=0,
   \qquad 1\le j\le M.
\end{equation}
The constraint \eqref{const-TQ} is the direct XYZ analog of Nepomechie's constraint \cite{Nep04,NepR03} for the XXZ open spin chain.
Such kind of constraint already appeared in different contexts in the literature: for instance within the ABA study of the eight-vertex model \cite{FanHSY96}, or within the direct study of the $TQ$-equation \cite{YanZ06}. 

Such a description of the transfer matrix spectrum and eigenstates, in terms of even elliptic polynomial $Q$-functions solving a homogeneous $TQ$-equation \eqref{hom-TQ} in a given sector $M<N$, is of course not complete. A complete description of the general open XYZ transfer matrix spectrum in terms of $Q$-function solutions of a homogeneous\footnote{It is also possible to have, in general XYZ open spin chains, a formulation of the spectrum in terms of elliptic polynomial solutions of a $TQ$-equation with an additional inhomogeneous term, as proposed in  \cite{CaoYSW13b} (see also \cite{KitMN14} where another inhomogeneous formulation was derived - and proved to be complete - from  SoV in the general open XXZ case). Besides the presence of the inhomogeneous term, the $TQ$-equation of  \cite{CaoYSW13b} also differs from \eqref{Q-form}-\eqref{hom-TQ} by the fact that it involves a couple of non-even related $Q$-functions, instead of a single even one. We do not consider this kind of formulation in our present paper.} $TQ$-equation is still unknown. 
Note however that, if the constraint \eqref{const-TQ} is satisfied with some choice of $M$ and $\boldsymbol{\varepsilon}$, it is also satisfied, with the same boundary parameters, for $M'=N-1-M$  and $\boldsymbol{\varepsilon'}=-\boldsymbol{\varepsilon}$. On the basis of a numerical analysis, it was conjectured in \cite{NepR03,NepR04} in the XXZ case and in \cite{YanZ06} in the XYZ case that the solutions $Q(\lambda)$ \eqref{Q-form} of degree $2M$ of \eqref{hom-TQ} with some particular choice of $\boldsymbol{\varepsilon}$, together with the solutions of degree $2M'$ with $M'=N-1-M$ and $\boldsymbol{\varepsilon'}=-\boldsymbol{\varepsilon}$, provide the complete spectrum and eigenstates of the transfer matrix.

\section{Scalar products of separate states: main results}
\label{sec-SP}

In this section, we present the main results of our paper, namely the determinant representations that we have obtained for the scalar product of separate states.

As usual for models considered in the SoV framework, the scalar products of separate states can be represented as determinants with rows labelled by the inhomogeneity parameters (see Proposition~\ref{prop-SP-SoV}). For physical applications such as the computations of form factors and correlations functions in models for which the eigenstates can be characterised by solutions of Bethe equations, it turns out to be convenient to use instead determinant representations with rows and columns labelled by the two sets of Bethe roots, the prototype of which being the celebrated Slavnov determinant representation \cite{Sla89}. Focusing on cases for which eigenstates of the model can be described by usual Bethe equations, namely under the constraint \eqref{const-TQ}, we present in Theorem~\ref{th-det-Slavnov} the analog of such a Slavnov determinant representation for the scalar product of two separate states, one of which being an eigenstate of the model (see also Theorem~\ref{th-ortho} for the orthogonality of the two different sectors which can be described by the same constraint). As in \cite{KitMT99}, the scalar product is expressed in terms of a Jacobian of the transfer matrix eigenvalues. This generalises to the open XYZ case the formulas obtained in \cite{KitMNT17,KitMNT18} in the open XXX and XXZ cases, which played a fundamental role in the computation of the correlation functions in \cite{NicT22,NicT23}. It is however worth mentioning that the approach we used to derive this result from  \eqref{SP-2} is {\em not} 
the direct analog of the approach used in the XXX and XXZ cases. This new approach will 
be  detailed in the section~\ref{sec-SPcomp}.  


Let ${}_{\boldsymbol{\varepsilon}}\bra{P}$ and $\ket{Q}_{\boldsymbol{\varepsilon'}}$ be two arbitrary separate states of the form
\begin{align}
  & \ket{Q}_{\boldsymbol{\varepsilon'}}=
  \sum_{\mathbf{h}\in\{0,1\}^N} \prod_{n=1}^N Q(\xi_n^{(h_n)})\, 
  V(\xi_1^{(h_1)},\ldots,\xi_N^{(h_N)})\,\ket{\mathbf{h}}_{\boldsymbol{\varepsilon'}},
  \label{r-separate}\\
  & {}_{\boldsymbol{\varepsilon}}\bra{P}= \frac{1}{\mathcal{N}_0}\sum_{\mathbf{h}\in\{0,1\}^N}\prod_{n=1}^N \left[\left(-\frac{\mathsc{a}_{\boldsymbol{\varepsilon}}(\frac\eta 2+\xi_n)}{\mathsc{a}_{\boldsymbol{\varepsilon}}(\frac \eta 2-\xi_n)}\right)^{\! h_n} \! P(\xi_n^{(h_n)})
  \right]\, 
  V(\xi_1^{(1-h_1)},\ldots,\xi_N^{(1-h_N)})\ {}_{\boldsymbol{\varepsilon}}\bra{\mathbf{h}},
  \label{l-separate}
\end{align}
for some $\boldsymbol{\varepsilon},\boldsymbol{\varepsilon'}\in\{\pm 1\}^6$. Here $\mathsc{a}_{\boldsymbol{\varepsilon}}$ denotes the solution of \eqref{cond-a_K} given by \eqref{def-a_eps}, whereas $ {}_{\boldsymbol{\varepsilon}}\bra{\mathbf{h}}$ and $\ket{\mathbf{h}}_{\boldsymbol{\varepsilon'}}$, $\mathbf{h}\equiv(h_1,\ldots,h_N)\in\{0,1\}^N$, stand for the states \eqref{l-SoV-states} and \eqref{r-SoV-states} with function $\mathbf{A}$ \eqref{def-A} given by $\mathbf{A}_{\boldsymbol{\varepsilon}}$ and $\mathbf{A}_{\boldsymbol{\varepsilon'}}$ respectively\footnote{
More generally, the states $ {}_{\boldsymbol{\varepsilon}}\bra{\mathbf{h}}$ and $\ket{\mathbf{h}}_{\boldsymbol{\varepsilon}}$ in \eqref{r-separate} and \eqref{r-separate-bis} may stand for any SoV basis elements diagonalising the transfer matrix $\mathcal{T}(\lambda)$ as in Theorem~\ref{th-diag-transfer-SoV}, provided that it satisfies \eqref{ortho-states} for a given normalisation constant $\mathcal{N}_0$, and that it also satisfies \eqref{state-eps-eps'}. For instance, such a basis could also be constructed as in \cite{FalN14} by using the Vertex-IRF transformation.}.
Note that we have
\begin{equation}\label{state-eps-eps'}
    \ket{\mathbf{h}}_{\boldsymbol{\varepsilon'}}
    =\prod_{n=1}^N\left( \frac{\mathsc{a}_{\boldsymbol{\varepsilon}}(\frac\eta 2-\xi_n)}{\mathsc{a}_{\boldsymbol{\varepsilon'}}(\frac\eta 2-\xi_n)}\right)^{\! h_n}\, \ket{\mathbf{h}}_{\boldsymbol{\varepsilon}}\, ,
\end{equation}
so that the separate state \eqref{r-separate} can alternatively be written on the basis given by $\ket{\mathbf{h}}_{\boldsymbol{\varepsilon}}$, $\mathbf{h}\in\{0,1\}^N$, as
\begin{equation}
  \ket{Q}_{\boldsymbol{\varepsilon'}}=
  \sum_{\mathbf{h}\in\{0,1\}^N} \prod_{n=1}^N \left[ 
  \left( \frac{\mathsc{a}_{\boldsymbol{\varepsilon}}(\frac\eta 2-\xi_n)}{\mathsc{a}_{\boldsymbol{\varepsilon'}}(\frac\eta 2-\xi_n)}\right)^{\! h_n} \! Q(\xi_n^{(h_n)}) 
  \right]\, 
  V(\xi_1^{(h_1)},\ldots,\xi_N^{(h_N)})\,\ket{\mathbf{h}}_{\boldsymbol{\varepsilon}}.
  \label{r-separate-bis}
\end{equation}

We recall that, up to the choice of a global normalisation, the eigentstates of the transfer matrix are particular cases of the above separate states, for $P$ and $Q$ satisfying \eqref{T-Qdis} with the functions $\mathbf{A}_{\boldsymbol{\varepsilon}}$ and $\mathbf{A}_{\boldsymbol{\varepsilon'}}$ respectively.
Our aim is therefore to compute the scalar product ${}_{\boldsymbol{\varepsilon}}\moy{ P\, |\, Q}_{\boldsymbol{\varepsilon'}}$ of ${}_{\boldsymbol{\varepsilon}}\bra{P}$ and $\ket{Q}_{\boldsymbol{\varepsilon'}}$.

\subsection{The original SoV determinant representation}

%

It follows from  \eqref{ortho-states} that the scalar product  of \eqref{l-separate} and \eqref{r-separate-bis} is given by
\begin{align}\label{SP-1}
 &{}_{\boldsymbol{\varepsilon}}\moy{ P\, |\, Q}_{\boldsymbol{\varepsilon'}} 
 ={}_{\boldsymbol{\varepsilon}}\moy{ Q\, |\, P}_{\boldsymbol{\varepsilon'}}
 =\prod_{n=1}^N  P(\xi_n^{(0)})\,Q(\xi_n^{(0)})\!
 \nonumber\\
  &\hspace{2cm} \times
\sum_{\mathbf{h}\in\{0,1\}^N}\prod_{n=1}^N 
 \left[-\frac{\mathsc{a}_{\boldsymbol{\varepsilon}}(\frac\eta 2+\xi_n)}{\mathsc{a}_{\boldsymbol{\varepsilon'}}(\frac \eta 2-\xi_n)}
 \frac{P(\xi_n^{(1)})\,Q(\xi_n^{(1)})}{P(\xi_n^{(0)})\,Q(\xi_n^{(0)})}\right]^{ h_n}
 \frac{V(\xi_1^{(1-h_1)},\ldots,\xi_N^{(1-h_N)})}{V(\xi_1,\ldots,\xi_N)}.
\end{align}

As usual for models solved by SoV, we can rewrite this expression  in the form of a simple determinant, which depends however in a intricate way on the inhomogeneity parameters.
In the present case, it is a consequence of the following lemma, 
which can be proven by standard arguments, similarly as in Proposition~4 of \cite{PakRS08}:
\begin{lemma}\label{lem-det-VDM}
Let $\Xi_N$ be the space of even elliptic polynomials of order $2(N-1)$, i.e. of holomorphic functions $f$ on $\mathbb{C}$ satisfying the properties
\begin{equation}\label{prop-space-N}
   f(-u)=f(u), \qquad f(u+\pi)=f(u), \qquad f(u+\pi\omega)=(-e^{-2iu-i\pi\omega})^{2(N-1)} f(u),
\end{equation}
and let $\{\Theta_N^{(j)}\}_{1\le j\le N}$ be a basis of $\Xi_N$. Then, there exists a constant $C$ such that, for any set of complex variables $x_1,\ldots,x_N$,
\begin{equation}\label{det-VDM}
    V(x_1,\ldots,x_N)=C\det_{1\le i,j\le N}\left[ \Theta_{N}^{(j)}( x_i) \right].
\end{equation}
\end{lemma}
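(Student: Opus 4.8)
The plan is to prove Lemma~\ref{lem-det-VDM} by comparing two sides as functions of one variable, say $x_1$, and then inducting on $N$. First I would fix $x_2,\ldots,x_N$ generic and regard both sides as functions of $x=x_1$. The left-hand side $V(x,x_2,\ldots,x_N)=\prod_{2\le i<j\le N}\thsym(x_j,x_i)\cdot\prod_{j=2}^N\thsym(x_j,x)$ is, as a function of $x$, an even elliptic polynomial of order $2(N-1)$ (the prefactor not depending on $x$ is a constant), vanishing at $x=\pm x_2,\ldots,\pm x_N$ modulo $(\pi,\pi\omega)$. The right-hand side, expanded along the first row, is $\sum_{j=1}^N \Theta_N^{(j)}(x)\,(-1)^{1+j}M_{1j}$ where $M_{1j}$ is the corresponding minor not involving $x$; since each $\Theta_N^{(j)}\in\Xi_N$, this is again an even elliptic polynomial of order $2(N-1)$ in $x$. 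Moreover it vanishes whenever $x=\pm x_i$ for $i\ge 2$, because the determinant then has two equal (up to the parity $f(-u)=f(u)$) rows. An even elliptic polynomial of order $2(N-1)$ is determined by its $N$ zeros (counted with the $\thsym$ convention) up to a multiplicative constant, so both sides agree up to a factor independent of $x_1$; by symmetry of $V$ this factor is independent of all the $x_i$, hence a genuine constant $C$. The only remaining point is that $C\ne 0$ (equivalently, that the construction is non-degenerate), which follows because $\{\Theta_N^{(j)}\}$ is a basis of the $N$-dimensional space $\Xi_N$, so the matrix $[\Theta_N^{(j)}(x_i)]$ is not identically zero as a function of the $x_i$.

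To make the zero-counting argument rigorous I would proceed by induction on $N$. For $N=1$, $\Xi_1$ is the space of even elliptic functions of order $0$, i.e. constants, $V$ of one variable is the empty product $1$, and \eqref{det-VDM} reads $1=C\,\Theta_1^{(1)}(x_1)$ with $\Theta_1^{(1)}$ a nonzero constant, so $C=1/\Theta_1^{(1)}$ works. For the inductive step, having established that $V(x_1,\ldots,x_N)$ and $\det[\Theta_N^{(j)}(x_i)]$ are proportional as functions of $x_1$ with ratio $C(x_2,\ldots,x_N)$, I would identify $C(x_2,\ldots,x_N)$ by specialising. Expanding the determinant along the first row and comparing the coefficient of a chosen basis element — equivalently, using the fact that the $\thsym(x_j,x_1)$ factors of $V$ force the cofactors to reconstruct a lower Vandermonde-type determinant on $x_2,\ldots,x_N$ — shows that $C(x_2,\ldots,x_N)$ is itself, up to an overall constant, a determinant of the same shape in the variables $x_2,\ldots,x_N$ against a basis of $\Xi_{N-1}$, to which the induction hypothesis applies. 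Hence $C(x_2,\ldots,x_N)$ is constant, completing the induction.

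The main obstacle is the careful bookkeeping in the inductive step: one must check that the cofactor expansion along the first row genuinely produces, after dividing out the $x_1$-dependent factor $\prod_{j\ge 2}\thsym(x_j,x_1)$, a determinant built from a basis of $\Xi_{N-1}$ in the remaining variables, rather than some more complicated object. Concretely, one can sidestep this by choosing a convenient basis of $\Xi_N$ adapted to a reference point (for instance, a basis one of whose elements vanishes at a prescribed value of $x_1$ modulo $(\pi,\pi\omega)$ while the others do not), evaluating at that point, and reading off the recursion; since the statement only asserts existence of \emph{some} constant $C$ for a \emph{given} basis, and any two bases of $\Xi_N$ are related by an invertible linear transformation (which multiplies the determinant by a nonzero constant and multiplies $C$ by its inverse), it suffices to prove it for one convenient basis. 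A clean alternative, mirroring Proposition~4 of \cite{PakRS08}, is to argue directly: both sides are multilinear-type alternating functions of $x_1,\ldots,x_N$ lying in $\Xi_N^{\otimes}$ in each variable separately and vanishing on all diagonals $x_i=\pm x_j$, and the space of such functions is one-dimensional, which pins the ratio to a constant in one stroke; the nonvanishing of $C$ is then exactly the statement that the Gram-type matrix of a basis against generic points is nonsingular, true because $\{\Theta_N^{(j)}\}$ spans $\Xi_N$.
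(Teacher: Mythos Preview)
Your proposal is essentially correct; note that the paper does not actually give a proof of this lemma --- it merely states that it ``can be proven by standard arguments, similarly as in Proposition~4 of \cite{PakRS08}'' --- so there is nothing detailed to compare against. Your first paragraph already contains a complete argument: both sides lie in $\Xi_N$ as functions of $x_1$, both vanish at the $N-1$ pairs $\pm x_2,\ldots,\pm x_N$ (for the determinant side, because $\Theta_N^{(j)}(-x_k)=\Theta_N^{(j)}(x_k)$ makes two rows equal), and an element of $\Xi_N$ is determined by its $N-1$ $\thsym$-zeros up to a scalar. Hence the ratio $V/\det$ is independent of $x_1$; since both $V$ and the determinant are antisymmetric under $x_i\leftrightarrow x_j$, the ratio is symmetric, and a symmetric function independent of one variable is independent of all, so it is a constant $C$. (Minor slip: you wrote ``$N$ zeros'' where it should be $N-1$ in the $\thsym$ convention.)

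The induction you sketch in the second and third paragraphs is therefore unnecessary; the symmetry argument already closes the loop without recursing on $N$. Your ``clean alternative'' at the end --- that the space of functions which are in $\Xi_N$ in each variable, alternating, and vanishing on all diagonals $x_i=\pm x_j$ is one-dimensional --- is precisely the content of the reference \cite{PakRS08} that the paper invokes, and is also how the explicit Vandermonde-type identity \eqref{VDM-bis} in the remark following the lemma is most naturally established (via the addition formula $\ths_1(x{+}y)\ths_1(x{-}y)\ths_4^2(0)=\ths_1^2(x)\ths_4^2(y)-\ths_1^2(y)\ths_4^2(x)$, which reduces $V$ to an ordinary Vandermonde in the variables $\ths_1^2(\zeta_i)/\ths_4^2(\zeta_i)$). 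Either route is fine; the point you should retain is that it suffices to prove the identity for one convenient basis, since any change of basis multiplies the determinant by a nonzero constant.
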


It follows from Lemma~\ref{lem-det-VDM} that, for
any function $F$ and any set of complex variables $x_1,\ldots,x_N$, 
\begin{multline}\label{ratio-det}
   \sum_{\mathbf{h}\in\{0,1\}^N}\prod_{n=1}^N  F(x_n)^{h_n}\
    \frac{V(x_1^{(1-h_1)},\ldots,x_N^{(1-h_N)})}{V(x_1,\ldots,x_N)}
  \\
  =\frac{\det_{1\le i,j\le N}\left[ \Theta_{N}^{(j)}(x_i-\frac\eta 2)+F(x_i)\,
 \Theta_{N}^{(j)}(x_i+\frac\eta 2)
 \right]}{\det_{1\le i,j\le N}\left[\Theta_{N}^{(j)}( x_i)\right]},
\end{multline}
in which we have used the shortcut notation $x_i^{(h)}=x_i+\frac\eta 2-h\eta$, $1\le i\le N$, $h\in\{0,1\}$.
Here $\{\Theta_N^{(j)}\}_{1\le j\le N}$ stands for any basis of $\Xi_N$, and the ratio \eqref{ratio-det} is independent of the choice of the basis $\{\Theta_N^{(j)}\}_{1\le j\le N}$ of $\Xi_N$.
Applying \eqref{ratio-det} to \eqref{SP-1}, we therefore obtain the following determinant representation for the scalar product:

\begin{prop}\label{prop-SP-SoV}
Let ${}_{\boldsymbol{\varepsilon}}\bra{P}$ and $\ket{Q}_{\boldsymbol{\varepsilon'}}$ be two arbitrary separate states of the form \eqref{l-separate} and \eqref{r-separate} for some $\boldsymbol{\varepsilon},\boldsymbol{\varepsilon'}\in\{\pm 1\}^6$.
Then the scalar product \eqref{SP-1} can be rewritten as 
\begin{align}\label{SP-2}
  {}_{\boldsymbol{\varepsilon}}\moy{ P\, |\, Q}_{\boldsymbol{\varepsilon'}}  
 &=\frac{\det_{1\le i,j\le N}\left[ P(\xi_i^{(0)})\,Q(\xi_i^{(0)})\, \Theta_{N}^{(j)}(\xi_i^{(1)})
 -\frac{\mathsc{a}_{\boldsymbol{\varepsilon}}(\frac\eta 2+\xi_i)}{\mathsc{a}_{\boldsymbol{\varepsilon'}}(\frac \eta 2-\xi_i)}
 P(\xi_i^{(1)})\,Q(\xi_i^{(1)})\, 
 \Theta_{N}^{(j)}(\xi_i^{(0)})
 \right]}{\det_{1\le i,j\le N}\left[\Theta_{N}^{(j)}( \xi_i)\right]},
\end{align}
in which $\{\Theta_N^{(j)}\}_{1\le j\le N}$ stands for any basis of $\Xi_N$, and  the ratio  \eqref{SP-2} is independent of the choice of the basis $\{\Theta_N^{(j)}\}_{1\le j\le N}$. 
\end{prop}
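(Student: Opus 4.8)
The plan is to deduce \eqref{SP-2} from the discrete SoV expression \eqref{SP-1} — itself obtained by expanding ${}_{\boldsymbol{\varepsilon}}\bra{P}$ and $\ket{Q}_{\boldsymbol{\varepsilon'}}$ on the dual SoV bases and using the orthogonality \eqref{ortho-states} — by recognising the sum over $\mathbf{h}\in\{0,1\}^N$ in \eqref{SP-1} as a special case of the determinant identity \eqref{ratio-det}.

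Concretely, fix any basis $\{\Theta_N^{(j)}\}_{1\le j\le N}$ of the space $\Xi_N$ of even elliptic polynomials of order $2(N-1)$; its existence, together with the Vandermonde-type identity \eqref{det-VDM}, is exactly the content of Lemma~\ref{lem-det-VDM}, and \eqref{ratio-det} follows from it by linearity of the determinant. I would then apply \eqref{ratio-det} with $x_n=\xi_n$ for $1\le n\le N$, so that the shift convention of \eqref{ratio-det} matches \eqref{def-xi-shift}, $x_i^{(h)}=\xi_i+\frac\eta 2-h\eta=\xi_i^{(h)}$, and in particular $x_i-\frac\eta 2=\xi_i^{(1)}$ and $x_i+\frac\eta 2=\xi_i^{(0)}$, and with
\begin{equation*}
   F(\xi_n)=-\frac{\mathsc{a}_{\boldsymbol{\varepsilon}}(\frac\eta 2+\xi_n)}{\mathsc{a}_{\boldsymbol{\varepsilon'}}(\frac\eta 2-\xi_n)}\,\frac{P(\xi_n^{(1)})\,Q(\xi_n^{(1)})}{P(\xi_n^{(0)})\,Q(\xi_n^{(0)})}.
\end{equation*}
With this choice $\prod_{n=1}^N F(\xi_n)^{h_n}$ is precisely the bracketed factor of \eqref{SP-1}, so \eqref{ratio-det} identifies the sum over $\mathbf{h}$ in \eqref{SP-1} with $\det_{1\le i,j\le N}\big[\Theta_N^{(j)}(\xi_i^{(1)})+F(\xi_i)\,\Theta_N^{(j)}(\xi_i^{(0)})\big]$ divided by $\det_{1\le i,j\le N}\big[\Theta_N^{(j)}(\xi_i)\big]$.

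The remaining step is to absorb the overall scalar prefactor $\prod_{n=1}^N P(\xi_n^{(0)})\,Q(\xi_n^{(0)})$ of \eqref{SP-1} into the numerator determinant by multilinearity in the rows: multiplying the $i$-th row by $P(\xi_i^{(0)})\,Q(\xi_i^{(0)})$ sends the $(i,j)$ entry to $P(\xi_i^{(0)})\,Q(\xi_i^{(0)})\,\Theta_N^{(j)}(\xi_i^{(1)})+P(\xi_i^{(0)})\,Q(\xi_i^{(0)})\,F(\xi_i)\,\Theta_N^{(j)}(\xi_i^{(0)})$, and by definition of $F$ the second summand equals $-\frac{\mathsc{a}_{\boldsymbol{\varepsilon}}(\frac\eta 2+\xi_i)}{\mathsc{a}_{\boldsymbol{\varepsilon'}}(\frac\eta 2-\xi_i)}\,P(\xi_i^{(1)})\,Q(\xi_i^{(1)})\,\Theta_N^{(j)}(\xi_i^{(0)})$; this is exactly the matrix entry in \eqref{SP-2}. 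Independence of the choice of basis follows at once: replacing $\{\Theta_N^{(j)}\}_{1\le j\le N}$ by another basis multiplies the matrix in each of the two determinants on the right by a fixed invertible matrix $M$, hence multiplies numerator and denominator by the same factor $\det M$.

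Since Lemma~\ref{lem-det-VDM} and the identity \eqref{ratio-det} are already in hand, no real obstacle is expected: the proposition is a rearrangement of \eqref{SP-1}. The only points needing care are the bookkeeping of the shift conventions between \eqref{ratio-det} and \eqref{def-xi-shift}, and the observation that $F$ merely has to be a well-defined function at the $N$ points $\xi_n$ — automatic under the genericity assumptions, while the final formula \eqref{SP-2}, a ratio of determinants built from entire functions evaluated at the $\xi_i^{(h)}$, extends by continuity to the degenerate cases where $P(\xi_n^{(0)})$, $Q(\xi_n^{(0)})$ or $\mathsc{a}_{\boldsymbol{\varepsilon'}}(\frac\eta 2-\xi_n)$ happen to vanish.
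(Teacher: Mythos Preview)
Your proposal is correct and follows exactly the paper's approach: the paper simply states that \eqref{SP-2} is obtained by applying \eqref{ratio-det} to \eqref{SP-1}, and you have spelled out the bookkeeping (the choice of $F$, the shift conventions $x_i\pm\eta/2=\xi_i^{(0)},\xi_i^{(1)}$, and the absorption of the prefactor by multilinearity) that this one-line application entails. Your remark on basis-independence via a common $\det M$ factor is also the expected one.
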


\begin{rem}
At this stage, we have not made any hypothesis on the particular form of the functions $P$ and $Q$. They can be any arbitrary functions on the discrete set of parameters $\xi_n^{(h_n)}$, $1\le n\le N$, $h_n\in\{0,1\}$, such that the two vectors \eqref{l-separate} and \eqref{r-separate} are well-defined. Also, the boundary parameters are here completely arbitrary.
\end{rem}

\begin{rem}
The expression \eqref{SP-2} being independent of the choice of the basis $\{\Theta_N^{(j)}\}_{1\le j\le N}$ of $\Xi_N$, we are free to choose the most convenient basis according to our purpose. 
For instance we could choose
\begin{equation}\label{Theta^j}
    \Theta_{N}^{(j)}(u) =\frac{\ths_1^{2(j-1)}(u)\,\ths_4^{2(N-j)}(u)}{\theta_4^{N-1}(0)},\quad j=1,\ldots,N,
\end{equation}
which corresponds to a rewriting of \eqref{VDM} as a generalised Vandermonde determinant:
\begin{align}
  V(\zeta_1,\ldots,\zeta_N)
  &=\prod_{1\le i< j\le N} \thsym(\zeta_j,\zeta_k) = 
   \det_{1\le i,j\le N}\left[\frac{\ths_1^{2(j-1)}(\zeta_i)\,\ths_4^{2(N-j)}(\zeta_i)}{\theta_4^{N-1}(0)}\right].
    \label{VDM-bis}
\end{align}
%
Another natural choice (that we shall use in Section~\ref{sec-SPcomp}), is to consider a basis of $\Xi_N$ given by functions of the form
\begin{equation}\label{Th-W}
   \Theta_{N,W}^{(j)}(\lambda)=\prod_{\substack{ k=1 \\ k\not= j}}^N \thsym(\lambda,w_k)
   =\frac{W(\lambda)}{\thsym(\lambda,w_j)},\quad j=1,\ldots,N,
   \qquad \text{with} \quad
   W(\lambda)=\prod_{j=1}^N \thsym(\lambda,w_j),
\end{equation}
for some set of parameters $w_1,\ldots,w_N$ such that $\pm w_j$, $1\le j\le N$, are pairwise distinct modulo $\pi$ and $\pi\omega$.
\end{rem}

\subsection{Slavnov's type determinant representation}

We now consider functions $P$ and $Q$ of the form
\begin{equation}\label{form-PQ}
   P(\lambda)=\prod_{j=1}^{M}\thsym(\lambda,p_j),
   \qquad
   Q(\lambda)=\prod_{j=1}^{M'}\thsym(\lambda,q_j),
\end{equation}
for some sets of roots $p_1,\ldots,p_{M}$ and $q_1,\ldots, q_{M'}$, and we more particularly focus on special cases of interest in which the constraint \eqref{const-TQ} is satisfied.
Our aim is to transform the determinant representation \eqref{SP-2} into some more convenient form for the computation of physical quantities such as correlation functions, similarly as what has been done in the open XXX and XXZ cases \cite{KitMNT17,KitMNT18}.

More precisely, we focus here on the two following special cases which are compatible with the constraint \eqref{const-TQ} and the rewriting of eigenstates in terms of functions of the form \eqref{form-PQ}:
\begin{enumerate}
   \item $\boldsymbol{\varepsilon}=\boldsymbol{\varepsilon'}$, and $M=M'$, with the constraint \eqref{const-TQ} being satisfied. 
   \item $\boldsymbol{\varepsilon'}=-\boldsymbol{\varepsilon}$, and $M'+M=N-1$, so that, in particular, if the constraint \eqref{const-TQ} is satisfied for $(M, \boldsymbol{\varepsilon})$, it is still satisfied for $(M', \boldsymbol{\varepsilon'})$.
\end{enumerate}

In the first case, we obtain the following 
determinant representation for the scalar product \eqref{SP-2}, which is similar to the representations that have been obtained in the open XXX and XXZ cases:

\begin{theorem}\label{th-det-Slavnov}
Let us suppose that the boundary parameters satisfy the constraint \eqref{const-TQ} for a given $\boldsymbol{\varepsilon}$ and a given $M$, and let  $\{q_{1},\ldots ,q_{M}\}$ be a solution of the Bethe equations \eqref{Bethe-1}, implying that  $Q(\lambda )$ of the form \eqref{Q-form} is a solution of the homogeneous TQ-equation \eqref{hom-TQ} and that the associated left and right separate states $\ket{Q}_{\boldsymbol{\varepsilon}}$ and $ {}_{\boldsymbol{\varepsilon}}\bra{Q}$ are transfer matrix eigenstates with eigenvalue $\tau_Q$. 
Let $\{p_1,\ldots,p_M\}$ be arbitrary parameters, and let $ {}_{\boldsymbol{\varepsilon}}\bra{P}$ be the left separate state of the form \eqref{l-separate} associated with the elliptic polynomial \eqref{form-PQ} of roots $p_1,\ldots,p_M$.
Then
\begin{align}\label{SP-Slavnov-ratio}
   \frac{{}_{\boldsymbol{\varepsilon}}\moy{ P\, |\, Q}_{\boldsymbol{\varepsilon}}}
          {{}_{\boldsymbol{\varepsilon}}\moy{ Q\, |\, Q}_{\boldsymbol{\varepsilon}}}
  =\frac{{}_{\boldsymbol{\varepsilon}}\moy{ Q\, |\, P}_{\boldsymbol{\varepsilon}}}
          {{}_{\boldsymbol{\varepsilon}}\moy{ Q\, |\, Q}_{\boldsymbol{\varepsilon}}}
  = \frac{V(q_1,\ldots,q_M)}{V(p_1,\ldots,p_M)}\prod_{j=1}^M\frac{\thsym(2 q_j,\eta)}{\thsym(2p_j,\eta)}\,
  \frac{\det_M\left[ \mathcal{S}_Q(\mathbf{p},\mathbf{q})\right]}{\det_M\left[ \mathcal{S}_Q(\mathbf{q},\mathbf{q})\right]}.
\end{align}
The elements of the $M\times M$ matrix $\mathcal{S}_Q(\mathbf{p},\mathbf{q})$ are given in terms of the $M$-tuples $\mathbf{p}=(p_1,\ldots,p_M)$ and $\mathbf{q}=(q_1,\ldots,q_M)$ as
\begin{align}\label{mat-S}
  \left[\mathcal{S}_Q(\mathbf{p},\mathbf{q})\right]_{i,j}
  &= Q(p_i)\,\frac{\partial \tau_Q(p_i)}{\partial q_j} \nonumber\\
  &= \mathbf{A}_{\boldsymbol{\varepsilon}}(p_i)\, Q(p_i-\eta) \big[ t(p_i+q_j-\eta/2)-t(p_i-q_j-\eta/2)\big]
  \nonumber\\
  &\qquad
  -\mathbf{A}_{\boldsymbol{\varepsilon}}(-p_i)\, Q(p_i+\eta)\big[ t(p_i+q_j+\eta/2)-t(p_i-q_j+\eta/2)\big],
\end{align}
in which we have defined
\begin{equation}\label{def-t}
   t(\lambda)=\frac{\ths'(\lambda-\frac\eta 2)}{\ths(\lambda-\frac\eta 2)}-\frac{\ths'(\lambda+\frac\eta 2)}{\ths(\lambda+\frac\eta 2)}.
\end{equation}
In the limit $p_i\to q_i$, $1\le i\le M$, the elements of the matrix \eqref{mat-S} become 
\begin{align}\label{mat-Gaudin}
  \left[\mathcal{S}_Q(\mathbf{q},\mathbf{q})\right]_{i,j}  
  &=  \mathbf{A}_{\boldsymbol{\varepsilon}}(-q_i)\, Q(q_i+\eta)
  \nonumber\\
  &\quad  
     \times
     \bigg[-\delta_{j,k}\,\frac\partial{\partial\mu}\left(\log\frac{\mathbf{A}_{\boldsymbol{\varepsilon}}(\mu)\, Q(\mu-\eta)}{\mathbf{A}_{\boldsymbol{\varepsilon}}(-\mu)\, Q(\mu+\eta)}\right)_{\mu=q_i}
     + K(q_i-q_j)-K(q_i+q_j)\bigg],
\end{align}
in which we have defined
\begin{equation} \label{def-K}
   K(\lambda)= t(\lambda+\eta/2)+t(\lambda-\eta/2)=\frac{\ths'(\lambda-\eta)}{\ths(\lambda-\eta)}-\frac{\ths'(\lambda+\eta)}{\ths(\lambda+\eta)}.
\end{equation}
\end{theorem}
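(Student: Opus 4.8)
\textbf{Proof strategy for Theorem~\ref{th-det-Slavnov}.}
The plan is to start from the SoV determinant representation \eqref{SP-2} specialised to $\boldsymbol{\varepsilon}=\boldsymbol{\varepsilon'}$, $M=M'$, and to transform the $N\times N$ determinant in the numerator into an $M\times M$ determinant, using the fact that $\{q_1,\ldots,q_M\}$ solves the Bethe equations \eqref{Bethe-1}. First I would choose the basis $\Theta_{N,W}^{(j)}$ of \eqref{Th-W} with the special choice of nodes $w_k$ adapted to the problem: I expect the right choice is $w_k=q_k$ for $k=1,\ldots,M$, completed by $N-M$ auxiliary points, so that $W(\lambda)=Q(\lambda)\,\widetilde W(\lambda)$ with $\widetilde W$ supported on the auxiliary points. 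The denominator $\det_{i,j}[\Theta_{N,W}^{(j)}(\xi_i)]$ is then proportional to $V(\xi_1,\ldots,\xi_N)$ by Lemma~\ref{lem-det-VDM}, so it only contributes an overall constant. The key is to analyse the numerator matrix
\[
   M_{ij}=P(\xi_i^{(0)})\,Q(\xi_i^{(0)})\,\Theta_{N,W}^{(j)}(\xi_i^{(1)})-\frac{\mathsc{a}_{\boldsymbol{\varepsilon}}(\tfrac\eta2+\xi_i)}{\mathsc{a}_{\boldsymbol{\varepsilon}}(\tfrac\eta2-\xi_i)}\,P(\xi_i^{(1)})\,Q(\xi_i^{(1)})\,\Theta_{N,W}^{(j)}(\xi_i^{(0)}).
\]

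Next I would use the $TQ$-equation \eqref{hom-TQ} satisfied by $(\tau_Q,Q)$. Evaluating it at $\xi_i+\tfrac\eta2$ and $\xi_i-\tfrac\eta2$ and combining with the identity \eqref{T-Qdis} rewrites the ratio $\frac{\mathsc{a}_{\boldsymbol{\varepsilon}}(\frac\eta2+\xi_i)}{\mathsc{a}_{\boldsymbol{\varepsilon}}(\frac\eta2-\xi_i)}\,\frac{Q(\xi_i^{(1)})}{Q(\xi_i^{(0)})}$ purely in terms of $\mathbf A_{\boldsymbol{\varepsilon}}(\pm\xi_i)$, $Q(\xi_i\pm\eta)$, and the decomposition $\mathbf A_{\boldsymbol{\varepsilon}}(\lambda)=(-1)^N\frac{\ths(2\lambda+\eta)}{\ths(2\lambda)}\mathsc{a}_{\boldsymbol{\varepsilon}}(\lambda)a(\lambda)d(-\lambda)$. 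I would then factor $Q(\xi_i^{(0)})$ (or the relevant theta products) out of row $i$ and perform column operations: the columns labelled by the $N-M$ auxiliary nodes can be eliminated against suitable combinations, exploiting the zeros of $\Theta_{N,W}^{(j)}$ and of $a,d$ at the inhomogeneities, reducing the rank-$N$ determinant to an $M\times M$ determinant whose $(i,j)$ entry involves only $q_j$-dependent pieces. At this point the entries should be recognisable, after using the partial-fraction/addition identities for theta functions, as $Q(p_i)\,\partial_{q_j}\tau_Q(p_i)$ once one notes that differentiating the $TQ$-relation $\tau_Q(\lambda)Q(\lambda)=\mathbf A_{\boldsymbol\varepsilon}(\lambda)Q(\lambda-\eta)+\mathbf A_{\boldsymbol\varepsilon}(-\lambda)Q(\lambda+\eta)$ with respect to $q_j$ and using $\partial_{q_j}\log Q(\lambda)=t(\lambda-q_j+\eta/2)-t(\lambda+q_j-\eta/2)$ (up to sign conventions of \eqref{def-t}) produces exactly \eqref{mat-S}. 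The second equality in \eqref{mat-S}, namely $Q(p_i)\partial_{q_j}\tau_Q(p_i)=\mathbf A_{\boldsymbol\varepsilon}(p_i)Q(p_i-\eta)[t(p_i+q_j-\eta/2)-t(p_i-q_j-\eta/2)]-\mathbf A_{\boldsymbol\varepsilon}(-p_i)Q(p_i+\eta)[t(p_i+q_j+\eta/2)-t(p_i-q_j+\eta/2)]$, then follows by direct differentiation.

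The identification of the prefactor $\frac{V(q_1,\ldots,q_M)}{V(p_1,\ldots,p_M)}\prod_{j=1}^M\frac{\thsym(2q_j,\eta)}{\thsym(2p_j,\eta)}$ comes from carefully tracking the theta-function factors extracted in the row and column operations: the $V(\mathbf p)/V(\mathbf q)$-type factors arise from the Vandermonde-like structure of $Q$ and $P$ evaluated at the $\xi_i^{(h_i)}$, and the $\thsym(2q_j,\eta)/\thsym(2p_j,\eta)$ factors from the $\frac{\ths(2\lambda\pm\eta)}{\ths(2\lambda)}$ pieces in $\mathbf A_{\boldsymbol\varepsilon}$. Finally, for the Gaudin-type limit \eqref{mat-Gaudin}, I would take $p_i\to q_i$ in \eqref{mat-S}: the diagonal terms produce the logarithmic derivative of $\mathbf A_{\boldsymbol\varepsilon}(\mu)Q(\mu-\eta)/(\mathbf A_{\boldsymbol\varepsilon}(-\mu)Q(\mu+\eta))$ via l'Hôpital applied to the Bethe equation, and the off-diagonal terms combine $t(\lambda+\eta/2)+t(\lambda-\eta/2)$ into the kernel $K$ of \eqref{def-K}; the ratio $\det_M[\mathcal S_Q(\mathbf q,\mathbf q)]$ then gives the norm $ {}_{\boldsymbol\varepsilon}\moy{Q|Q}_{\boldsymbol\varepsilon}$ up to the same prefactor. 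The main obstacle I anticipate is the reduction from the $N\times N$ SoV determinant to the $M\times M$ Slavnov-type determinant: choosing the auxiliary nodes $w_k$ correctly and carrying out the column eliminations so that the rank collapses cleanly — in particular showing that the contribution of the $N-M$ auxiliary columns is exactly a nonzero constant times $V(\xi_1,\ldots,\xi_N)$ times the $M\times M$ minor — is the delicate combinatorial heart of the argument; this is precisely where the approach "freely inspired from \cite{PeiT21}" replaces the XXX/XXZ method, and where the elliptic theta-function identities must be used with care.
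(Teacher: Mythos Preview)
Your proposal identifies the correct starting point (the SoV representation \eqref{SP-2}) and the correct target (the Jacobian matrix \eqref{mat-S}), but the central mechanism is missing, and in two respects your outline diverges from what actually works.

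First, the choice of interpolation nodes: the paper takes $W(\lambda)=P(\lambda)Q(\lambda)$, i.e.\ $(w_1,\ldots,w_N)=(p_1,\ldots,p_M,q_1,\ldots,q_M)$, not $W=Q\widetilde W$ with auxiliary points independent of $P$. This choice turns the numerator of \eqref{SP-2} into the determinant of the matrix $\mathcal{M}_{i,j}=\mathsf{M}_{\{a\}}(\xi_i,w_j)$ with
\[
\mathsf{M}_{\{a\}}(u,v)=\sum_{\epsilon=\pm}\epsilon\,\frac{\prod_{n=1}^{6}\ths(u+\epsilon a_n)}{\thsym(u+\epsilon\tfrac{\eta}{2},v)},
\]
a generalised elliptic Filali kernel. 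Your choice $W=Q\widetilde W$ would not produce this structure.

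Second, and more seriously, the reduction from size $N$ to size $M$ is \emph{not} achieved by column operations on $\mathcal{M}$ or by inserting the $TQ$-equation into the rows. The key step --- the one you flag as ``the delicate combinatorial heart'' but do not supply --- is to left-multiply $\mathcal{M}$ by a separately constructed matrix $\mathcal{X}$, built from a \emph{different} basis $\bar\Theta_{2M}^{(j)}$ of $\Xi_{2M}$ whose elements are of the form $Q(\lambda+\tfrac\eta2)Q(\lambda-\tfrac\eta2)\thsym(\lambda,\gamma)/[\thsym(\lambda,z_j)\thsym(\lambda,\tilde z_j)]$ with poles at $q_j\pm\tfrac\eta2$ and at an auxiliary point $\gamma$. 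Each entry of $\mathcal{X}\cdot\mathcal{M}$ is then a sum over $k$ of residues at $\xi_k$ of an explicit function $\mathsf{F}_{i,j}(\lambda)$; under the constraint \eqref{const-TQ} this function is an \emph{odd elliptic function}, so the residue theorem replaces the sum over the $\xi_k$ by (minus) the sum over the remaining poles at $p_j\pm\tfrac\eta2$, $q_i\pm\tfrac\eta2$, $\gamma$. This is what eliminates the inhomogeneities from the answer --- not any column elimination. The product $\mathcal{X}\cdot\mathcal{M}$ then has a $2\times 2$ block form in which the off-diagonal block $\mathcal{G}^{(1,2)}$ is diagonal with entries proportional to the left-hand side of the Bethe equations \eqref{Bethe-2}; it vanishes on-shell, factorising the determinant as $\det\mathcal{G}^{(1,1)}\cdot\det\mathcal{G}^{(2,2)}$ with $\mathcal{G}^{(2,2)}$ independent of $P$. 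The block $\mathcal{G}^{(1,1)}$ is then identified with $\mathcal{S}_Q(\mathbf{p},\mathbf{q})$ via the theta identity
\[
\frac{\ths'(u-v)}{\ths(u-v)}-\frac{\ths'(u-v\pm\eta)}{\ths(u-v\pm\eta)}-\frac{\ths'(u+v)}{\ths(u+v)}+\frac{\ths'(u+v\pm\eta)}{\ths(u+v\pm\eta)}
=\frac{\ths(2u\pm\eta)\,\ths(2v)\,\ths(\pm\eta)\,\ths'(0)}{\thsym(u,v)\,\thsym(u\pm\eta,v)}.
\]
The cases $2M\neq N$ require an additional preliminary step (extending or completing the set of inhomogeneities via a telescoping lemma) before this residue argument applies. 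Your plan as stated does not contain the residue mechanism, and the vaguer ``column operations exploiting zeros of $\Theta_{N,W}^{(j)}$ and of $a,d$'' will not by itself remove the $\xi$-dependence.
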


\begin{rem}
We have chosen to present here directly a formula for the scalar product ${}_{\boldsymbol{\varepsilon}}\moy{ P\, |\, Q}_{\boldsymbol{\varepsilon}}$ normalised by the "square of the norm"\footnote{The terminology "square of the norm" is of course abusive since, from our construction of the left/right eigenstates, the quantity ${}_{\boldsymbol{\varepsilon}}\moy{ Q\, |\, Q}_{\boldsymbol{\varepsilon}}$ has no reason to be positive.}  ${}_{\boldsymbol{\varepsilon}}\moy{ Q\, |\, Q}_{\boldsymbol{\varepsilon}}$ since this is the typical kind of ratios that one encounters when computing correlation functions. In other words, this means that the non-renormalised scalar product ${}_{\boldsymbol{\varepsilon}}\moy{ P\, |\, Q}_{\boldsymbol{\varepsilon}}$ is expressed as
\begin{equation}
   {}_{\boldsymbol{\varepsilon}}\moy{ P\, |\, Q}_{\boldsymbol{\varepsilon}}
   =\frac{\mathsf{c}_Q \,\det_M\big[\mathcal{S}_Q(\mathbf{p},\mathbf{q})\big]}{V(p_1,\ldots,p_M)\prod_{j=1}^M\thsym(2p_j,\eta)},
\end{equation}
with $\mathsf{c}_Q$ being a normalisation coefficient which does not depend on $P$. This coefficient $\mathsf{c}_Q$  is computable, but it has a quite complicated expression (see for instance \eqref{c_Q} in the simplest case $M=N/2$), and its knowledge is not necessary for physical applications such as the computation of the correlation functions.
\end{rem}

In the second case, we obtain the following orthogonality property of the two different sectors $(M,\boldsymbol{\varepsilon})$ and $(M',\boldsymbol{\varepsilon'})$ for $\boldsymbol{\varepsilon'}=-\boldsymbol{\varepsilon}$ and $M'=N-M-1$:

\begin{theorem}\label{th-ortho}
Let ${}_{\boldsymbol{\varepsilon}}\bra{P}$ and $\ket{Q}_{\boldsymbol{\varepsilon'}}$ be two arbitrary separate states of the form \eqref{l-separate} and \eqref{r-separate}, with $P$ and $Q$ being functions of the form \eqref{form-PQ}. We moreover suppose that  $\boldsymbol{\varepsilon'}=-\boldsymbol{\varepsilon}$ and $M^{\prime }+M=N-1$.
Then, the scalar product of ${}_{\boldsymbol{\varepsilon}}\bra{P}$ and $\ket{Q}_{\boldsymbol{\varepsilon'}}$ vanishes:
\begin{equation}
_{\boldsymbol{\varepsilon}}\moy{ P\, |\, Q}_{\boldsymbol{-\varepsilon}}=0.
\end{equation}
\end{theorem}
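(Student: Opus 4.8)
The plan is to start from the universal SoV determinant representation \eqref{SP-2} and show that, under the hypothesis $\boldsymbol{\varepsilon'}=-\boldsymbol{\varepsilon}$ with $M'+M=N-1$, the numerator determinant vanishes identically. The key observation is that the entries of the $N\times N$ matrix in \eqref{SP-2} are built out of the function
\begin{equation*}
   F(\lambda)=-\frac{\mathsc{a}_{\boldsymbol{\varepsilon}}(\tfrac\eta 2+\lambda)}{\mathsc{a}_{-\boldsymbol{\varepsilon}}(\tfrac\eta 2-\lambda)}\,\frac{P(\lambda^{(1)})\,Q(\lambda^{(1)})}{P(\lambda^{(0)})\,Q(\lambda^{(0)})},
\end{equation*}
evaluated at $\lambda=\xi_i$, and that with $\boldsymbol{\varepsilon'}=-\boldsymbol{\varepsilon}$ one has $\mathsc{a}_{\boldsymbol{\varepsilon'}}(\lambda)=\mathsc{a}_{-\boldsymbol{\varepsilon}}(\lambda)$ which, from the explicit product form \eqref{def-a_eps}, satisfies a multiplicative relation to $\mathsc{a}_{\boldsymbol{\varepsilon}}$. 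First I would work out, using \eqref{def-a_eps} and the quasi-periodicity of $\ths$, the explicit shift relation expressing $\mathsc{a}_{\boldsymbol{\varepsilon}}(\tfrac\eta2+\lambda)\,\mathsc{a}_{-\boldsymbol{\varepsilon}}(\tfrac\eta2-\lambda)$ (and the analogous combination for $P$, $Q$) so as to rewrite the $i$-th row of the numerator matrix, up to a row-dependent scalar prefactor, as the value at $\xi_i^{(0)}$ and $\xi_i^{(1)}$ of a \emph{single} even elliptic polynomial combination.

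\textbf{Main step.} The heart of the argument is to exhibit the numerator determinant as (a nonzero row/column rescaling of) the determinant of a matrix whose $j$-th column is $\bigl[\widehat\Theta^{(j)}(\xi_i^{(1)})\,P(\xi_i^{(0)})Q(\xi_i^{(0)})-\widehat\Theta^{(j)}(\xi_i^{(0)})\,\tilde P(\xi_i^{(1)})\tilde Q(\xi_i^{(1)})\bigr]_{1\le i\le N}$ for a suitable basis $\{\widehat\Theta^{(j)}\}$ of $\Xi_N$, and then to recognise, via the identity \eqref{ratio-det} read backwards, that
\begin{equation*}
   {}_{\boldsymbol{\varepsilon}}\moy{P\,|\,Q}_{-\boldsymbol{\varepsilon}}
   \;\propto\; \sum_{\mathbf h\in\{0,1\}^N}\prod_{n=1}^N F(\xi_n)^{h_n}\,
   \frac{V(\xi_1^{(1-h_1)},\ldots,\xi_N^{(1-h_N)})}{V(\xi_1,\ldots,\xi_N)}.
\end{equation*}
The point is that, after the rewriting above, $\prod_n F(\xi_n)^{h_n}\,V(\xi_1^{(1-h_1)},\ldots,\xi_N^{(1-h_N)})$ becomes, up to an $\mathbf h$-independent factor, equal to $\prod_n G(\xi_n^{(h_n)})\cdot V(\xi_1^{(h_1)},\ldots,\xi_N^{(h_N)})$ where $G$ is an even elliptic polynomial of order $2(M+M')+\text{(contributions from }\mathsc{a}_{\boldsymbol{\varepsilon}}\text{ and the }\ths\text{-factors)}$; under the constraint this total order is $2N-2<2(N-1)+2$, i.e. strictly less than $2(N-1)$ once one accounts for the $V(\xi^{(h)})$ factor. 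Equivalently, the whole sum is the coefficient extracted from a determinant $\det_{1\le i,j\le N}[\Theta_N^{(j)}(\xi_i-\tfrac\eta2)+F(\xi_i)\Theta_N^{(j)}(\xi_i+\tfrac\eta2)]$ with $\{\Theta_N^{(j)}\}$ a basis of $\Xi_N$, and the rescaled entry $\Theta_N^{(j)}(\xi_i^{(0)})+F(\xi_i)\Theta_N^{(j)}(\xi_i^{(1)})$ is proportional to the value at $\xi_i$ of an even elliptic polynomial that, thanks to $M+M'=N-1$ and the constraint \eqref{const-TQ}, lies in $\Xi_N$ \emph{itself}; hence all $N$ such functions span at most an $N$-dimensional space, but the $P$-part and $Q$-part conspire to make them linearly dependent — more precisely, each $i$-th row lies in the $(N-1)$-dimensional span dictated by the common factor, forcing the determinant to vanish. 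I would make this precise by counting: the function $\mu\mapsto \Theta_N^{(j)}(\mu^{(0)})P(\mu^{(0)})Q(\mu^{(0)}) - \tfrac{\mathsc{a}_{\boldsymbol{\varepsilon}}(\frac\eta2+\mu)}{\mathsc{a}_{-\boldsymbol{\varepsilon}}(\frac\eta2-\mu)}\Theta_N^{(j)}(\mu^{(1)})P(\mu^{(1)})Q(\mu^{(1)})$ is, after clearing the $\mathsc{a}$-denominator by the $\mathbf h$-independent normalisation, an even elliptic polynomial in $\mu$ whose order I would compute from \eqref{def-a_eps}, \eqref{form-PQ} and the constraint \eqref{const-TQ}; the crucial outcome is that this order equals $2(N-1)$, so these $N$ functions all live in the $N$-dimensional space $\Xi_N$ — and, because the leading structure is the same for all $j$ up to the single factor $\Theta_N^{(j)}$, their span drops to dimension $N-1$, hence the $N\times N$ determinant is zero.

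\textbf{Expected obstacle.} The delicate point is the bookkeeping of quasi-periodicities and the precise order count: one must verify that the combination $\mathsc{a}_{\boldsymbol{\varepsilon}}(\tfrac\eta2+\mu)\,\ths(2\eta-2\mu)\cdots$ together with $P(\mu^{(1)})Q(\mu^{(1)})/(\text{denominators})$ produces, \emph{exactly under the constraint} \eqref{const-TQ} with $M+M'=N-1$, an even elliptic polynomial of order precisely $2(N-1)$ — neither more (which would break ellipticity) nor the same leading coefficient as the $P(\mu^{(0)})Q(\mu^{(0)})$ term times $\Theta_N^{(j)}(\mu^{(1)})$, which is what ultimately collapses the rank. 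Concretely this amounts to checking the quasi-periodicity factor $(-e^{-2i\mu-i\pi\omega})^{2(N-1)}$ on both halves of the entry, using \eqref{quasiper-K}, \eqref{qdet-K}, the $\pi\omega$-shift of $\ths$ and the hypothesis $\prod_{\sigma,i}\epsilon_i^\sigma=1$ together with \eqref{const-TQ}. Once this matches, the vanishing is immediate from $\dim\Xi_N=N$ and the fact that the $N$ rows are evaluations of $N$ members of an $(N-1)$-dimensional subspace; I would organise the final argument exactly as in the proof of the analogous orthogonality statement in \cite{KitMNT18} (or as the degenerate case of Theorem~\ref{th-det-Slavnov}, in which the determinant $\det_M[\mathcal S_Q(\mathbf p,\mathbf q)]$ would have incompatible size $M'\neq M$, signalling a structural vanishing rather than a numerical coincidence).
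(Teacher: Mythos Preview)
Your proposal misses the one-line observation that drives the paper's proof and, as a result, both overcomplicates the argument and introduces an incorrect dependence on the constraint.

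The key fact you fail to notice is that, directly from \eqref{def-a_eps} and the oddness of $\ths_1$,
\[
   \frac{\mathsc{a}_{\boldsymbol{\varepsilon}}(\tfrac\eta2+\xi_n)}{\mathsc{a}_{-\boldsymbol{\varepsilon}}(\tfrac\eta2-\xi_n)}
   =\prod_{\sigma,i}\frac{\ths(\xi_n-\epsilon_i^\sigma\alpha_i^\sigma)/\ths(-\epsilon_i^\sigma\alpha_i^\sigma)}
                         {\ths(-\xi_n+\epsilon_i^\sigma\alpha_i^\sigma)/\ths(\epsilon_i^\sigma\alpha_i^\sigma)}
   =1.
\]
No quasi-periodicity bookkeeping, no constraint \eqref{const-TQ}, no hypothesis $\prod\epsilon_i^\sigma=1$ is needed: the boundary parameters simply drop out of \eqref{SP-2}. (The paper even remarks explicitly that Theorem~\ref{th-ortho} holds for completely general boundary conditions.) Your ``expected obstacle'' --- matching quasi-periodicity factors $(-e^{-2i\mu-i\pi\omega})^{2(N-1)}$ using \eqref{const-TQ} --- is therefore chasing a difficulty that does not exist, and your plan to invoke the constraint would make the statement weaker than what is actually true.

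Once the ratio is $1$, the numerator of \eqref{SP-2} has entries $P(\xi_i^{(0)})Q(\xi_i^{(0)})\,\Theta_N^{(j)}(\xi_i^{(1)})-P(\xi_i^{(1)})Q(\xi_i^{(1)})\,\Theta_N^{(j)}(\xi_i^{(0)})$. Since $M+M'=N-1$, the product $PQ$ is itself an element of $\Xi_N$. Choose a basis of $\Xi_N$ of the form \eqref{Th-W} with $W(\lambda)=P(\lambda)Q(\lambda)\,\thsym(\lambda,w_N)$ for any generic $w_N$; then $\Theta_{N,W}^{(N)}=PQ$, and the $N$-th column of the numerator matrix is identically zero. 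That is the whole proof. Your dimension-counting sketch (``the $N$ rows are evaluations of $N$ members of an $(N-1)$-dimensional subspace'') is aimed at a similar conclusion but never gets there: you never identify which subspace, and the order computation you outline is both unnecessary and, as written, not correct without first eliminating the $\mathsc{a}$-ratio.
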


\begin{rem}
The orthogonality property of Theorem \eqref{th-ortho} is valid under completely general boundary conditions, we do not necessarily suppose here the constraint to be verified. However, an important hypothesis is the form of the functions $P$ and $Q$ as even elliptic polynomials as in \eqref{form-PQ}.
\end{rem}

The proof of these results is presented in the next section.

\section{Scalar products of separate states: details of the computations}
\label{sec-SPcomp}

We now explain in this section the different steps of the computations of the scalar products, from the original SoV representation \eqref{SP-2} to the representations obtained in Theorem~\ref{th-det-Slavnov} and Theorem~\ref{th-ortho}.

\subsection{Proof of Theorem~\ref{th-det-Slavnov} in the case $2M=N$}
\label{sec-2M=N}

In this section, as well as in the sections \ref{sec-2M>N} and \ref{sec-2M<N}, we consider the case $\boldsymbol{\varepsilon}=\boldsymbol{\varepsilon'}$, $P$ and $Q$ being of the same order $2M$.
Since $\boldsymbol{\varepsilon}$ is fixed, we shall slightly simplify the notations and denote 
$\{a_i\}_{i=1,\ldots 6}= \{-\epsilon_i^\sigma\alpha_i^\sigma\}_{i=1,2,3 ; \sigma=\pm}$, so that
\begin{align}\label{prod-a}
   \frac{\mathsc{a}_{\boldsymbol{\epsilon}}(\frac\eta 2+\xi_n)}{\mathsc{a}_{\boldsymbol{\epsilon'}}(\frac \eta 2-\xi_n)}
   &=\frac{\mathsc{a}_{\boldsymbol{\epsilon}}(\frac\eta 2+\xi_n)}{\mathsc{a}_{\boldsymbol{\epsilon}}(\frac \eta 2-\xi_n)}=\prod_{\sigma=\pm}\prod_{i=1}^3\frac{\ths(\xi_n-\epsilon_i^\sigma\alpha_i^\sigma)}{\ths(-\xi_n-\epsilon_i^\sigma\alpha_i^\sigma)}
=\prod_{i=1}^{6}\frac{\ths(a_i+\xi_n)}{\ths(a_i-\xi_n)}.
\end{align}
With these notations, the constraint \eqref{const-TQ} enabling one to express the spectrum in terms of the homogeneous $TQ$-equation \eqref{hom-TQ} is written as
\begin{equation}\label{const-a}
  \sum_{\ell=1}^6 a_\ell =(2M+1-N)\eta,
\end{equation}
and the Bethe equations \eqref{Bethe-1} for the roots $q_1,\ldots,q_M$ of $Q$ are, for $1\le i\le M$,
\begin{equation}\label{Bethe-2}
   \sum_{\sigma=\pm}\sigma \prod_{n=1}^6\! \ths(\sigma q_i-\eta/2+a_n)\, a(\sigma q_i)d(-\sigma q_i)\, Q_i(q_i-\sigma \eta)
   =0,
\end{equation}
in which we have used the shortcut notation
\begin{equation}\label{not-Q_i}
  Q_i(\lambda)=\frac{Q(\lambda)}{\thsym(\lambda,q_i)}.
\end{equation}

Let us start in this section by proving Theorem~\ref{th-det-Slavnov} in the simple case $2M=N$. 
Let us first suppose that all the parameters $p_1,\ldots,p_M,q_1,\ldots,q_M,\xi_1,\ldots,\xi_N$ are generic.
Then, by choosing a basis of the form \eqref{Th-W}, with 
\begin{equation}\label{def-W}
   W(\lambda)=P(\lambda)\,Q(\lambda), \qquad \text{i.e.}\quad (w_1,\ldots,w_N)=(p_1,\ldots,p_M,q_1,\ldots,q_M).
\end{equation}
in the expression  \eqref{SP-2} of the scalar product of ${}_{\boldsymbol{\varepsilon}}\bra{P}$ and $\ket{Q}_{\boldsymbol{\varepsilon}} $, we obtain\footnote{Here and in the following we are using the compact notation $(PQ)(x)=P(x)Q(x)$.}
\begin{align}\label{SP-Ize}
  {}_{\boldsymbol{\varepsilon}}\moy{ P\, |\, Q}_{\boldsymbol{\varepsilon}}  
  &=(-1)^N\prod_{n=1}^N \frac{(PQ)(\xi_n-\frac\eta 2)\,(PQ)(\xi_n+\frac\eta 2)}{(PQ)(\xi_n)\prod_{i=1}^6{\ths(\xi_n-a_i)}}   \,
   \frac{\det_{N} \mathcal{M} }{\det_{1\le i,j\le N}\left[  \frac{1}{\thsym(\xi_i,w_j)} \right] },
   \nonumber\\
  &=(-1)^N\prod_{n=1}^N \frac{(PQ)(\xi_n-\frac\eta 2)\,(PQ)(\xi_n+\frac\eta 2)}{\prod_{i=1}^6{\ths(\xi_n-a_i)}}\,
   \frac{\det_{N} \mathcal{M}}{V(w_1,\ldots,w_N)\, V(\xi_N,\ldots,\xi_1)}.
\end{align}
in which we have used the explicit expression for the generalised Cauchy determinant
\begin{align}\label{det-Cauchy}
   \det_{1\le i,j\le N}\left[  \frac{1}{\thsym(\xi_i,w_j)} \right] &=\frac{V(w_1,\ldots,w_N)\, V(\xi_N,\ldots,\xi_1)}{\prod_{i,j=1}^N\thsym(\xi_i,w_j)}.
\end{align}
In \eqref{SP-Ize}, $\mathcal{M}$ is the $N\times N$ matrix with elements
\begin{equation}\label{def-mat-M}
   \mathcal{M}_{i,j}=\mathsf{M}_{\{a\}}(\xi_i,w_j),
\end{equation}
in which 
we have defined the function
\begin{equation}\label{def-funct-M}
   \mathsf{M}_{\{a\}}(u,v)=\sum_{\epsilon=\pm}\epsilon\frac{\prod_{n=1}^{6}\ths(u+\epsilon a_n)}{\thsym(u+\epsilon\frac\eta 2,v)}.
\end{equation}
The determinant of this matrix can be considered as a generalised version of the elliptic Filali determinant \cite{Fil11}, 
dressed here by the product involving the boundary parameters. 

Note that similar representations were obtained in our previous works \cite{KitMNT17,KitMNT18}. In these works concerning the XXX and XXZ open spin chains, the strategy to further transform these scalar product representations was then to find an identity enabling one to exchange the role of the two variables in the expression for the matrix elements (the analog of \eqref{def-funct-M}), and therefore of the two sets of variables $\{\xi\}$ and $\{p\}\cup\{q\}$ in the analog of \eqref{SP-Ize} and \eqref{SP-2}, see equation (D.9) of \cite{KitMNT18}. The resulting determinant representation for the scalar product was then already explicitly regular with respect to the homogeneous limit, and could be further transformed to a determinant of Slavnov's type, see the details in \cite{KitMNT17,KitMNT18}.
The difficulty, for the generalisation of this approach to other models, is precisely that the obtention of the key exchange identity is model-dependent, and may therefore not be possible (at least without involving the whole determinant) if the matrix elements are not symmetric enough.  In fact, in the present most general elliptic case, for which the expression \eqref{def-mat-M}-\eqref{def-funct-M} explicitly involves 6 boundary parameters (instead of 4, as in the open XXZ case considered in \cite{KitMNT18}), we could not find a simple exchange identity such as (D.9) of \cite{KitMNT18}. A direct elliptic analog of the identity (D.9)  of \cite{KitMNT18} can nevertheless be formulated in the case of only 4 boundary parameters related by the constraint, i.e. in the case, under the constraint, in which two of the 6 boundary parameters are opposite of each others and can therefore be simplified from the expression of \eqref{def-mat-M}-\eqref{def-funct-M}. In that special case, the strategy of \cite{KitMNT17,KitMNT18} can effectively be applied\footnote{\label{footnote-n=4}More precisely, if we suppose, with the notations of \eqref{prod-a}, that $a_6=-a_5$,
%
%
then the quantity $\thsym(\xi_i,a_5)$  
can be factorised out of 
the expression of the function $\mathsf{M}_{\{a\}}(\xi_i,w_j)$ \eqref{def-mat-M}. The 4 remaining parameters satisfy the sum rule $\sum_{\ell =1}^{4}a_{\ell }=\eta $. Under this constraint, one can easily show the following identity, which is the direct elliptic analog of the identity (D.9) of \cite{KitMNT18}:
\begin{equation}\label{id-n=4}
  \frac{1}{\ths(2u)}\sum_{\sigma=\pm}\sigma \frac{\prod_{\ell =1}^{4}\ths(u+\sigma a_{\ell })}{\thsym(u+\sigma \frac{\eta }{2},v)}
  =\frac{1}{\ths(2v)}\sum_{\sigma=\pm}\sigma \frac{\prod_{\ell=1}^{4}\ths(v+\sigma(\frac{\eta }{2}-a_{\ell }))}{\thsym(v+\sigma\frac{\eta }{2},u)}.
\end{equation}
This identity enables us to directly exchange the role of the parameters $\xi_i$ and $w_j$ in the expression \eqref{def-funct-M} of the matrix $\mathcal{M}$ \eqref{def-mat-M} when the set $\{a\}$ contains only 4 parameters instead of 6, provided we also exchange $\{a\}$ with $\{\eta/2-a\}$. In other words, it means that the scalar product \eqref{SP-2} can alternatively be expressed (up to an easily computable factor) as the following ratio of determinants:
\begin{equation}\label{ratio-exchanged}
 {}_{\boldsymbol{\varepsilon}}\moy{ P\, |\, Q}_{\boldsymbol{\varepsilon}} 
 \propto \frac{\det_{1\le i,j\le 2M}\left[ \Theta_{2M}^{(j)}(w_i-\frac\eta 2)
 -\prod_{\ell=1}^4\frac{\ths(w_i+\frac\eta 2-a_\ell)}{\ths(w_i-\frac\eta 2-a_\ell)}\prod_{n=1}^N\frac{\thsym(w_i-\frac\eta 2,\xi_n)}{\thsym(w_i+\frac\eta 2,\xi_n)}\,
 \Theta_{2M}^{(j)}(w_i+\frac\eta 2)
 \right]}{\det_{1\le i,j\le 2M}\left[\Theta_{2M}^{(j)}( \xi_i)\right]},
\end{equation}
which does not depend on the choice of the basis $\Theta_{2M}^{(j)}$. We obtain also similar expressions in cases for which $2M\not= N$ by means of Lemma~\ref{lem-SP-lim} and by using similar arguments as in Appendix~D of \cite{KitMNT18}. Finally, such an expression can be transformed into a Slavnov-type determinant of the form \eqref{mat-S} by proceeding as in Appendix~E of \cite{KitMNT18}, up to some little subtleties related to the fact that we have to deal here with theta functions instead of hyperbolic ones. However, it seems that the identity \eqref{id-n=4} cannot easily be generalised to the case of 6 boundary parameters, which explains why we had to proceed differently in the general case.}.

To tackle the most general case that we want to consider here, with the 6 boundary parameters being completely arbitrary except for their relation by the constraint \eqref{const-TQ} (or equivalently \eqref{const-a}), 
we have therefore developed a different approach. The idea is here to pass quasi-directly from the representation \eqref{SP-Ize} to the Slanov-type determinant by multiplying the matrix $\mathcal{M}$ of elements $\mathsf{M}_{\{a\}}(\xi_i,w_j)$ by a judiciously chosen matrix $\mathcal{X}$. The matrix $\mathcal{X}$ should be chosen so that its columns are labelled by the inhomogeneity parameters, and so that the matrix elements of the product $\mathcal{X}\cdot\mathcal{M}$ can be computed in a simple way by Cauchy's residue theorem, and simplifies when the Bethe equations are satisfied. This approach is freely inspired from the one used in \cite{PeiT21} for the antiperiodic XXZ model, the formulas of \cite{PeiT21} being however quite different from those of the present case.

To this aim, let us introduce, for a generic parameter $\gamma$, the following basis  $\{\bar \Theta_{2M}^{(j)}\}_{1\le j\le 2M}$ of $\Xi_{2M}\equiv \Xi_N$, defined by
\begin{equation}\label{basis-Q-t}
  \bar \Theta_{2M}^{(j)}(\lambda) \equiv \bar \Theta_{2M,Q,\gamma}^{(j)}(\lambda)
  =\frac{Q(\lambda+\frac\eta 2)\, Q(\lambda-\frac\eta 2)\thsym(\lambda,\gamma)}{\thsym(\lambda,z_j)\thsym(\lambda,\tilde z_j)},
  \qquad 1\le j \le 2M,
\end{equation}
with
\begin{equation}\label{def-z}
  (z_j,\tilde z_j)=\begin{cases} (q_j+\frac\eta 2,q_j-\frac\eta 2) &\text{if }\ j\le M,\\  (q_{j-M}+\frac\eta 2,\gamma) &\text{if }\ j > M, \end{cases}
\end{equation}
and let us consider the matrix $\mathcal X\equiv \mathcal X_{Q,\gamma}$ of elements
\begin{equation}\label{mat-X-N}
  \mathcal X_{i,k}\equiv \left[\mathcal X_{Q,\gamma}\right]_{i,k}=\frac{\bar \Theta_{2M}^{(i)}(\xi_k)}{\thsym(\xi_k,\gamma)\ths(2\xi_k)\prod_{n\not= k}\thsym(\xi_k,\xi_n)},
  \qquad 1\le i,k\le 2M.
\end{equation}
The determinant of $\mathcal X$ is nonzero and can easily be computed\footnote{The determinant $\det_{1\le i,k\le N}[ \bar \Theta_{N}^{(i)}(x_k)]$, for arbitrary values of $x_1,\ldots,x_N$, is given up to a constant by Lemma~\ref{lem-det-VDM}. The constant can then be fixed by choosing appropriate values for the $x_i$: for $(x_1,\ldots,x_N)=(q_1-\frac\eta 2,\ldots,q_M-\frac\eta 2,q_1+\frac\eta 2,\ldots,q_M+\frac\eta 2)$ the matrix becomes triangular and the determinant is given by the product of its diagonal elements. The explicit expression for the determinant of $\mathcal X$ follows easily.}.
Note however that  its exact value is unimportant for the proof of Theorem~\ref{th-det-Slavnov}, since it does not depend on $P$ and therefore simplifies in the normalised expression \eqref{SP-Slavnov-ratio}, as we shall see later.

The elements of matrix $\mathcal X\cdot\mathcal{M}$ are then given by
\begin{align}\label{prod-XM}
  \left[ \mathcal X\cdot\mathcal{M} \right]_{i,j}
  &=\sum_{k=1}^{2M}  \mathcal X_{i,k}\cdot \mathcal{M}_{k,j}
  =\sum_{k=1}^{2M} \Res(\mathsf{F}_{i,j};\xi_k),
\end{align}
in which we have defined the following functions:
\begin{align}\label{funct-I}
   \mathsf{F}_{i,j}(\lambda) &\equiv  \mathsf{F}_{Q,\{a\},\{\xi\}}(\lambda; z_i,\tilde z_i, w_j) \nonumber\\
   &= \frac{\ths'(0)\, Q(\lambda+\frac\eta 2)\, Q(\lambda-\frac\eta 2)}{\thsym(\lambda,z_i)\thsym(\lambda,\tilde z_i)\prod_{n=1}^{2M}\thsym(\lambda,\xi_n)}
   \sum_{\epsilon=\pm}\epsilon\frac{\prod_{n=1}^6\ths(\lambda+\epsilon a_n)}{\thsym(\lambda+\epsilon\frac\eta 2,w_j)}.
\end{align}
Note that, under the constraint
\begin{equation}\label{const-a-N=2M}
  \sum_{\ell=1}^6 a_\ell =\eta,
\end{equation}
which corresponds to \eqref{const-a} for $N=2M$, and
for all $1\le i,j\le N$, the function $\mathsf{F}_{i,j}$ \eqref{funct-I} satisfies the properties
\begin{align}\label{prop-funct-I}
   &\mathsf{F}_{i,j}(-\lambda)=-\mathsf{F}_{i,j}(\lambda),
   \qquad \mathsf{F}_{i,j}(\lambda+\pi)=\mathsf{F}_{i,j}(\lambda),
   \qquad \mathsf{F}_{i,j}(\lambda+\pi\omega)=\mathsf{F}_{i,j}(\lambda),
\end{align}
so that it is an odd elliptic function of $\lambda$. Hence, the sum of the residues at its poles in an elementary cell vanishes. Moreover, if $\mathsf{F}_{i,j}$ has a pole at some point $x$, it has also a pole at $-x$, and
\begin{equation}\label{res-funct-I}
   \Res(\mathsf{F}_{i,j}; -x)=\Res(\mathsf{F}_{i,j}; x).
\end{equation}
These properties enables one to compute the matrix elements in \eqref{prod-XM}, given by the sum over the residues of $\mathsf{F}_{i,j}$ at $\xi_k$, $1\le k\le 2M$,  in terms of the sums over the residues at the other poles of $\mathsf{F}_{i,j}$.
This leads to the following block matrix for the product $\mathcal X\cdot\mathcal{M} $:
\begin{equation}\label{mat-bloc}
   \mathcal X\cdot\mathcal{M} =\begin{pmatrix}
   \mathcal{G}^{(1,1)} & \mathcal{G}^{(1,2)} \\
   \mathcal{G}^{(2,1)} & \mathcal{G}^{(2,2)}
   \end{pmatrix},
\end{equation}
with $\mathcal{G}^{(1,1)}$, $\mathcal{G}^{(1,2)}$, $\mathcal{G}^{(2,1)}$ and $\mathcal{G}^{(2,2)}$ being blocks of size $M\times M$. 
Explicitly, we obtain, for $1\le i,j\le M$:
\begin{align}\label{mat-G11}
\left[\mathcal{G}^{(1,1)}\right]_{i,j} &= \sum_{k=1}^{2M} \Res(\mathsf{F}_{i,j};\xi_k)
=-\Res(\mathsf{F}_{i,j};p_j+\eta/2)-\Res(\mathsf{F}_{i,j}; p_j-\eta/2)
\nonumber\\
&=\frac{Q_i(p_j)}{\ths(2p_j)}\sum_{\sigma=\pm}\frac{\sigma\, Q_i(p_j+\sigma\eta)\prod_{n=1}^6\ths(p_j+\sigma\frac \eta 2-\sigma a_n)}{a(\sigma p_j)d(-\sigma p_j)},
\end{align}
\begin{align}\label{mat-G12}
\left[\mathcal{G}^{(1,2)}\right]_{i,j} &= \sum_{k=1}^{2M} \Res(\mathsf{F}_{i,j+M};\xi_k)
=-\delta_{i,j}\left[ \Res(\mathsf{F}_{i,j+M};q_i+\eta/2)+\Res(\mathsf{F}_{i,j+M}; q_i-\eta/2)\right] \nonumber\\
&=\delta_{i,j}\, \frac{Q_i(q_i)}{\ths(2q_i)} 
\sum_{\sigma=\pm}\frac{\sigma\, Q_i(q_i+\sigma\eta)\prod_{n=1}^6\ths(q_i+\sigma\frac \eta 2-\sigma a_n)}{a(\sigma q_i)d(-\sigma q_i)},
\end{align}
\begin{align}\label{mat-G21}
\left[\mathcal{G}^{(2,1)}\right]_{i,j} &= \sum_{k=1}^{2M} \Res(\mathsf{F}_{i+M,j};\xi_k)
=-\sum_{\sigma=\pm}\Res(\mathsf{F}_{i+M,j};p_j+\sigma\eta/2)-\Res(\mathsf{F}_{i+M,j}; \gamma)
\nonumber\\
&=\frac{Q(p_j)}{\ths(2p_j)}\sum_{\sigma=\pm}\frac{\sigma\, Q(p_j+\sigma\eta)\prod_{n=1}^6\ths(p_j+\sigma\frac \eta 2-\sigma a_n)}{\thsym(p_j+\sigma\frac\eta 2,q_i+\frac\eta 2)\thsym(p_j+\sigma\frac\eta 2,\gamma)\, a(\sigma p_j)d(-\sigma p_j)}
\nonumber\\
&\hspace{7cm}
-\mathsf{X}_\gamma(q_i+\eta /2)\,\mathsf{M}_{\{a\}}(\gamma,p_j),
\end{align}
\begin{align}\label{mat-G22}
\left[\mathcal{G}^{(2,2)}\right]_{i,j} &= \sum_{k=1}^{2M} \Res(\mathsf{F}_{i+M,j+M};\xi_k)
=-\delta_{i,j}\, \Res(\mathsf{F}_{i+M,j+M};q_i+\eta/2)-\Res(\mathsf{F}_{i+M,j}; \gamma)
\nonumber\\
&=\delta_{i,j} \frac{Q_i(q_i+\eta)\,Q_i(q_i)}{a( q_i)d(- q_i)}\frac{\ths(\eta)\prod_{n=1}^6\ths( q_i+\frac \eta 2-a_n)}{\thsym(q_i+\frac\eta 2,\gamma)}
-\mathsf{X}_\gamma(q_i+\eta /2)\,\mathsf{M}_{\{a\}}(\gamma,q_j),
\end{align}
in which we have used the shortcut notation
\begin{equation}\label{X-gamma}
   \mathsf{X}_\gamma(\lambda)=\frac{Q(\gamma+\frac\eta 2)\, Q(\gamma-\frac\eta 2)}{\thsym(\gamma,\lambda)\ths(2\gamma)\prod_{n=1}^{2M}\thsym(\gamma,\xi_n)}.
\end{equation}

Without any particular assumption on the parameters, i.e. keeping $q_1,\ldots,q_M$ generic, it is possible to further transform the above determinant of size $2M$ by using, as in \cite{KitMNT16,KitMNT18}, the usual formula for the determinant of block matrices:
\begin{equation}\label{bloc-gen}
  \det_{2M}\left[ \mathcal X\cdot\mathcal{M} \right] 
   =\det_M \mathcal{G}^{(2,2)}\cdot \det_M\left[  \mathcal{G}^{(1,1)}-\mathcal{G}^{(1,2)}\big(\mathcal{G}^{(2,2)}\big)^{-1} \mathcal{G}^{(2,1)} \right].
\end{equation}
Since $\mathcal{G}^{(2,2)}$ is the sum of a diagonal invertible matrix and a rank one matrix, it is possible to compute explicitly its determinant  by the matrix determinant lemma and its inverse by the Sherman–Morrison formula. This leads to some generalised version of a Slavnov-type determinant, however quite complicated (see \cite{KitMNT18} in which such kind of generalised formula was presented in the open XXZ case).

This formula simplifies drastically when we particularise $\{q_1,\ldots,q_M\}$ to be a solution of the Bethe equations \eqref{Bethe-2}: in that case, the expression \eqref{mat-G12} vanishes, so that $\mathcal{G}^{(1,2)}=0$, and therefore
\begin{equation}
  \det_N\left[ \mathcal X\cdot\mathcal{M} \right]=\det_M \big[ \mathcal{G}^{(1,1)} \big]\cdot \det_M\big[ \mathcal{G}^{(2,2)} \big].
\end{equation}
As mentioned just above, the determinant of the matrix $\mathcal{G}^{(2,2)}$ can explicitly be computed. However, its exact value is  unimportant for the computation of the ratio \eqref{SP-Slavnov-ratio}, since $\mathcal{G}^{(2,2)}$ does not depend on $P$.
The elements of the matrix $\mathcal{G}^{(1,1)}$ can be rewritten as
\begin{align}\label{mat-G11-2}
\left[\mathcal{G}^{(1,1)}\right]_{i,j} &=\frac{-Q(p_j)}{ a(p_j)d(-p_j)a(-p_j)d(p_j)}
\sum_{\sigma=\pm}\frac{a(\sigma p_j)d(-\sigma p_j)\prod_{n=1}^6\ths(\sigma p_j-\frac\eta 2+a_n)\, Q_i( p_j-\sigma\eta)}{\ths(2\sigma p_j)\thsym(p_j,q_i)}
\nonumber\\
&=\frac{-Q(p_j)\prod_{n=1}^6\ths(a_n)}{ a(p_j)d(-p_j)a(-p_j)d(p_j)}
\sum_{\sigma=\pm}
\frac{\sigma\mathbf{A}_{\boldsymbol\varepsilon}(\sigma p_j)\, Q( p_j-\sigma\eta)}{\ths(2 p_j+\sigma\eta)\thsym(p_j-\sigma\eta,q_i)\thsym(p_j,q_i)}.
\end{align}
Using the identity
\begin{align}\label{id-der-theta}
   &\frac{\ths'(u-v)}{\ths(u-v)}-\frac{\ths'(u-v\pm \eta)}{\ths(u-v\pm \eta)}-\frac{\ths'(u+v)}{\ths(u+v)}+\frac{\ths'(u+v\pm \eta)}{\ths(u+v\pm \eta)}
   =\frac{\ths(2u\pm\eta)\,\ths(2v)\,\ths(\pm\eta)\,\ths'(0)}{\thsym(u,v)\,\thsym(u\pm\eta,v)},
\end{align}
which can be proven by usual arguments concerning elliptic functions, we obtain that
\begin{equation}
   \left[\mathcal{G}^{(1,1)}\right]_{i,j}
   =\frac{Q(p_j)\prod_{n=1}^6\ths(a_n)}{ a(p_j)d(-p_j)a(-p_j)d(p_j)}
   \frac{ \left[\mathcal{S}_Q(\mathbf{p},\mathbf{q})\right]_{j,i}}{\ths'(0)\ths(\eta)\ths(2q_i)\thsym(2p_j,\eta)},
\end{equation}
in which $ \left[\mathcal{S}_Q(\mathbf{p},\mathbf{q})\right]_{i,j}$ is given by the expression \eqref{mat-S}. Noticing that
\begin{equation}\label{id-ad-P}
  \prod_{j=1}^M \big[ a(p_j)d(-p_j)a(-p_j)d(p_j) \big] =\prod_{n=1}^N \big[ P(\xi_n-\eta/2)P(\xi_n+\eta/2)\big] ,
\end{equation}
we finally obtain
\begin{equation}
  \det_M \big[ \mathcal{G}^{(1,1)} \big] 
  = \prod_{j=1}^M\frac{Q(p_j)\prod_{n=1}^6\ths(a_n)}{\ths'(0)\ths(\eta)\ths(2q_i)\thsym(2p_j,\eta)}
  \frac{\det_M\big[\mathcal{S}_Q(\mathbf{p},\mathbf{q})\big]}{\prod_{n=1}^N P(\xi_n-\frac\eta 2)P(\xi_n+\frac\eta 2)}.
\end{equation}

Therefore, we have proven that
\begin{align}
   {}_{\boldsymbol{\varepsilon}}\moy{ P\, |\, Q}_{\boldsymbol{\varepsilon}}   =  \frac{\mathsf{c}_Q \,\det_M\big[\mathcal{S}_Q(\mathbf{p},\mathbf{q})\big]}{V(p_1,\ldots,p_M)\prod_{j=1}^M\thsym(2p_j,\eta)},
\end{align}
in which
\begin{equation}\label{c_Q}
   \mathsf{c}_Q=\prod_{n=1}^N\frac{Q(\xi_n-\frac\eta 2)Q(\xi_n+\frac\eta 2)}{\prod_{i=1}^6\ths(\xi_n-a_i)}
   \prod_{j=1}^M\frac{\prod_{i=1}^6\ths(a_i)}{\ths'(0)\ths(\eta)\ths(2q_j)}\frac{\det_M\left[\mathcal{G}^{(2,2)}\right]}{V(q_1,\ldots,q_M)\,V(\xi_1,\ldots,\xi_N)\,\det_N{\mathcal X}}
\end{equation}
is a coefficient which does not depend on $P$ (note also that it should be independent of $\gamma$). 
It means that, if $\widetilde P$ is another even elliptic polynomial of the form \eqref{form-PQ} and $\boldsymbol{\tilde p}=(\tilde p_1,\ldots,\tilde p_M)$ is its $M$-tuple of roots,  and if ${}_{\boldsymbol{\varepsilon}}\bra{\widetilde P}$ is the corresponding left separate state built from $\widetilde P$ as in \eqref{l-separate}, we have 
\begin{align}
   {}_{\boldsymbol{\varepsilon}}\moy{\widetilde P\, |\, Q}_{\boldsymbol{\varepsilon}}   =  \frac{\mathsf{c}_Q \,\det_M\big[\mathcal{S}_Q(\mathbf{\tilde p},\mathbf{q})\big]}{V(\tilde p_1,\ldots,\tilde p_M)\prod_{j=1}^M\thsym(2\tilde p_j,\eta)},
\end{align}
with the same coefficient $\mathsf{c}_Q$ \eqref{c_Q}, so that
\begin{equation}
    \frac{{}_{\boldsymbol{\varepsilon}}\moy{ P\, |\, Q}_{\boldsymbol{\varepsilon}}}
          {{}_{\boldsymbol{\varepsilon}}\moy{ \widetilde P\, |\, Q}_{\boldsymbol{\varepsilon}}}
          =  \frac{V(\tilde p_1,\ldots,\tilde p_M)}{V(p_1,\ldots,p_M)}\prod_{j=1}^M\frac{\thsym(2\tilde p_j,\eta)}{\thsym(2p_j,\eta)}\,
          \frac{\det_M\big[\mathcal{S}_Q(\mathbf{p},\mathbf{q})\big]}{\det_M\big[\mathcal{S}_Q(\mathbf{\tilde p},\mathbf{q})\big]}.
\end{equation}
In the limit $\tilde p_j\to q_j$, $1\le j\le M$, we obtain \eqref{SP-Slavnov-ratio}.

%
%
%
%

To conclude this section, let us briefly comment about the case in which 2 of the 6 boundary parameters are opposite from each others, say for instance $a_5=-a_6$. We have already mentioned that, in this special case, it was possible to apply the strategy of \cite{KitMNT17,KitMNT18} to compute the scalar products, see the footnote~\ref{footnote-n=4}. The choice $a_5=-a_6$ also induces some simplifications in the new approach presented in this section: under this hypothesis, we can indeed simply choose $\gamma=a_5$ in \eqref{basis-Q-t}-\eqref{funct-I}, so that the functions \eqref{funct-I} no longer have a pole in $\gamma$. This means that the terms of the form $\mathsf{X}_\gamma(u)\,\mathsf{M}_{\{a\}}(\gamma,v)$ in the expressions \eqref{mat-G21}-\eqref{mat-G22} vanish.
Therefore, in particular, the matrix $\mathcal{G}^{(2,2)}$ \eqref{mat-G22} is simply a diagonal matrix, instead of being the sum of a diagonal matrix and a rank one matrix,  so that the explicit expression of its determinant -- and hence of the constant $\mathsf{c}_Q$  \eqref{c_Q} -- is much simpler. The general formula issued from \eqref{bloc-gen}, in which we do not suppose $q_1,\ldots,q_M$ to be a solution of the Bethe equations, would also be much less cumbersome in that case. However, this does not induce any particular simplification in our main result \eqref{SP-Slavnov-ratio}, once we impose the Bethe equations for $q_1,\ldots,q_M$ and normalise adequately the scalar product.

\subsection{Proof of Theorem~\ref{th-det-Slavnov} in the case $2M>N$}
\label{sec-2M>N}


Let us now explain how one can adapt the computation presented in the last section to the case $2M>N$.
As in  \cite{KitMNT16,KitMNT17,KitMNT18}, we need to change the size of the scalar product determinant representation since the number of roots of $P$ and $Q$ \eqref{form-PQ} is not equal to the number $N$ of inhomogeneity parameters. In the present case, this can be done by means of the following identities.

\begin{lemma}\label{lem-ext-det-1}
Let $x_1,\ldots,x_L$  be  generic parameters. Let $\epsilon=\pm 1$, and let  $r_1,\ldots,r_\ell$ be such that
\begin{equation}\label{cond-r}
   \epsilon r_j\not= \pm r_i, \pm r_i-\eta \mod(\pi,\pi\omega),\qquad \forall i<j.
\end{equation}
Then, for any function $F$ which is regular in $\frac\eta 2+\epsilon r_j$, $1\le j\le \ell$, any basis $\{\Theta _{L}^{(j)}\}_{1\leq j\leq L}$ of $\Xi_L$, any basis $\{\Theta _{L+\ell}^{(j)}\}_{1\leq j\leq L+\ell}$ of $\Xi_{L+\ell}$, we have
\begin{align}\label{ratio-ext-det-1}
&\frac{\det_{1\le i,j\le L}\left[ \Theta_L^{(j)}(x_i-\frac\eta 2)-F(x_i)\, \Theta_L^{(j)}(x_i+\frac\eta 2) \right]}{\det_{1\le i,j\le L}\left[\Theta_L^{(j)}( x_i-\frac\eta 2)\right]}
\nonumber\\
&\quad
=\lim_{x_{L+1}\to \frac\eta 2+\epsilon r_1}\ldots\lim_{x_{L+\ell}\to \frac\eta 2+\epsilon r_\ell}
\frac{\det_{1\le i,j\le L+\ell}\left[ \Theta_{L+\ell}^{(j)}(x_i-\frac\eta 2)-F^{(\ell)}_{\{r\}}(x_i)\, \Theta_{L+\ell}^{(j)}(x_i+\frac\eta 2) \right]}{\det_{1\le i,j\le L+\ell}\left[\Theta_{L+\ell}^{(j)}( x_i-\frac\eta 2)\right]},
\end{align}
in which
\begin{equation}
 F^{(\ell)}_{\{r\}}(\lambda)=F(\lambda)\prod_{j=1}^\ell\frac{\thsym(\lambda-\frac\eta 2,r_j)}{\thsym(\lambda+\frac\eta 2,r_j)}.
\end{equation}
\end{lemma}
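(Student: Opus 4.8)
The plan is to prove the identity by induction on $\ell$, reducing to the single-root case $\ell=1$, and within that case to compare the two determinant ratios by exploiting the independence of the left-hand side from any auxiliary choice and the controlled behaviour of the right-hand side as $x_{L+1}\to\frac\eta2+\epsilon r_1$. First I would observe that both ratios in \eqref{ratio-ext-det-1} are independent of the chosen bases of $\Xi_L$ and $\Xi_{L+\ell}$ (this is already established around \eqref{ratio-det}, since such ratios equal the basis-free sum $\sum_{\mathbf h}\prod_n F(x_n)^{h_n}\,V(x^{(1-h_1)},\ldots)/V(x,\ldots)$ up to the substitution $F\mapsto F^{(\ell)}_{\{r\}}$), so I am free to pick the most convenient bases on each side. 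For the right-hand side I would take a basis $\{\Theta_{L+\ell}^{(j)}\}$ adapted to the points $\frac\eta2+\epsilon r_1,\ldots,\frac\eta2+\epsilon r_\ell$, of the Cauchy type \eqref{Th-W} with $w_j$ chosen among these points and $L$ further generic parameters; this makes the limits $x_{L+i}\to\frac\eta2+\epsilon r_i$ transparent because the corresponding rows and columns decouple.

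The heart of the argument is the case $\ell=1$. Here I would use the basis-free reformulation: the right-hand side, before taking the limit, equals
\[
 \sum_{\mathbf h\in\{0,1\}^{L+1}}\ \prod_{n=1}^{L+1} \big(F^{(1)}_{\{r\}}(x_n)\big)^{h_n}\ \frac{V(x_1^{(1-h_1)},\ldots,x_{L+1}^{(1-h_{L+1})})}{V(x_1,\ldots,x_{L+1})},
\]
with $F^{(1)}_{\{r\}}(\lambda)=F(\lambda)\,\thsym(\lambda-\frac\eta2,r_1)/\thsym(\lambda+\frac\eta2,r_1)$. The key point is that $F^{(1)}_{\{r\}}$ has a zero at $\lambda=\frac\eta2+\epsilon r_1$ coming from the factor $\thsym(\lambda-\frac\eta2,r_1)=\ths(\lambda-\frac\eta2-r_1)\ths(\lambda-\frac\eta2+r_1)$, and the condition \eqref{cond-r} guarantees $F^{(1)}_{\{r\}}$ is still regular there (the pole of the prefactor sits at $\lambda=-\frac\eta2\mp r_1$, away from $\frac\eta2+\epsilon r_1$ by \eqref{cond-r} with $i<j$ interpreted for the added point) and that $\frac\eta2+\epsilon r_1$ is genuinely a new, distinct parameter. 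When $x_{L+1}\to\frac\eta2+\epsilon r_1$, in every term of the above sum the factor $(F^{(1)}_{\{r\}}(x_{L+1}))^{h_{L+1}}$ forces $h_{L+1}=0$ to survive (the $h_{L+1}=1$ terms vanish since $F^{(1)}_{\{r\}}(\frac\eta2+\epsilon r_1)=0$). With $h_{L+1}=0$ fixed, the ratio of $V$'s telescopes: $V(x_1^{(1-h_1)},\ldots,x_L^{(1-h_L)},x_{L+1}+\frac\eta2)/V(x_1,\ldots,x_L,x_{L+1})$ splits, as $x_{L+1}\to\frac\eta2+\epsilon r_1$, into $V(x_1^{(1-h_1)},\ldots,x_L^{(1-h_L)})/V(x_1,\ldots,x_L)$ times a factor $\prod_{n=1}^{L}\thsym(x_n^{(1-h_n)},\frac\eta2+\epsilon r_1+\frac\eta2)/\thsym(x_n,\frac\eta2+\epsilon r_1)$. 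A short computation, using $x_n^{(1-h_n)}=x_n+\frac\eta2-(1-h_n)\eta$ and the explicit form of $F^{(1)}_{\{r\}}$, shows this extra factor exactly converts each $(F^{(1)}_{\{r\}}(x_n))^{h_n}$ back into $(F(x_n))^{h_n}$, so the limiting sum collapses to $\sum_{\mathbf h\in\{0,1\}^L}\prod_{n=1}^L F(x_n)^{h_n}\,V(x_1^{(1-h_1)},\ldots,x_L^{(1-h_L)})/V(x_1,\ldots,x_L)$, which is precisely the left-hand side by \eqref{ratio-det}. This settles $\ell=1$.

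For general $\ell$ I would iterate: apply the $\ell=1$ identity to peel off the root $r_\ell$ (the genericity of $x_1,\ldots,x_L$ is not needed for the newly added points, only \eqref{cond-r}), obtaining the statement with $L$ replaced by $L-1$... more precisely, running the $\ell=1$ step in reverse, I would start from the $L$-sized determinant on the left, introduce $x_{L+1}\to\frac\eta2+\epsilon r_1$ with $F\mapsto F^{(1)}_{\{r_1\}}$, then again introduce $x_{L+2}\to\frac\eta2+\epsilon r_2$ with $F^{(1)}_{\{r_1\}}\mapsto (F^{(1)}_{\{r_1\}})^{(1)}_{\{r_2\}}=F^{(2)}_{\{r_1,r_2\}}$, and so on; condition \eqref{cond-r} with all $i<j$ ensures that at each stage the new point $\frac\eta2+\epsilon r_j$ is distinct from the previously introduced ones and that $F^{(j)}_{\{r\}}$ stays regular there. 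After $\ell$ steps one lands exactly on the right-hand side of \eqref{ratio-ext-det-1}. The main obstacle I anticipate is the bookkeeping in the telescoping step: one must track carefully which rows/columns produce the compensating theta-factors and verify that the product over $n=1,\ldots,L$ of $\thsym(x_n+\frac\eta2,\tfrac\eta2+\epsilon r_1)^{-1}\thsym(x_n-\frac\eta2,\tfrac\eta2+\epsilon r_1)$-type ratios — equivalently $\thsym(x_n^{(1)},\cdot)$ versus $\thsym(x_n^{(0)},\cdot)$ — reproduces precisely the factor $\thsym(x_n-\frac\eta2,r_1)/\thsym(x_n+\frac\eta2,r_1)$ that relates $F^{(1)}_{\{r\}}$ to $F$, and no spurious $x_n$-independent constant survives; this is a routine but slightly delicate identity for $\ths=\ths_1$, established by matching quasi-periods and zeros as in the proof of Lemma~\ref{lem-det-VDM}.
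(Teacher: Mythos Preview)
Your approach is correct in spirit and leads to a valid proof, but it follows a genuinely different route from the paper's. Both proofs proceed by induction on $\ell$; the difference lies in how the single step $L\to L+1$ is handled. The paper works directly at the determinant level: it chooses the basis $\Theta_{L+1}^{(j)}(\lambda)=Y_{L+1}(\lambda)/\thsym(\lambda,y_j)$ with $y_{L+1}=r_1$, so that in the limit $x_{L+1}\to\frac\eta2+\epsilon r_1$ the last column of both numerator and denominator has a single nonzero entry (since $\Theta_{L+1}^{(j)}(\epsilon r_1)=0$ for $j\le L$ and $F^{(1)}_{\{r\}}(\frac\eta2+\epsilon r_1)=0$), and the two size-$(L+1)$ determinants factor into the size-$L$ ones times a common scalar. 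Your argument instead expands both ratios via the basis-free sum \eqref{ratio-det} and computes the limit termwise; this is more computational but perfectly sound, and it makes the cancellation mechanism between $F^{(1)}_{\{r\}}$ and the extra $\thsym$-factors from the $V$-ratio very explicit. The paper's route is shorter and avoids any tracking of the $V$-factors; yours has the merit of showing exactly \emph{why} the modification $F\mapsto F^{(1)}_{\{r\}}$ is forced.

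Two small slips to fix in your write-up (they do not affect the argument once corrected). First, the denominator of the sum representation of the ratio in the lemma is $V(x_1^{(1)},\ldots,x_{L+1}^{(1)})=V(x_1-\tfrac\eta2,\ldots,x_{L+1}-\tfrac\eta2)$, not $V(x_1,\ldots,x_{L+1})$: the lemma's denominator is $\det[\Theta(x_i-\tfrac\eta2)]$, not $\det[\Theta(x_i)]$ as in \eqref{ratio-det}. Second, with $h_{L+1}=0$ one has $x_{L+1}^{(1-h_{L+1})}=x_{L+1}^{(1)}=x_{L+1}-\tfrac\eta2\to\epsilon r_1$, not $x_{L+1}+\tfrac\eta2$. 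With these corrections the extra factor contributed by the $(L{+}1)$-th variable in the $V$-ratio is $\prod_{n=1}^{L}\thsym(r_1,x_n^{(1-h_n)})/\thsym(r_1,x_n^{(1)})$, which equals $1$ when $h_n=0$ and $\thsym(x_n+\tfrac\eta2,r_1)/\thsym(x_n-\tfrac\eta2,r_1)$ when $h_n=1$, precisely cancelling the prefactor in $F^{(1)}_{\{r\}}$ to leave $F$, as you claimed.
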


\begin{proof}
We proceed by induction on $\ell$. \eqref{ratio-ext-det-1} obviously holds for $\ell=0$. Let us suppose that it holds for a given $\ell\ge 0$, and let us consider a basis of $\Xi_{L+\ell+1}$ of the form
\begin{equation}
   \Theta_{L+\ell+1}^{(j)}(\lambda)=\frac{Y_{L+\ell+1}(\lambda)}{\thsym(\lambda,y_j)},\quad 1\le j\le L+\ell+1, \qquad
   \text{with}\quad
   Y_{L+\ell+1}(\lambda)=\prod_{k=1}^{L+\ell+1}\! \thsym(\lambda,y_k),
\end{equation}
in which we suppose that the parameters $\pm y_1,\ldots,\pm y_{L+\ell+1}$ are pairwise distinct modulo $\pi,\pi\omega$, with the particular choice $y_{L+\ell+1}=r_{\ell+1}$.
Then the functions
\begin{equation}
   \Theta_{L+\ell}^{(j)}(\lambda)=\frac{\Theta_{L+\ell+1}^{(j)}(\lambda)}{\thsym(\lambda,r_{\ell+1})},\quad 1\le j\le L+\ell,
\end{equation}
define a basis of $\Xi_{L+\ell}$, 
%
%
so that, with this choice of basis in the right hand side of \eqref{ratio-ext-det-1},
\begin{align}
&\frac{\det_{1\le i,j\le L+\ell}\left[ \Theta_{L+\ell}^{(j)}(x_i-\frac\eta 2)-F^{(\ell)}_{\{r\}}(x_i)\, \Theta_{L+\ell}^{(j)}(x_i+\frac\eta 2) \right]}{\det_{1\le i,j\le L+\ell}\left[\Theta_{L+\ell}^{(j)}( x_i-\frac\eta 2)\right]}
\nonumber\\
&\qquad
=\frac{\det_{1\le i,j\le  L+\ell}\left[ \frac{\Theta_{L+\ell+1}^{(j)}(x_i-\frac\eta 2)}{\thsym(x_i-\frac\eta 2,r_{\ell+1})}-F^{(\ell)}_{\{r\}}(x_i)\, \frac{\Theta_{L+\ell+1}^{(j)}(x_i+\frac\eta 2)}{\thsym(x_i+\frac\eta 2,r_{\ell+1})} \right]}{\det_{1\le i,j\le  L+\ell}\left[\frac{\Theta_{L+\ell+1}^{(j)}(x_i-\frac\eta 2)}{\thsym(x_i-\frac\eta 2,r_{\ell+1})}\right]}
\nonumber\\
&\qquad
=\frac{\det_{1\le i,j\le  L+\ell}\left[ \Theta_{L+\ell+1}^{(j)}(x_i-\frac\eta 2)-F^{(\ell)}_{\{r\}}(x_i)\frac{\thsym(x_i-\frac\eta 2,r_{\ell+1})}{\thsym(x_i+\frac\eta 2,r_{\ell+1})}\, \Theta_{L+\ell+1}^{(j)}(x_i+\frac\eta 2) \right]}{\det_{1\le i,j\le  L+\ell}\left[\Theta_{L+\ell+1}^{(j)}(x_i-\frac\eta 2)\right]}
\nonumber\\
&\qquad
=\lim_{x_{L+\ell+1}\to \epsilon r_{\ell+1}+\frac\eta 2}
\frac{\det_{1\le i,j\le L+\ell+1}\left[ \Theta_{L+\ell+1}^{(j)}(x_i-\frac\eta 2)-F^{(\ell+1)}_{\{r\}}(x_i)\, \Theta_{L+\ell+1}^{(j)}(x_i+\frac\eta 2) \right]}{\det_{1\le i,j\le L+\ell+1}\left[\Theta_{L+\ell+1}^{(j)}(x_i-\frac\eta 2)\right]}.
\label{limit+1}
\end{align}
In the last equality we have used that $\Theta_{L+\ell+1}^{(j)}(\epsilon r_{\ell+1})=0$ if $j\not=L+\ell+1$, and that $F^{(\ell+1)}_{\{r\}}(\epsilon r_{\ell+1}+\frac\eta 2)=0$ ($F^{(\ell)}_{\{r\}}$ being regular at $\frac\eta 2+\epsilon r_{\ell+1}$ due to \eqref{cond-r}), so that only the last coefficient of the last line of each determinant survives in the limit. This proves the statement for $\ell+1$.
\end{proof}

We now formulate a special case of Lemma~\ref{lem-ext-det-1} that will be useful for the computation of the scalar products.

\begin{lemma}
\label{lem-SP-lim}
Let $x_1,\ldots,x_L$  be generic parameters,
let $a\in\mathbb{C}$, and let $F$ be a function which is regular at the points $-a+j\eta$, $1\le j\le \ell$, for a given $\ell\in\mathbb{N}$. Then, for  any basis $\{\Theta _{L}^{(j)}\}_{1\leq j\leq L}$ of $\Xi_L$, any basis $\{\Theta _{L+\ell}^{(j)}\}_{1\leq j\leq L+\ell}$ of $\Xi_{L+\ell}$, we have
\begin{align}\label{ratio-ext-det-2}
&\frac{\det_{1\le i,j\le L}\left[ \Theta_L^{(j)}(x_i-\frac\eta 2)-F(x_i)\frac{\ths(x_i+a)}{\ths(x_i-a)}\, \Theta_L^{(j)}(x_i+\frac\eta 2) \right]}{\det_{1\le i,j\le L}\left[\Theta_L^{(j)}( x_i-\frac\eta 2)\right]}
\nonumber\\
&
=\lim_{x_{L+1}\to \eta -a}\ldots\lim_{x_{L+\ell}\to \ell\eta -a}
\frac{\det_{1\le i,j\le L+\ell}\left[ \Theta_{L+\ell}^{(j)}(x_i-\frac\eta 2)-F(x_i)\frac{\ths(x_i+a-\ell\eta)}{\ths(x_i-a+\ell\eta)}\, \Theta_{L+\ell}^{(j)}(x_i+\frac\eta 2) \right]}{\det_{1\le i,j\le L+\ell}\left[\Theta_{L+\ell}^{(j)}( x_i-\frac\eta 2)\right]}.
\end{align}
\end{lemma}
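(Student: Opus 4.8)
The plan is to deduce Lemma~\ref{lem-SP-lim} from Lemma~\ref{lem-ext-det-1} by a careful choice of the parameters $r_1,\ldots,r_\ell$ and the sign $\epsilon$, together with a rewriting of the extra factor that builds up in the iterated limit. Looking at Lemma~\ref{lem-ext-det-1}, each limit $x_{L+m}\to\frac\eta 2+\epsilon r_m$ multiplies $F$ by the factor $\thsym(\lambda-\frac\eta 2,r_m)/\thsym(\lambda+\frac\eta 2,r_m)=\ths(\lambda-\frac\eta 2-r_m)\ths(\lambda-\frac\eta 2+r_m)/[\ths(\lambda+\frac\eta 2-r_m)\ths(\lambda+\frac\eta 2+r_m)]$. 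So the point of the proof is to choose the $r_m$ so that, after the telescoping product over $m=1,\ldots,\ell$, almost everything cancels and one is left exactly with $\ths(x_i+a-\ell\eta)/\ths(x_i-a+\ell\eta)$ times an extra $\ths(x_i+a)/\ths(x_i-a)$ factor --- which is precisely the discrepancy between the left-hand sides of \eqref{ratio-ext-det-1} (with trivial $\ell=0$ normalisation) and \eqref{ratio-ext-det-2}.

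Concretely, first I would take $\epsilon=+1$ and set $r_m=a-\bigl(m-\frac12\bigr)\eta$ for $1\le m\le\ell$, so that $\frac\eta 2+r_m=\eta-(m-1)\eta \cdot$ --- wait, rather $\frac\eta 2+\epsilon r_m = a-(m-1)\eta$; but the statement wants limit points $m\eta-a$, which suggests instead $\epsilon=-1$ and $r_m = a+\bigl(m-\tfrac12\bigr)\eta$, giving $\frac\eta2+\epsilon r_m = \frac\eta2 - a - (m-\tfrac12)\eta = -a-(m-1)\eta$. Up to relabelling $a\mapsto -a$ (the statement's $F(x_i)\ths(x_i+a)/\ths(x_i-a)$ is what becomes $F(x_i)$-with-a-shifted-argument) this is the right family. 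So the first step is to pick this arithmetic progression of $r_m$'s, check that it satisfies the non-resonance hypothesis \eqref{cond-r} for generic $a$ and $\eta$ (using $\mathbb{Z}\eta\cap(\mathbb{Z}\pi+\mathbb{Z}\pi\omega)=\emptyset$), and check that $F$, assumed regular at $-a+j\eta$ for $1\le j\le\ell$, is indeed regular at the points $\frac\eta2+\epsilon r_m$ as required by Lemma~\ref{lem-ext-det-1}.

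Second, I would compute the telescoping product. With the progression chosen, $F^{(\ell)}_{\{r\}}(\lambda)=F(\lambda)\prod_{m=1}^\ell \ths(\lambda-\frac\eta2-r_m)\ths(\lambda-\frac\eta2+r_m)/\bigl[\ths(\lambda+\frac\eta2-r_m)\ths(\lambda+\frac\eta2+r_m)\bigr]$. Writing $r_m=a+(m-\tfrac12)\eta$, one has $\lambda-\frac\eta2-r_m=\lambda-a-m\eta$, $\lambda+\frac\eta2-r_m=\lambda-a-(m-1)\eta$, so $\prod_{m=1}^\ell \ths(\lambda-\frac\eta2-r_m)/\ths(\lambda+\frac\eta2-r_m)=\ths(\lambda-a-\ell\eta)/\ths(\lambda-a)$ telescopes; similarly $\lambda-\frac\eta2+r_m=\lambda+a+(m-1)\eta$ and $\lambda+\frac\eta2+r_m=\lambda+a+m\eta$, so $\prod_{m=1}^\ell\ths(\lambda-\frac\eta2+r_m)/\ths(\lambda+\frac\eta2+r_m)=\ths(\lambda+a)/\ths(\lambda+a+\ell\eta)$. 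Hence $F^{(\ell)}_{\{r\}}(\lambda)=F(\lambda)\,\dfrac{\ths(\lambda-a-\ell\eta)\,\ths(\lambda+a)}{\ths(\lambda-a)\,\ths(\lambda+a+\ell\eta)}$, which after the substitution $a\to -a$ matching the statement's conventions is exactly $F(\lambda)\,\ths(\lambda+a)/\ths(\lambda-a)\cdot \ths(\lambda-a+\ell\eta)/\ths(\lambda+a-\ell\eta)$ --- i.e. the factor multiplying $\Theta^{(j)}_{L+\ell}(x_i+\frac\eta2)$ in \eqref{ratio-ext-det-2} is $F(x_i)\,\ths(x_i+a-\ell\eta)/\ths(x_i-a+\ell\eta)$, up to the overall $\ths(x_i+a)/\ths(x_i-a)$ already present on the left. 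The third step is purely bookkeeping: substitute into \eqref{ratio-ext-det-1}, identify the left-hand side of that identity (with its own $\ell=0$ factor $F$ replaced by $F(\lambda)\ths(\lambda+a)/\ths(\lambda-a)$, which is allowed since Lemma~\ref{lem-ext-det-1} holds for \emph{any} regular $F$) with the left-hand side of \eqref{ratio-ext-det-2}, and match the limit points $\frac\eta2+\epsilon r_m$ with $m\eta-a$ after the sign bookkeeping.

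The main obstacle I anticipate is not conceptual but notational: getting the sign $\epsilon$, the progression $r_m$, and the $a\leftrightarrow -a$ relabelling all consistent so that the telescoped product comes out exactly as $\ths(x_i+a-\ell\eta)/\ths(x_i-a+\ell\eta)$ rather than some shifted or inverted variant, and making sure the regularity hypothesis on $F$ at $-a+j\eta$, $1\le j\le\ell$, is precisely what Lemma~\ref{lem-ext-det-1} needs (it needs $F$, in its already-modified form $F(\lambda)\ths(\lambda+a)/\ths(\lambda-a)$, to be regular at $\frac\eta2+\epsilon r_m$; the pole of $\ths(\lambda+a)/\ths(\lambda-a)$ sits at $\lambda=a$, i.e. at $x=a$, which must be checked not to coincide with any $\frac\eta2+\epsilon r_m$ --- again true for generic $a$). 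Once these alignments are pinned down, the proof is a one-line application of Lemma~\ref{lem-ext-det-1} plus the telescoping identity.
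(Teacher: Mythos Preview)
Your approach is correct and is essentially the same as the paper's: apply Lemma~\ref{lem-ext-det-1} with an arithmetic progression of $r_j$'s so that the product $\prod_j \thsym(\lambda-\frac\eta2,r_j)/\thsym(\lambda+\frac\eta2,r_j)$ telescopes. The paper cuts through your sign bookkeeping by taking directly $\epsilon=+1$ and $r_j=j\eta-\tfrac\eta2-a$, so that $\tfrac\eta2+\epsilon r_j=j\eta-a$ matches the limit points on the nose and the telescoping gives $\frac{\ths(\lambda+a)}{\ths(\lambda-a)}\prod_{j=1}^\ell\frac{\thsym(\lambda-\frac\eta2,r_j)}{\thsym(\lambda+\frac\eta2,r_j)}=\frac{\ths(\lambda+a-\ell\eta)}{\ths(\lambda-a+\ell\eta)}$ without any $a\mapsto -a$ relabelling; your detour through $\epsilon=-1$ and the sign swap is unnecessary but harmless.
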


\begin{proof}
It is enough to set $r_j=j\eta-\frac\eta 2-a$ in the hypothesis of Lemma~\ref{lem-ext-det-1}. These parameters satisfy \eqref{cond-r}, and
\begin{align}
  \frac{\ths(\lambda+a)}{\ths(\lambda-a)}\prod_{j=1}^\ell\frac{\thsym(\lambda-\frac\eta 2,r_j)}{\thsym(\lambda+\frac\eta 2,r_j)}
  &=\frac{\ths(\lambda+a)}{\ths(\lambda-a)}
    \prod_{j=1}^{\ell}\frac{\ths(\lambda -a+(j-1)\eta )\ths(\lambda +a-j\eta )}
                                    {\ths(\lambda-a+j\eta )\ths(\lambda +a-(j-1)\eta )}  \notag \\
  &=\frac{\ths(\lambda+a-\ell\eta)}{\ths(\lambda-a+\ell\eta)},
\end{align}
which leads to \eqref{ratio-ext-det-2}.
\end{proof}

These results can directly be applied to transform the original SoV representation \eqref{SP-2} of the scalar product of the separate states ${}_{\boldsymbol{\varepsilon}}\bra{ P}$ and $\ket{Q}_{\boldsymbol{\varepsilon}}$, with $P$ and $Q$ of the form \eqref{form-PQ} if $2M=N+\ell$, $\ell\ge 0$. With the notations \eqref{prod-a}, it follows from Lemma~\ref{lem-SP-lim} that
\begin{align}\label{SP-extend-1}
  &{}_{\boldsymbol{\varepsilon}}\moy{ P\, |\, Q}_{\boldsymbol{\varepsilon}}  
 =\prod_{n=1}^N(PQ)\big(\xi_n^{(0)}\big)\,
 \frac{\det_{1\leq i,j\leq N}\left[ \Theta _{N}^{(j)}(\xi _{i}^{(1)})\right] }{\det_{1\leq i,j\leq N}\left[ \Theta _{N}^{(j)}(\xi _{i})\right] }
 \nonumber\\
 &\hspace{4cm}\times
\frac{\det_{1\le i,j\le N}\left[  \Theta_{N}^{(j)}(\xi_i^{(1)})
 -\prod\limits_{n=1}^{6}\frac{\ths(\xi_i+a_n)}{\ths(\xi_i-a_n)}
 \frac{(PQ)(\xi_i^{(1)})}{(PQ)(\xi_i^{(0)})}\,
 \Theta_{N}^{(j)}(\xi_i^{(0)})
 \right]}{\det_{1\le i,j\le N}\left[\Theta_{N}^{(j)}( \xi_i^{(1)})\right]},
 \nonumber\\
 &=\prod_{n=1}^N(PQ)(\xi_n^{(0)})\,
\frac{V(\xi_1^{(1)},\ldots,\xi_N^{(1)})}{V(\xi_1,\ldots,\xi_N)}
 \nonumber\\
 &\times \lim_{\xi_{N+1}\to \eta -a_s}\ldots\lim_{\xi_{N+\ell}\to \ell\eta -a_s}
\frac{\det_{1\le i,j\le 2M}\left[  \Theta_{2M}^{(j)}(\xi_i^{(1)})
 -\prod\limits_{n=1}^{6}\!\frac{\ths(\xi_i+a'_n)}{\ths(\xi_i-a'_n)}
 \frac{(PQ)(\xi_i^{(1)})}{(PQ)(\xi_i^{(0)})}\,
 \Theta_{2M}^{(j)}(\xi_i^{(0)})
 \right]}{\det_{1\le i,j\le 2M}\left[\Theta_{2M}^{(j)}( \xi_i^{(1)})\right]},
\end{align}
for some given $a_s\in\{a_n\}_{1\le n\le 6}$, and in which we have set
\begin{equation}\label{def-a'}
  a'_n=\begin{cases} a_s-(2M-N)\eta &\text{if } n=s,\\ a_n &\text{otherwise}. \end{cases}
\end{equation}
Note that these parameters  satisfy the sum rule
\begin{equation}\label{const-a'-2M>N}
  \sum_{\ell=1}^6 a'_\ell =\eta.
\end{equation}

We can now transform the determinant in the numerator of the last line of \eqref{SP-extend-1} according to the process presented in the last section. However, so as to avoid dealing with complicated coefficients which are irrelevant for the computation of the normalised scalar product \eqref{SP-Slavnov-ratio}, let us directly consider the ratio of the scalar product ${}_{\boldsymbol{\varepsilon}}\moy{ P\, |\, Q}_{\boldsymbol{\varepsilon}} $ by the scalar product ${}_{\boldsymbol{\varepsilon}}\moy{ \widetilde P\, |\, Q}_{\boldsymbol{\varepsilon}} $, in which $\widetilde P$ is another function of the form \eqref{form-PQ} with roots $\tilde p_1,\ldots,\tilde p_M$:

\begin{align}
    \frac{{}_{\boldsymbol{\varepsilon}}\moy{ P\, |\, Q}_{\boldsymbol{\varepsilon}}}
          {{}_{\boldsymbol{\varepsilon}}\moy{ \widetilde P\, |\, Q}_{\boldsymbol{\varepsilon}}}
          &=\prod_{n=1}^N\frac{P(\xi_n^{(0)})}{\widetilde P(\xi_n^{(0)})}\,
          \frac{\det_{1\le i,j\le 2M}\left[  \frac{1}{\thsym(\xi_i-\frac\eta 2,\tilde w_j)} \right] }{\det_{1\le i,j\le 2M}\left[  \frac{1}{\thsym(\xi_i-\frac\eta 2,w_j)} \right] }
          \frac{\det_{1\le i,j\le 2M}\left[\mathsf{M}_{\{a'\}}(\xi_i,w_j)\right]}{\det_{1\le i,j\le 2M}\left[\mathsf{M}_{\{a'\}}(\xi_i,\tilde w_j)\right]}  ,          \label{ratio-det-2M>N}
\end{align}
in which we have denoted, similarly as in \eqref{def-W},
\begin{align}
   &(w_1,\ldots,w_{2M})=(p_1,\ldots,p_M,q_1,\ldots,q_M),\\
   &(\tilde w_1,\ldots,\tilde w_{2M})=(\tilde p_1,\ldots,\tilde p_M,q_1,\ldots,q_M), 
\end{align}
and in which we have set
\begin{equation}\label{fix-xi}
   \xi_{N+n}=n\eta-a_s,\qquad 1\le n\le 2M-N.
\end{equation}
The function $\mathsf{M}_{\{a'\}}(u,v)$ is defined as in \eqref{def-funct-M}, in which the parameters $a_1,\ldots,a_6$ are replaced by $a'_1,\ldots,a'_6$. 

The ratio of generalised Cauchy determinants in \eqref{ratio-det-2M>N} can easily be computed:
\begin{align}\label{lim-ratio-Cauchy}
   \frac{\det_{1\le i,j\le 2M}\left[  \frac{1}{\thsym(\xi_i-\frac\eta 2,\tilde w_j)} \right] }{\det_{1\le i,j\le 2M}\left[  \frac{1}{\thsym(\xi_i-\frac\eta 2,w_j)} \right] }
   &=\frac{V(\tilde w_1,\ldots,\tilde w_{2M})}{V(w_1,\ldots,w_{2M})}\prod_{i,j=1}^{2M}\frac{\thsym(\xi_i-\frac\eta 2,w_j)}{\thsym(\xi_i-\frac\eta 2,\tilde w_j)}
   \nonumber\\
   &=
   \frac{V(\tilde p_1,\ldots,\tilde p_M)}{V(p_1,\ldots,p_M)}\prod_{j=1}^M\frac{Q(\tilde p_j)}{Q(p_j)}\prod_{n=1}^{N}\frac{P(\xi_n-\frac\eta 2)}{\widetilde P(\xi_n-\frac\eta 2)}\prod_{n=1}^{2M-N}\frac{P(n\eta-\frac\eta 2-a_s)}{\widetilde P(n\eta-\frac\eta 2-a_s)}.
\end{align}

To transform the last ratio of determinants in \eqref{ratio-det-2M>N}, we can now define,  in terms of the extended set of parameters $\xi_1,\ldots,\xi_{2M}$, a basis $\{\bar \Theta_{2M}^{(j)}\}_{1\le j\le 2M}$ of $\Xi_{2M}$ as in \eqref{basis-Q-t} and a $2M\times 2M$ matrix $\mathcal{X}$ as in \eqref{mat-X-N}. Multiplying both matrices in the numerator and denominator of the last line of \eqref{ratio-det-2M>N}, we get
\begin{align}\label{ratio-det-M}
    \frac{\det_{1\le i,j\le 2M}\left[\mathsf{M}_{\{a'\}}(\xi_i,w_j)\right]}{\det_{1\le i,j\le 2M}\left[\mathsf{M}_{\{a'\}}(\xi_i,\tilde w_j)\right]}     =\frac{\det_{2M}\mathcal G}{\det_{2M}\widetilde{\mathcal G}}\, ,
\end{align}
in which $\mathcal{G}$ and $\widetilde{  \mathcal{G} }$ have the following block form:
\begin{equation}\label{mat-bloc-bis}
   \mathcal{G} =\begin{pmatrix}
   \mathcal{G}^{(1,1)} & \mathcal{G}^{(1,2)} \\
   \mathcal{G}^{(2,1)} & \mathcal{G}^{(2,2)}
   \end{pmatrix},
   \qquad
   \widetilde{  \mathcal{G} }=\begin{pmatrix}
   \widetilde{\mathcal{G}}^{(1,1)} & \widetilde{\mathcal{G}}^{(1,2)} \\
   \widetilde{\mathcal{G}}^{(2,1)} & \widetilde{\mathcal{G}}^{(2,2)}
   \end{pmatrix}\, .
\end{equation}
Here the blocks $ \mathcal{G}^{(i,j)}$ and $\widetilde{ \mathcal{G}}^{(i,j)}$, $1\le i,j \le 2$, are of size $M\times M$, their elements can be computed as in the last section, in terms of the residues of the functions
\begin{align}\label{funct-I-bis}
   \mathsf{F}_{i,j}(\lambda) &\equiv  \mathsf{F}_{Q,\{a'\},\{\xi\}}(\lambda; z_i,\tilde z_i, w_j) 
\end{align}
defined as in \eqref{funct-I}, but in terms of  the parameters $\{a_i'\}_{1\le i\le 6}$ \eqref{def-a'} instead of $\{a_i\}_{1\le i\le 6}$, and in terms of the extended set of inhomogeneity parameters $\xi_1,\ldots,\xi_{2M}$. Note that, thanks to \eqref{const-a'-2M>N}, the functions \eqref{funct-I-bis} still satisfy the properties \eqref{prop-funct-I} and \eqref{res-funct-I}.
Hence, the elements of the blocks $\mathcal{G}^{(i,j)}$, $1\le i,j \le 2$, are given by the expressions \eqref{mat-G11}-\eqref{mat-G22} in terms of  the parameters $\{a_i'\}_{1\le i\le 6}$ \eqref{def-a'} instead of $\{a_i\}_{1\le i\le 6}$, and with the functions $a$ and $d$ \eqref{a-d} replaced respectively by
\begin{align}
  &a_{\boldsymbol\xi}(\lambda)=\prod_{n=1}^{2M}\theta(\lambda-\xi_n+\eta/2)=a(\lambda)\prod_{n=1}^{2M-N}\ths(\lambda-n\eta+a_s+\eta/2), \label{a_xi}\\
  &d_{\boldsymbol\xi}(\lambda)=a_{\boldsymbol\xi}(\lambda-\eta)=d(\lambda)\prod_{n=1}^{2M-N}\theta(\lambda-n\eta+a_s-\eta/2).
  \label{d_xi}
\end{align}
The elements of the blocks $\widetilde{ \mathcal{G}}^{(i,j)}$, $1\le i,j \le 2$, are given by similar expressions, however in terms of the roots $\tilde p_1,\ldots,\tilde p_M$  instead of $p_1,\ldots,p_M$. Note that
\begin{equation}
  \widetilde{ \mathcal{G}}^{(i,2)}=\mathcal{G}^{(i,2)},\qquad 1\le i\le 2,
\end{equation}
since the corresponding columns do not depend on the roots of $P$ (or $\widetilde P$).

In particular, the elements of the block $\mathcal{G}^{(1,2)}=\widetilde{ \mathcal{G}}^{(1,2)}$ are, from \eqref{mat-G12}, 
\begin{align}\label{mat-G12-bis}
\left[\mathcal{G}^{(1,2)}\right]_{i,j} 
=\left[\widetilde{\mathcal{G}}^{(1,2)}\right]_{i,j} 
&=\delta_{i,j} \,\frac{Q_i(q_i)}{\ths(2q_i)}\sum_{\sigma=\pm}\frac{\sigma\, Q_i(q_i+\sigma\eta)\prod_{n=1}^6\ths(\sigma q_i+\frac \eta 2-a'_n)}{a_{\boldsymbol\xi}(\sigma q_i)d_{\boldsymbol\xi}(-\sigma q_i)},
\nonumber\\
&=\delta_{i,j}\,\frac{Q_i(q_i-\eta)\,Q_i(q_i)}{a_{\boldsymbol\xi}(-q_i)d_{\boldsymbol\xi}(q_i)}\frac{\prod_{n=1}^6\ths(q_i-\frac \eta 2+a'_n)}{\ths(2q_i)} 
\nonumber\\
&\hspace{2cm}\times\left[ \frac{a(-q_i)d(q_i)}{a(q_i)d(-q_i)}\frac{Q_i(q_i+\eta)}{Q_i(q_i-\eta)}\prod_{n=1}^6\frac{\ths(q_i+\frac \eta 2-a_n)}{\ths(q_i-\frac \eta 2+a_n)} 
-1\right],
\end{align}
in which we have used that 
\begin{equation}\label{lim-ratio-a}
    \prod_{n=1}^6\frac{\ths(\lambda+\frac \eta 2-a'_n)}{\ths(\lambda-\frac \eta 2+a'_n)}\prod_{n=1}^{2M-N}\frac{\thsym(\lambda-\frac\eta 2,n\eta-a_s)}{\thsym(\lambda+\frac\eta 2,n\eta-a_s)}
    = \prod_{n=1}^6\frac{\ths(\lambda+\frac \eta 2-a_n)}{\ths(\lambda-\frac \eta 2+a_n)}.
\end{equation}
Hence, if we now suppose that the roots $q_1,\ldots,q_M$ of $Q$ satisfy the Bethe equations \eqref{Bethe-2}, the blocks $\mathcal{G}^{(1,2)}$ and $\widetilde{\mathcal{G}}^{(1,2)}$ vanish, so that
\begin{align}\label{ratio-det-M-bis}
    \frac{\det_{1\le i,j\le 2M}\left[\mathsf{M}_{\{a'\}}(\xi_i,w_j)\right]}{\det_{1\le i,j\le 2M}\left[\mathsf{M}_{\{a'\}}(\xi_i,\tilde w_j)\right]}     =\frac{\det_{M}\mathcal G^{(1,1)}}{\det_{M}\widetilde{\mathcal G}^{(1,1)}}\, .
\end{align}

The matrix elements of $\mathcal{G}^{(1,1)}$ are, from \eqref{mat-G11},
\begin{align}\label{mat-G11-bis}
\left[\mathcal{G}^{(1,1)}\right]_{i,j} & 
=\frac{Q_i(p_j)}{\ths(2p_j)}\sum_{\sigma=\pm}\frac{\sigma\, Q_i(p_j+\sigma\eta)\prod_{n=1}^6\ths(\sigma p_j+\frac \eta 2-a'_n)}{a_{\boldsymbol\xi}(\sigma p_j)d_{\boldsymbol\xi}(-\sigma p_j)}
\nonumber\\
&=\frac{Q_i(p_j-\eta)\,Q_i(p_j)}{(-1)^N a(-p_j)d(p_j)}\frac{\prod_{n=1}^6\ths(p_j-\frac \eta 2+a'_n)}{\ths(2p_j)\prod_{n=1}^{2M-N}\thsym(p_j-\frac\eta 2,n\eta-a_s)} 
\nonumber\\
&\hspace{4cm}\times\left[ \frac{a(-p_j)d(p_j)}{a(p_j)d(-p_j)}\frac{Q_i(p_j+\eta)}{Q_i(p_j-\eta)}\prod_{n=1}^6\frac{\ths(p_j+\frac \eta 2-a_n)}{\ths(p_j-\frac \eta 2+a_n)} -1\right]
\nonumber\\
&=\frac{-Q(p_j)\prod_{n=1}^6\ths(a_n)}{ a(p_j)d(-p_j)a(-p_j)d(p_j)}
\prod_{n=1}^{2M-N}\frac{1}{\thsym(p_j,n\eta-a_s-\frac\eta 2)}
\nonumber\\
&\hspace{4cm}\times
\sum_{\sigma=\pm}
\frac{\sigma\mathbf{A}_{\boldsymbol\varepsilon}(\sigma p_j)\, Q( p_j-\sigma\eta)}{\ths(2 p_j+\sigma\eta)\thsym(p_j-\sigma\eta,q_i)\thsym(p_j,q_i)},
\end{align}
in which we have once again used \eqref{lim-ratio-a}. The elements of $\widetilde{\mathcal{G}}^{(1,1)}$ have similar expressions in terms of $\tilde p_j$ instead of $p_j$.

The remaining part of the computation is then completely similar as in the previous section, so that
\begin{align}\label{lim-ratio-M}
  \frac{\det_{1\le i,j\le 2M}\left[\mathsf{M}_{\{a'\}}(\xi_i,w_j)\right]}{\det_{1\le i,j\le 2M}\left[\mathsf{M}_{\{a'\}}(\xi_i,\tilde w_j)\right]}  
  &= \prod_{n=1}^N \frac{\widetilde P(\xi_n+\frac \eta 2)\widetilde P(\xi_n-\frac\eta 2)}{P(\xi_n+\frac \eta 2) P(\xi_n-\frac\eta 2)}
  \prod_{n=1}^{2M-N}\frac{\widetilde P(n\eta-a_s-\frac\eta 2)}{P(n\eta-a_s-\frac\eta 2)}
  \nonumber\\
  &\hspace{2cm}\times  
  \prod_{j=1}^M\frac{\thsym(2\tilde p_j,\eta)\, Q(p_j)}{\thsym(2p_j,\eta)\, Q(\tilde p_j)}\,
  \frac{\det_M\mathcal{S}_Q(p_j,q_i)}{\det_M\mathcal{S}_Q(\tilde p_j,q_i)},
\end{align}
and, from \eqref{ratio-det-2M>N},  \eqref{lim-ratio-Cauchy} and \eqref{lim-ratio-M}, we get the result.

\subsection{Proof of Theorem~\ref{th-det-Slavnov} in the case $2M<N$}
\label{sec-2M<N}

Finally, to conclude the proof of Theorem~\ref{th-det-Slavnov}, let us  consider the case $2M<N$.

For generic $p_1,\ldots,p_M,q_1,\ldots,q_M,\xi_1,\ldots,\xi_N$, we consider \eqref{SP-2}, in which we choose a basis of the form \eqref{Th-W} with $(w_1,\ldots,w_{2M})=(p_1,\ldots,p_M,q_1,\ldots,q_M)$, the other variables $w_{2M+1},\ldots,w_N$ being still to be specified, i.e.
\begin{equation}\label{def-W-bis}
   W(\lambda)=P(\lambda)\,Q(\lambda)\, W_{\! R}(\lambda), \qquad \text{with}\quad W_{\! R}(\lambda)=\!\prod_{n=1}^{N-2M}\!\!\thsym(\lambda,w_{2M+n}).
\end{equation}
Then
\begin{align}\label{SP-2M<N}
  {}_{\boldsymbol{\varepsilon}}\moy{ P\, |\, Q}_{\boldsymbol{\varepsilon}}  
 &=\prod_{n=1}^N\! \frac{(PQ)(\xi_n^{(0)})W(\xi_n^{(1)})}{W(\xi_n)}\,
 \frac{\det_{1\le i,j\le N}\left[ \frac{1}{\thsym(\xi_i^{(1)},w_j)}-\prod\limits_{n=1}^6\!\frac{\ths(a_n+\xi_i)}{\ths(a_n-\xi_i)} 
 \frac{W_{\! R}(\xi_i^{(0)})}{W_{\! R}(\xi_i^{(1)})} \frac{1}{\thsym(\xi_i^{(0)},w_j)} \right]}{\det_{1\le i,j\le N}\left[ \frac{1}{\thsym(\xi_i,w_j)}\right]}.
\end{align}
It is possible to reduce the above determinant to a determinant  of the form \eqref{def-mat-M}-\eqref{def-funct-M}
by conveniently setting, for some $s\in\{1,\ldots,6\}$, the additional variables $w_{2M+j}$ as 
\begin{equation}\label{ext-var-w}
   w_{2M+j}=a_s+j\eta-\frac\eta 2,\qquad 1\le j\le N-2M,
\end{equation}
so that
\begin{align}\label{reduc-prod-a-bis}
  \frac{\ths(\lambda+a_s)}{\ths(\lambda-a_s)} \,
  \frac{W_R(\lambda+\frac\eta 2)}{W_R(\lambda-\frac\eta 2)}
  &=\frac{\ths(\lambda+a_s+(N-2M)\eta)}{\ths(\lambda-a_s-(N-2M)\eta)}.
\end{align}
Defining
\begin{equation}\label{def-a'-bis}
  a'_n=\begin{cases} a_s+(N-2M)\eta &\text{if } n=s,\\ a_n &\text{otherwise}, \end{cases}
\end{equation}
we therefore obtain
\begin{align}\label{SP-2M<N-2}
  {}_{\boldsymbol{\varepsilon}}\moy{ P\, |\, Q}_{\boldsymbol{\varepsilon}}  
 &=\prod_{n=1}^N\frac{-(PQ)(\xi_n^{(0)})\,W(\xi_n^{(1)})}{W(\xi_n)\prod_{l=1}^6\ths(a'_l-\xi_n)}
 \frac{\det_{1\le i,j\le N}\left[ \mathsf{M}_{\{a'\}}(\xi_i,w_j)\right]}{\det_{1\le i,j\le N}\left[ \frac{1}{\thsym(\xi_i,w_j)}\right]}
 \nonumber\\
 &=\prod_{n=1}^N\frac{(PQ)(\xi_n^{(0)})\,W(\xi_n^{(1)})}{-\prod_{l=1}^6\ths(a'_l-\xi_n)}
 \frac{\det_{1\le i,j\le N}\left[ \mathsf{M}_{\{a'\}}(\xi_i,w_j)\right]}{V(w_1,\ldots,w_N)V(\xi_N,\ldots,\xi_1)}.
\end{align}
Note that the parameters $a'_n$, $1\le n\le 6$, defined by \eqref{def-a'-bis} also satisfy the sum rule \eqref{const-a'-2M>N}.

We can now transform the determinant of the matrix $\mathcal{M}\equiv(\mathsf{M}_{\{a'\}}(\xi_i,w_j))_{1\le i,j\le N}$ according to the previous approach, but we have to adapt slightly this approach since now $2M<N$. We define the following generalisations of the functions \eqref{funct-I}:
\begin{align}\label{funct-I-ter}
   \mathsf{F}_{i,j}(\lambda) &\equiv  \mathsf{F}_{Q,\{a'\},\{\xi\},\{r\}}(\lambda; z_i,\tilde z_i, w_j) \nonumber\\
   &= \frac{\ths'(0)\, Q(\lambda+\frac\eta 2)\, Q(\lambda-\frac\eta 2)\, R(\lambda)}{\thsym(\lambda,z_i)\thsym(\lambda,\tilde z_i)\prod_{n=1}^{N}\!\thsym(\lambda,\xi_n)}
   \sum_{\epsilon=\pm}\epsilon\frac{\prod_{n=1}^6\!\ths(\lambda+\epsilon a'_n)}{\thsym(\lambda+\epsilon\frac\eta 2,w_j)}
   \nonumber\\
   &= \frac{\ths'(0)\, Q(\lambda+\frac\eta 2)\, Q(\lambda-\frac\eta 2)}{\thsym(\lambda,z_i)\thsym(\lambda,\tilde z_i)\prod_{n=1}^{N}\!\thsym(\lambda,\xi_n)}
   \sum_{\epsilon=\pm}\epsilon\frac{ W_{\! R}(\lambda+\epsilon\frac\eta 2)\prod_{n=1}^6\!\ths(\lambda+\epsilon a_n)}{\thsym(\lambda+\epsilon\frac\eta 2,w_j)},
\end{align}
in which we have set
\begin{align}\label{def-R}
   R(\lambda) &=\!\prod_{n=1}^{N-2M}\!\!\thsym(\lambda,r_n),
   \quad\text{with}\quad
   r_n=w_{2M+n}-\frac\eta 2=a_s+(n-1)\eta,\quad 1\le n\le N-2M,
\end{align}
and
\begin{align}\label{def-z-bis}
  &(z_i,\tilde z_i)=\begin{cases} (q_i+\frac\eta 2,q_i-\frac\eta 2) &\text{if }\ i\le M,\\  (q_{i-M}+\frac\eta 2,\gamma) &\text{if }\ M<i \le 2M, \\
  (r_{i-2M},\gamma) &\text{if }\ 2M<i\le N, \end{cases}
\end{align}
with $\gamma$ being an arbitrary generic number.
In \eqref{funct-I-ter}, we have used that, with the choice \eqref{def-R} for the additional variables $r_1,\ldots,r_{N-2M}$,
\begin{equation}\label{prop-R-W}
   R(\lambda)\ths(\lambda\pm a'_s)=W_{\! R}(\lambda\pm\eta/2)\, \ths(\lambda\pm a_s).
\end{equation}
The functions \eqref{funct-I-ter} satisfy the properties \eqref{prop-funct-I} and \eqref{res-funct-I}.

Let now $\mathcal{X}$ be the matrix with elements
\begin{align} \label{mat-X-bis}
   \mathcal{X}_{i,k} 
   =\frac{\bar \Theta_{N}^{(i)}(\xi_k)}{\thsym(\xi_k,\gamma)\ths(2\xi_k)\prod_{n\not= k}\thsym(\xi_k,\xi_n)},
  \qquad 1\le i,k\le N,
\end{align}
in which $\{\bar \Theta_{N}^{(j)}\}_{1\le j\le N}$ is the basis of $\Xi_N$ defined by
\begin{equation}\label{basis-bar-bis}
  \bar \Theta_{N}^{(j)}(\lambda) 
  =\frac{Q(\lambda+\frac\eta 2)\, Q(\lambda-\frac\eta 2)\, R(\lambda)\thsym(\lambda,\gamma)}{\thsym(\lambda,z_j)\thsym(\lambda,\tilde z_j)},
  \qquad 1\le j \le N,
\end{equation}
and let us compute the product of the matrix $\mathcal{X}$ by the matrix $\mathcal{M}$. It takes the following block form, which generalises \eqref{mat-bloc}:
\begin{equation}\label{bloc-G-bis}
  \mathcal{X}\cdot\mathcal{M}=\left(\sum_{k=1}^{N} \Res(\mathsf{F}_{i,j};\xi_k)\right)_{\! 1\le i,j\le N}
  =\begin{pmatrix}
  \mathcal{G}^{(1,1)} & \mathcal{G}^{(1,2)} & \mathcal{G}^{(1,3)} \\
  \mathcal{G}^{(2,1)} & \mathcal{G}^{(2,2)} & \mathcal{G}^{(2,3)} \\
  \mathcal{G}^{(3,1)} & \mathcal{G}^{(3,2)} & \mathcal{G}^{(3,3)} 
  \end{pmatrix},
\end{equation}
in which $\mathcal{G}^{(1,1)}$, $\mathcal{G}^{(1,2)}$, $\mathcal{G}^{(2,1)}$ and $\mathcal{G}^{(2,2)}$ are of size $M\times M$, $\mathcal{G}^{(1,3)}$ and $\mathcal{G}^{(2,3)}$ are of size $M\times (N-2M)$, $\mathcal{G}^{(3,1)}$ and $\mathcal{G}^{(3,2)}$ are of size $(N-2M)\times M $, and $ \mathcal{G}^{(3,3)} $ is of size $(N-2M)\times (N-2M)$.

The elements of $\mathcal{G}^{(1,1)}$, $\mathcal{G}^{(1,2)}$, $\mathcal{G}^{(2,1)}$ and $\mathcal{G}^{(2,2)}$ can be expressed as in  \eqref{mat-G11}-\eqref{mat-G22}, with only slight modifications due to the presence of the function $R$ (or equivalently $W_{\! R}$) in the definition of $\mathsf{F}_{i,j}$ \eqref{funct-I-ter}:
\begin{align}\label{mat-G11-ter}
\left[\mathcal{G}^{(1,1)}\right]_{i,j} 
&=\frac{Q_i(p_j)\, W_{\! R}(p_j)}{\ths(2 p_j)}\sum_{\sigma=\pm}\frac{\sigma\, Q_i(p_j+\sigma\eta)\prod_{n=1}^6\ths(\sigma p_j+\frac \eta 2-a_n)}{(-1)^N a(\sigma p_j)d(-\sigma p_j)},
\nonumber\\
&=\frac{-Q(p_j)\, W_{\! R}(p_j)\prod_{n=1}^6\ths(a_n)}{ a(p_j)d(-p_j)a(-p_j)d(p_j)}
\sum_{\sigma=\pm}
\frac{\sigma\mathbf{A}_{\boldsymbol\varepsilon}(\sigma p_j)\, Q( p_j-\sigma\eta)}{\ths(2 p_j+\sigma\eta)\thsym(p_j-\sigma\eta,q_i)\thsym(p_j,q_i)},
\end{align}
\begin{align}\label{mat-G12-ter}
\left[\mathcal{G}^{(1,2)}\right]_{i,j} 
&=\delta_{i,j}\,\frac{Q_i(q_i)\, W_{\! R}(q_i)}{\ths(2 q_i)}\sum_{\sigma=\pm}\frac{\sigma\, Q_i(q_i+\sigma\eta)\prod_{n=1}^6\ths(\sigma q_i+\frac \eta 2-a_n)}{(-1)^N a(\sigma q_i)d(-\sigma q_i)},
\end{align}
\begin{align}\label{mat-G21-ter}
\left[\mathcal{G}^{(2,1)}\right]_{i,j} 
&=\frac{Q(p_j)\, W_{\! R}(p_j)}{\ths(2 p_j)}\sum_{\sigma=\pm}\frac{(-1)^N \sigma\, Q(p_j+\sigma\eta)\prod_{n=1}^6\ths(\sigma p_j+\frac \eta 2-a_n)}{\thsym(p_j+\sigma\frac\eta 2,q_i+\frac\eta 2)\thsym(p_j+\sigma\frac\eta 2,\gamma)\, a(\sigma p_j)d(-\sigma p_j)}
\nonumber\\
&\hspace{7cm}
-\mathsf{X}_\gamma(q_i+\eta /2)\,\mathsf{M}_{\{a'\}}(\gamma,p_j),
\end{align}
\begin{align}\label{mat-G22-ter}
\left[\mathcal{G}^{(2,2)}\right]_{i,j} 
&=\delta_{i,j} \frac{Q_i(q_i+\eta)\,Q_i(q_i)\, W_{\! R}(q_i)}{(-1)^N a( q_i)d(- q_i)}\frac{\ths(\eta)\prod_{n=1}^6\ths( q_i+\frac \eta 2-a_n)}{\thsym(q_i+\frac\eta 2,\gamma)}
\nonumber\\
&\hspace{7cm}
-\mathsf{X}_\gamma(q_i+\eta /2)\,\mathsf{M}_{\{a'\}}(\gamma,q_j),
\end{align}
in which we have used the shortcut notation
\begin{equation}\label{X-gamma-ter}
   \mathsf{X}_\gamma(\lambda)=\frac{Q(\gamma+\frac\eta 2)\, Q(\gamma-\frac\eta 2)\, R(\gamma)}{\thsym(\gamma,\lambda)\ths(2\gamma)\prod_{n=1}^{N}\thsym(\gamma,\xi_n)}.
\end{equation}
Similarly, the elements of the blocks $\mathcal{G}^{(3,1)} $ and $\mathcal{G}^{(3,2)} $ are respectively
\begin{align}\label{mat-G31-ter}
\left[\mathcal{G}^{(3,1)}\right]_{i,j} 
   &= \sum_{k=1}^N \Res(\mathsf{F}_{i+2M,j};\xi_k)
   =-\sum_{\sigma=\pm}\Res(\mathsf{F}_{i+2M,j};p_j+\sigma\eta/2)-\Res(\mathsf{F}_{i+2M,j}; \gamma)
\nonumber\\
   &=\frac{Q(p_j)\, W_{\! R}(p_j)}{\ths(2 p_j)}
   \sum_{\sigma=\pm}\frac{(-1)^N\sigma\, Q(p_j+\sigma\eta) \prod_{n=1}^6\ths(\sigma p_j+\frac \eta 2-a_n)}
   { a(\sigma p_j)d(-\sigma p_j)\thsym(p_j+\sigma\frac\eta 2,r_i)\thsym(p_j+\sigma\frac\eta 2,\gamma) }
\nonumber\\
&\hspace{7cm}
   -\mathsf{X}_\gamma(r_i)\,\mathsf{M}_{\{a'\}}(\gamma,p_j)  ,
\end{align}
\begin{align}\label{mat-32-ter}
\left[\mathcal{G}^{(3,2)}\right]_{i,j} 
   &= \sum_{k=1}^N \Res(\mathsf{F}_{i+2M,j+M};\xi_k)
      =-\Res(\mathsf{F}_{i+2M,j+M}; \gamma)
    =-\mathsf{X}_\gamma(r_i)\,\mathsf{M}_{\{a'\}}(\gamma,q_j).
\end{align}
The block $\mathcal{G}^{(1,3)} $ vanishes,
\begin{align}\label{mat-G13-ter}
\left[\mathcal{G}^{(1,3)}\right]_{i,j} &=\sum_{k=1}^N \Res(\mathsf{F}_{i,j+2M};\xi_k)
=0,
\end{align}
since the functions $\mathsf{F}_{i,j+2M}$,  $1\le i\le M$, $1\le j \le N-2M$,  have no other poles than at the points $\pm\xi_k$, $1\le k\le N$. 
Finally, for the blocks $\mathcal{G}^{(2,3)} $ and $\mathcal{G}^{(3,3)} $, we obtain respectively
\begin{align}\label{mat-G23-ter}
\left[\mathcal{G}^{(2,3)}\right]_{i,j} 
   &= \sum_{k=1}^N \Res(\mathsf{F}_{i+M,j+2M};\xi_k) =-\Res(\mathsf{F}_{i+M,j+2M}; \gamma)  \nonumber\\
     & =-\mathsf{X}_\gamma(q_i+\eta/2)\,\mathsf{M}_{\{a'\}}(\gamma, w_{2M+j}),
\end{align}
\begin{align}\label{mat-G33-ter}
\left[\mathcal{G}^{(3,3)}\right]_{i,j} 
   &= \sum_{k=1}^N \Res(\mathsf{F}_{i+2M,j+2M};\xi_k)
   \nonumber\\
    &=-(\delta_{i,j}+\delta_{i,j+1})\Res(\mathsf{F}_{i+2M,j+2M}; r_i)-\Res(\mathsf{F}_{i+2M,j+2M}; \gamma)
    \nonumber\\
   &=\frac{Q(r_i+\frac\eta 2)\,Q(r_i-\frac\eta 2)\,R_i(r_i)}{\prod_{n=1}^N\thsym(r_i,\xi_n)\,\ths(r_i,\gamma)\,\ths(2w_{2M+j})}
   \left[\delta_{i,j+1}\prod_{n=1}^6\ths(r_i-a'_n)-\delta_{i,j}\prod_{n=1}^6\ths(r_i+a'_n)\right]
   \nonumber\\
   &\hspace{7cm}
    -\mathsf{X}_\gamma(r_i)\,\mathsf{M}_{\{a'\}}(\gamma,w_{2M+j}),
\end{align}
in which we have defined $R_i(\lambda)=R(\lambda)/\thsym(\lambda,r_i)$.

Once again, when we particularise $\{q_1,\ldots,q_M\}$ to be a solution of the Bethe equations \eqref{Bethe-2}, the expression \eqref{mat-G12-ter} vanishes, so that $\mathcal{G}^{(1,2)}=0$, and therefore, taking also into account \eqref{mat-G13-ter},
\begin{equation}
  \det_N\left[ \mathcal X\cdot\mathcal{M} \right]=\det_M \big[ \mathcal{G}^{(1,1)} \big]\cdot \det_{M+N}\big[\bar{ \mathcal{G}}^{(2,2)} \big],
\end{equation}
in which
\begin{equation}\label{mat-Gtilde22}
   \bar{ \mathcal{G}}^{(2,2)}=\begin{pmatrix} \mathcal{G}^{(2,2)} & \mathcal{G}^{(2,3)} \\
                                             \mathcal{G}^{(3,2)} & \mathcal{G}^{(3,3)}  \end{pmatrix}
\end{equation}
is a $(M+N)\times(M+N)$ matrix which does not depend on $P$.
The remaining part of the computation, which consists in considering the ratio of ${}_{\boldsymbol{\varepsilon}}\moy{ P\, |\, Q}_{\boldsymbol{\varepsilon}} $ by ${}_{\boldsymbol{\varepsilon}}\moy{ \widetilde P\, |\, Q}_{\boldsymbol{\varepsilon}} $, in which $\widetilde P$ is another function of the form \eqref{form-PQ} with roots $\tilde p_1,\ldots,\tilde p_M$, is then similar as for the other cases.

\subsection{Proof of Theorem~\ref{th-ortho}}

Let now $\boldsymbol{\varepsilon'}=-\boldsymbol{\varepsilon}$.
In that case,
\begin{align}\label{prod-bis}
   \frac{\mathsc{a}_{\boldsymbol{\epsilon}}(\frac\eta 2+\xi_n)}{\mathsc{a}_{\boldsymbol{\epsilon'}}(\frac \eta 2-\xi_n)}
   &=\frac{\mathsc{a}_{\boldsymbol{\epsilon}}(\frac\eta 2+\xi_n)}{\mathsc{a}_{-\boldsymbol{\epsilon}}(\frac \eta 2-\xi_n)}=1,
\end{align}
for any choice of the  boundary parameters, so that the representation \eqref{SP-2} is independent of the boundary parameters:
\begin{align}\label{SP-op}
  {}_{\boldsymbol{\varepsilon}}\moy{ P\, |\, Q}_{\boldsymbol{-\varepsilon}}  
 &=\prod_{n=1}^N(PQ)(\xi_n^{(0)})\, \frac{\det_{1\le i,j\le N}\left[ 
 \Theta_{N}^{(j)}(\xi_i^{(1)})
 -
 \frac{P(\xi_i^{(1)})\,Q(\xi_i^{(1)})}{P(\xi_i^{(0)})\,Q(\xi_i^{(0)})}\,
 \Theta_{N}^{(j)}(\xi_i^{(0)})
 \right]}{\det_{1\le i,j\le N}\left[\Theta_{N}^{(j)}( \xi_i)\right]},
\end{align}
For $M'+M=N-1$, let us now choose a basis of $\Xi_N$ as in \eqref{Th-W}, with 
\begin{equation}\label{def-W-ter}
   W(\lambda)=P(\lambda)\,Q(\lambda)\, \thsym(\lambda,w_N),
\end{equation}
for any arbitrary $w_N$. We get
\begin{align}\label{SP-op-2}
  {}_{\boldsymbol{\varepsilon}}\moy{ P\, |\, Q}_{\boldsymbol{-\varepsilon}}  
 &=\prod_{n=1}^N\frac{(PQ)(\xi_n^{(0)})\, (PQ)(\xi_n^{(1)})}{(PQ)(\xi_n)}\,
 \frac{\det_{1\le i,j\le N}\left[ \frac{\thsym(\xi_i^{(1)},w_N)}{\thsym(\xi_i^{(1)},w_j)} - \frac{\thsym(\xi_i^{(0)},w_N)}{\thsym(\xi_i^{(0)},w_j)} \right]}{\det_{1\le i,j\le N}\left[ \frac{\thsym(\xi_i,w_N)}{\thsym(\xi_i,w_j)}\right]}.
\end{align}
The last column in the determinant of the numerator of \eqref{SP-op-2} being identically $0$, we obtain the result.

\section{Conclusion}

We have here solved the open XYZ spin 1/2 chain by the new Separation of Variables approach of \cite{MaiN18,MaiN19}, and computed in this framework the scalar products of separate states, a class of states which contains as a sub-class the eigenstates of the model. As usual for models solved by SoV, these scalar products of separate states have a simple determinant representation, see Proposition~\ref{prop-SP-SoV}. However, as usual, this determinant depends in an intricate way on the inhomogeneity parameters of the model (the latter label its rows, see \eqref{SP-2}), which is not very convenient for the  consideration of the homogeneous and thermodynamic limits.  Our aim was therefore to transform the representation \eqref{SP-2} into a more convenient one for physical applications such as the computation of correlation functions.

We have considered in particular the case in which the 6 boundary parameters parametrising the two boundary fields satisfy one constraint \eqref{const-TQ}, so that the transfer matrix spectrum and eigenstates can be partly described in terms of a usual $TQ$-equation \eqref{hom-TQ}, with a $Q$-function being an even elliptic polynomial of the form \eqref{Q-form}. Focusing on separate states constructed from even elliptic polynomials of the form \eqref{Q-form} under the constraint \eqref{const-TQ}, we have shown how to transform the representation \eqref{SP-2} into a new one \eqref{SP-Slavnov-ratio}, in the case in which one of the separate states is an eigenstate of the model. The representation \eqref{SP-Slavnov-ratio}, which constitutes the main result of our paper, is a direct elliptic generalisation of the determinant representations which have been previously obtained in the rational \cite{KitMNT17} and trigonometric \cite{KitMNT18} cases, i.e.  for the open XXX and XXZ chain respectively. These representations are generalisations of the famous Slavnov determinant representation \cite{Sla89}, and were directly used in the open XXX and XXZ cases for the computation of correlation functions \cite{Nic21,NicT22,NicT23}.

The approach we used here is however not the direct elliptic generalisation of the approach used in \cite{KitMNT17,KitMNT18}: the problem comes from the fact that the initial scalar product representation \eqref{SP-2} depends on more  boundary parameters in the present XYZ case than in the XXX or XXZ cases \cite{KitMNT17,KitMNT18}, which makes it difficult to use symmetry arguments as in  \cite{KitMNT17,KitMNT18}. We instead proposed a new method allowing to almost directly transform \eqref{SP-2} into \eqref{SP-Slavnov-ratio}: we multiplied the matrix appearing in the initial representation \eqref{SP-Ize} issued from \eqref{SP-2} by a well-chosen matrix $\mathcal{X}$, so that the resulting product can be computed via the residue theorem on certain odd elliptic functions \eqref{funct-I}, and simplifies when the Bethe equations are satisfied for one the two sets of parameters labelling the separate states.

The new scalar product formulas that we have obtained here constitute the first step towards the computation of the correlation functions. Building on this result, we have implemented the generalisation of the results of \cite{Nic21,NicT22,NicT23} to the XYZ open spin chain in \cite{NicT25}.

\appendix

\section{The trigonometric limit}

We consider here the trigonometric limit $\Im\omega\to +\infty$, in which, if $\lambda$ remains finite,
\begin{align}
   &\lim_{\Im\omega \rightarrow +\infty }\frac{e^{-i\pi \omega/4}}{2}\,\theta_{1}(\lambda |\omega )=\sin \lambda ,
   \qquad
   \lim_{\Im\omega \rightarrow +\infty }\frac{e^{-i\pi \omega/4 }}{2}\,\theta _{2}(\lambda |\omega )=\cos\lambda ,
   \\
   &\lim_{\Im\omega \rightarrow +\infty }\theta _{3}(\lambda |\omega )=\lim_{\Im\omega \rightarrow +\infty }\theta _{4}(\lambda |\omega )=1.
\end{align}

Let us define new boundary parameters $\varphi_\pm,\psi_\pm$ and $\tau_\pm$ by
\begin{equation}\label{bparam-trigo}
   \alpha_1^\pm=-i\varphi_\mp,\qquad \alpha_2^\pm=i\psi_\mp -\epsilon\frac\pi 2,\qquad \alpha_3^\pm=i\tau_\mp+\epsilon\frac\pi 2+\frac{\pi\omega}2,
\end{equation}
for some arbitrary sign $\epsilon$, and we suppose that these new boundary parameters $\varphi_\pm,\psi_\pm,\tau_\pm$ remain finite in the limit $\Im\omega\to +\infty$.
Then, the boundary parameters $c^{x,y,z}_\pm$ \eqref{coeff-K} behave in this limit in terms of \eqref{bparam-trigo} as
\begin{align}
   &\lim_{\Im\omega \rightarrow +\infty } 2e^{i\pi\omega/4} c^x_\mp 
 =\frac{-i\cosh(\tau^\pm)}{\sinh(\varphi^\pm)\,\cosh(\psi^\pm)},\label{cx-beta-lim}
   \\
   &\lim_{\Im\omega \rightarrow +\infty } 2e^{i\pi\omega/4} c^y_\mp
      =  \frac{-i\sinh(\tau^\pm)}{\sinh(\varphi^\pm)\,\cosh(\psi^\pm)},
 \\
    &\lim_{\Im\omega \rightarrow +\infty }  c^z_\mp
    =   -i \frac{\cosh(\varphi^\pm)\, \sinh(\psi^\pm)}{\sinh(\varphi^\pm)\,\cosh(\psi^\pm)}   .\label{cz-beta-lim}
\end{align}
Hence, in the limit $\Im\omega\to +\infty$, with $\eta,\varphi_\pm,\psi_\pm,\tau_\pm$ remaining finite, 
the Hamiltonian \eqref{Ham} degenerates into the Hamiltonian of the open XXZ spin chain:
\begin{equation}\label{Ham-XXZ}
    H_\text{XXZ} =\sum_{n=1}^{N-1} \Big[ \sigma _n^{x} \sigma _{n+1}^{x}
    +\sigma_n^{y}\sigma _{n+1}^{y}
    +\Delta \, \sigma _n^{z} \sigma _{n+1}^{z}\Big]
    +\sum_{a\in\{x,y,z\}}\Big[\tilde h_-^a\,\sigma_1^a+\tilde h_+^a\sigma_N^a\Big],
\end{equation}
in which
\begin{align}
  &\Delta=\cos\eta=\cosh\tilde\eta, \qquad \text{with}\quad \eta=i\tilde \eta, \\
  &\tilde h_\pm^x 
                  =\sinh\tilde \eta\, \frac{\cosh\tau_\pm}{\sinh\varphi_\pm\,\cosh\psi_\pm},\\
  &\tilde h_\pm^y 
                  =i\sinh\tilde \eta\, \frac{\sinh\tau_\pm}{\sinh\varphi_\pm\,\cosh\psi_\pm},\\
  &\tilde h_\pm^z 
                  =\sinh\tilde \eta\, \coth\varphi_\pm\,\tanh\psi_\pm.
\end{align}
Moreover, still in this limit, if in addition $\lambda=iu$ remains finite,
the R-matrix \eqref{R-mat} degenerates, up to a global normalisation, into the 6-vertex R-matrix,
\begin{equation}
R_\text{XXZ}(u )=%
\begin{pmatrix}
\sinh (u +\tilde\eta ) & 0 & 0 & 0 \\ 
0 & \sinh u & \sinh \tilde \eta & 0 \\ 
0 & \sinh \tilde \eta & \sinh u & 0 \\ 
0 & 0 & 0 & \sinh ( u + \tilde\eta )%
\end{pmatrix},  \label{R-6V}
\end{equation}
and the boundary K-matrices \eqref{Kpm} become
\begin{align}
   K_\pm(\lambda\mp\eta/2) & \underset{\Im\omega\to +\infty}{\longrightarrow} K_\text{XXZ}(-i\lambda; \varphi_\mp, \psi_\mp, \tau_\mp),
\end{align}
in which
\begin{align}\label{K-XXZ}
   K_\text{XXZ}(u; \varphi, \psi, \tau)
   &=  \cosh u\,\mathbb{I} + \frac{\cosh\varphi\, \sinh\psi}{\sinh\varphi\,\cosh\psi}  \,\sinh u\, \sigma^z
   + \frac{\sinh(2u)}{2\sinh\varphi\, \cosh\psi}\, \left[ \cosh\tau\, \sigma^x + i \sinh\tau\, \sigma^y \right]
   \nonumber\\
   &=\begin{pmatrix}
  \cosh u +\sinh u \frac{\cosh\varphi\sinh\psi}{\sinh\varphi\cosh\psi} & e^\tau\frac{\sinh 2u}{2\sinh\varphi\cosh\psi} \\
  e^{-\tau}\frac{\sinh 2u}{2\sinh\varphi\cosh\psi} & \cosh u -\sinh u \frac{\cosh\varphi\sinh\psi}{\sinh\varphi\cosh\psi}
  \end{pmatrix}
\end{align}
is the K-matrix of the open XXZ spin chain.
Finally, the coefficient \eqref{def-a_eps} becomes
\begin{align}\label{a_eps-trig}
   \mathsc{a}_{\boldsymbol{\varepsilon}}(\lambda)
  & \underset{\Im\omega\to +\infty}{\longrightarrow}
e^{-(\epsilon_3^++\epsilon_3^-)(u-\frac{\tilde\eta} 2)}
  \prod_{\sigma=\pm}\frac{\sinh(u-\frac{\tilde\eta} 2+\epsilon_1^\sigma\varphi_\sigma)}{\sinh(\epsilon_1^\sigma\varphi_\sigma)}
  \frac{\cosh(u-\frac{\tilde\eta} 2-\epsilon_2^\sigma\psi_\sigma)}{\cosh(-\epsilon_2^\sigma\psi_\sigma)}   .
\end{align}

For the particular choice $\epsilon_3^+=-\epsilon_3^-$, 
$\epsilon_1^\pm=\epsilon_{\varphi^\mp}$, $\epsilon_2^\pm=\pm\epsilon_{\psi^\mp}$, 
we recover the convention of our previous XXZ papers \cite{KitMNT18,NicT23}\footnote{This corresponds more precisely to the convention of \cite{NicT23}. In \cite{KitMNT18}, the notation $\alpha_\pm,\beta_\pm$ was used instead of $\varphi_\mp,\psi_\mp$ respectively.}:  the constraint \eqref{const-TQ} becomes 
\begin{equation}\label{const-TQ-trig}
   \epsilon_3^+(\tau_+-\tau_-)+\sum_{\sigma=\pm}\left[ \epsilon_{\varphi_\sigma}\varphi_\sigma+\sigma\epsilon_{\psi_\sigma}(\psi_\sigma+i\epsilon\frac\pi 2)\right]+(N-2M-1)\tilde\eta=0,
\end{equation}
with $\epsilon_{\varphi_+}\epsilon_{\varphi_-}\epsilon_{\psi_+}\epsilon_{\psi_-}=1$; under this constraint, we recover the XXZ analog of Proposition~\ref{prop-TQ}  for trigonometric polynomials $Q$ of the form
\begin{equation}
  \label{Q-form-trig}
   Q(u)=\prod_{n=1}^M\sinh(u-\tilde q_n)\sinh(u+\tilde q_n), \qquad \mathbf{\tilde q}=(\tilde q_1,\ldots,\tilde q_M)\in\mathbb{C}^M.
\end{equation}

It is now easy to see that Theorem~\ref{th-det-Slavnov} reduces to Theorem~5.2 of \cite{KitMNT18} in the trigonometric limit\footnote{Up to the normalisation that we have not computed here, since, as already mentioned, it has a complicated expression and is not relevant for the computation of correlation functions.}.

\bibliographystyle{SciPost_bibstyle}

\bibliography{/Users/vterras/Documents/Dropbox/Bib_files/biblio.bib}
\end{document}